\declaretheorem[name=Theorem, parent=section]{theorem}
\declaretheorem[name=Corollary, sibling=theorem]{corollary}
\declaretheorem[name=Proposition, sibling=theorem]{prop}
\declaretheorem[name=Lemma, sibling=theorem]{lemma}
\declaretheorem[name=Claim, sibling=theorem]{claim}
\newtheorem*{remark}{Remark}
\theoremstyle{definition}
\declaretheorem[name=Definition, sibling=theorem]{definition}
\theoremstyle{remark}
\declaretheorem[name=Observation, sibling=theorem]{observation}
\newcommand\E{\mathbb{E}}
\newcommand\R{\mathbb{R}}
\newcommand\N{\mathbb{N}}
\newcommand\F{\mathbb{F}}
\newcommand\card[1]{\left| {#1} \right|}
\newcommand\set[1]{{\left\{ #1 \right\}}}
\newcommand\Prob[2]{{\Pr_{#1}\left[ {#2} \right]}}
\newcommand\ceil[1]{\left\lceil{#1}\right\rceil}
\newcommand\floor[1]{\left\lfloor{#1}\right\rfloor}
\newcommand\ip[2]{\left\langle {#1}, {#2} \right\rangle}
\newcommand\cube[1]{ { \set{0,1}^{#1} } }
\newcommand\eps{\epsilon}
\newcommand\poly{\text{poly}}
\newcommand\ext{{\sf Ext}_d}
\newcommand*\xor{\mathbin{\oplus}}
\newcommand*\bxor{\mathbin{\bigoplus}}
 \newcommand{\Int}{\textsf{Int}}
\newcommand\parentheses[1]{{\left({#1}\right)}}
\newcommand\fc[2]{{\widehat{#1}\parentheses{#2}}}
\newcommand\ind[1]{{\mathds 1}_{#1}}
\newcommand{\cD}{{\cal D}}
\newcommand{\cF}{{\cal F}}
\newcommand{\cO}{{\cal O}}
\newcommand{\cP}{{\cal P}}
\newcommand{\cS}{{\cal S}}
\newcommand{\cT}{{\cal T}}
\newcommand{\cW}{{\cal W}}
\newcommand{\oracle}[1]{\mathcal{O}_{#1}}
\newcommand{\blr}[1]{\ensuremath{\text{\sc XorTest}_{#1}}}
\newcommand{\dist}{{\text{dist}}}
\newcommand{\Dist}{{\text{Dist}}}
\newcommand{\short}[2]{#2}
\title{Property Testing with Online Adversaries}
\author{{Omri Ben-Eliezer \thanks{Massachusetts Institute of Technology, Cambridge, MA, USA. Email: \texttt{omrib@mit.edu}}} 
\and { Esty Kelman \thanks{Boston University, Boston, MA, USA, and Massachusetts Institute of Technology, Cambridge, MA, USA. Email: \texttt{ekelman@mit.edu}, supported by the Computer Science Department, Boston University.} } 
\and {Uri Meir \thanks{Tel-Aviv University, Tel-Aviv, Israel. Email: \texttt{urimeir.cs@gmail.com}}} \and {Sofya Raskhodnikova\thanks{Boston University, Boston, MA, USA. Email: \texttt{sofya@bu.edu}}}}
\date{\today}
\begin{document}

\maketitle

\begin{abstract} 
  The online manipulation-resilient testing model, proposed by Kalemaj, Raskhodnikova and Varma (ITCS 2022 and Theory of Computing 2023), studies property testing in situations where access to the input degrades continuously and adversarially.
    Specifically, after each query made by the tester is answered, the adversary can intervene and either erase or corrupt $t$ data points. In this work, we investigate a more nuanced version of the online model in order to overcome old and new impossibility results for the original model. We start by presenting an optimal tester for linearity and a lower bound for low-degree testing of Boolean functions in the original model. We overcome the lower bound by allowing batch queries, where the tester gets a group of queries answered between manipulations of the data. 
    Our batch size is small enough so that function values for a single batch on their own give no
    information about whether the function is of low degree.
    Finally, to overcome the impossibility results of Kalemaj et al.\ for sortedness and the Lipschitz property of sequences, we extend the model to include $t<1$, i.e., adversaries that make less than one erasure per query. 
    For sortedness, we characterize the rate of erasures for which online testing  can be performed, exhibiting
    a sharp transition from optimal query complexity to impossibility of testability (with any number of queries).
    Our online tester works for a general class of local properties of sequences. 
    One feature of our results is that we get new (and in some cases, simpler) optimal algorithms for several properties in the standard property testing model.

\end{abstract}

\thispagestyle{empty} 
\newpage
\tableofcontents
\thispagestyle{empty} 
\newpage
\setcounter{page}{1}

\section{Introduction}

The online manipulation-resilient testing model, proposed by Kalemaj, Raskhodnikova and Varma \cite{KalemajRV22}, studies property testing in contexts where access to the input  is controlled by an adversary and degrades over time. Their motivation includes situations where access to data is restricted because of privacy or is misrepresented by someone trying to cover fraud while having to release some data in response to subpoenas.
Another motivation is testing properties of a massive ever-changing object, where 
probing the object affects it. For example, a navigation app might adjust its estimate of which routes are congested in response to queries. Such situations are plentiful in data analysis of modern social and transportation networks. 
 
Modeling access to data as degrading adversarially allows the algorithm designer to ensure that algorithms work in situations where the degradation to access is too complicated to model accurately or where it is desirable to avoid relying on distributional assumptions in the algorithm analyses. 
The general goal is to understand what properties of the input can be gleaned without any distributional assumptions, and even in extremely adversarial regimes, where access to the data or the data itself is manipulated by an adversary in response to the algorithm's actions during its execution. From the theoretical point of view, this investigation sheds light on the structure of witnesses for different properties.

In the online manipulation-resilient testing model of~\cite{KalemajRV22}, after each query to the input object is answered, the adversary can manipulate (i.e., erase or corrupt) a fixed number of input values. The input is represented by a function $f$ on an arbitrary finite domain. The algorithm accesses it by specifying a query point $x$ in the domain and receiving the answer to the query from the oracle that represents the current state of the manipulated input. In addition to allowing us to design fast testers that overcome old and new lower bounds, studying batch queries is motivated by a connection to Maker-Breaker games, discussed later (in \Cref{sec:intro-maker-breaker-games}).

In this work, we investigate a more nuanced version of the online manipulation-resilient testing model. We allow the adversary to either make changes at a fixed rate (as in the model of~\cite{KalemajRV22}) or accumulate the number of allowed manipulations in order to utilize them at any subsequent step. We also study arbitrary rates of erasures in addition to integer rates investigated by~\cite{KalemajRV22}, e.g., allowing the adversary to manipulate one point after every other query. Finally, we also study batch queries, where the tester gets a group of queries answered between manipulations of the data.

We give online testers for several well studied properties. 
For Boolean functions on the domain $\{0,1\}^n$,
we investigate linearity and, more generally, the class of functions of low degree,
that is, of degree at most $d$ over $\F_2$
for specified $d$. (Note that low-degree testing is equivalent to testing Reed-Muller codes.) For linearity, our online tester has optimal query complexity. For the degree-$d$ property, we present a lower bound that nearly matches the upper bound in the concurrent work of Minzer and Zheng~\cite{MinZ} and then show how to overcome the lower bound by using batch queries. 
Finally, we consider local properties of functions $f:[n]\to\R$. 
(We use $[n]$ to represent $\set{1,2,\dots,n}$). 
Kalemaj et al.\ proved that two important properties in that class---sortedness and the Lipschitz property---are not testable with any number of queries when the adversary erases one point per query. To overcome this impossibility result, we consider adversaries that make less than one erasure per query. For sortedness, we characterize the rate of erasures for which online testing can be performed, exhibiting
    a sharp transition from optimal query complexity to impossibility of testability. Our online tester works for general local properties. Moreover, we show how to obtain optimal erasure rate with batches of size 2.

Our online testers avoid querying manipulated data by heavily relying on randomness: the marginal distribution of each query is nearly uniform on a large subset of the domain. 
This ensures that each query is unlikely to be a point previously manipulated by the adversary (for any adversarial strategy). 
Consequently, we can separately analyze the probabilities of two important events: selecting a set of queries that exhibit a violation of the property (i.e., a \emph{witness}) and querying data modified by the adversary. 
When selecting a witness is significantly more likely than encountering a modification, the tester is likely to find a witness that has not been tampered with. This approach allows us to design online testers that are resilient to both erasures and corruptions.

\subsection{Our Results and Techniques}\label{sec:results-low-degree} 
We study two types of functions: Boolean functions $f: \{0,1\}^n\to \{0,1\}$ and real-valued functions $f:[n]\to \R$, i.e., sequences. The testing algorithm gets oracle access to the input function via one of the specified online adversarial oracles. Initially, the oracle is identical to the input function. The oracles (formally defined in Section~\ref{sec:model-definitions}) vary in how they are allowed to modify the input. The main distinctions are erasures  vs.\ arbitrary corruptions (of input values) and when modifications can occur. In the model of~\cite{KalemajRV22}, the adversary gets a parameter $t\in\N$ and is allowed to modify $t$ function values after answering each query. We call such an adversary \emph{$t$-online fixed-rate}. We extend the definition to $t\in \R^+$ and also consider \emph{$t$-online budget-managing} adversaries that get allocated $t$ modifications after answering each query, but can use their allocation at any subsequent time step in the computation. (See \Cref{def:fixed-rate_budget-managing}). Budget-managing adversaries are at least as powerful as fixed-rate adversaries. But, for some of the properties we study, we are able to match hardness results for (weaker) fixed-rate adversaries with algorithms that work even in the presence of (stronger) budget-managing adversaries.

In addition to the corruption rate $t$, our algorithms get the proximity parameter $\eps\in (0,1),$ as in the standard property testing model, and have to distinguish functions that have the specified property from functions that differ in at least an $\eps$ fraction of the domain from any function with the property. See \Cref{sec:model-definitions} and \Cref{def:online_tester} for formal definitions.

We start our investigation with Boolean functions on the Boolean cube. The properties we study are linearity and, more generally, of being a polynomial of degree at most $d$ for a given $d\in\N.$ 

\subsubsection{Linearity Testing}\label{sec:intro-linearity}
Introduced in the pioneering work of \cite{BLR93}, linearity testing has been extensively investigated; see,  e.g., \cite{BellareGLR93, BellareS94, FeigeGLSS96, BellareCHKS96, BellareGS98, Trevisan98, SudanT98,  SamorodnitskyT00, HastadW03,Ben-SassonSVW03, Samorodnitsky07, SamorodnitskyT09, ShpilkaW06, KaufmanLX10, KalemajRV22} and the survey in~\cite{RasR16}. 
 A function $f:\cube{n}\to\{0,1\}$ is called \emph{linear} if  $f(x_1) + f(x_2) = f(x_1\xor x_2)$ 
 for all $x_1,x_2\in\cube{n},$ where addition is mod 2, and $\xor$ denotes bitwise XOR.
We present an optimal tester for linearity that is resilient to online erasures, as well as online corruptions, even when the adversary is budget-managing.
  
\begin{restatable}[Optimal online linearity tester]{theorem}{OptimalOnlineLinearityTesterThm}
    \label{thm:linearity_tester}
   There exists a constant $c>0$ such that for all $n\in\N, \eps \in (0, 1/2]$, and $t$ satisfying $t \log^2 t \leq c \cdot\eps^{2.5} 2^{n/2}$, 
   there exists an $\eps$-tester for linearity of functions $f:\{0,1\}^n\to \{0,1\}$ that works in the presence of a $t$-online erasure (or corruption) budget-managing adversary and makes  $O\left(\max\set{1/\eps, \log t}\right)$ queries. In the case of erasure adversary, the tester has 1-sided error.
\end{restatable}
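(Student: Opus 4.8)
The plan is to build the tester around the standard BLR test — sample $x_1, x_2$ uniformly and check $f(x_1) + f(x_2) = f(x_1 \oplus x_2)$ — but to repeat it enough times and, crucially, to arrange the queries so that every individual query point is (marginally) close to uniform on a large subset of the cube. The point of near-uniformity is that at any moment the adversary has manipulated at most (roughly) $qt$ points, where $q = O(\max\{1/\eps, \log t\})$ is the total number of queries; since each query lands in a manipulated point with probability at most about $qt/2^n$, a union bound over all $q$ queries shows that with constant probability \emph{none} of our queries ever touches a manipulated value. Conditioned on that event, the tester sees exactly the values of the true function $f$, and a standard BLR analysis applies: if $f$ is linear every triple passes, while if $f$ is $\eps$-far from linear, each BLR triple catches a violation with probability $\Omega(\eps)$ (or, in the regime where $\log t$ dominates, we need to amplify to failure probability $\ll 1/t$-ish), so $O(\max\{1/\eps,\log t\})$ independent triples suffice to reject with probability $\geq 2/3$. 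Because in the erasure model an erased value is recognizable (it comes back as $\bot$), we never mistakenly accept a non-linear function on the basis of fabricated data, giving $1$-sided error.

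The first real issue is that the three points of a single BLR triple, $x_1$, $x_2$, and $x_1 \oplus x_2$, are \emph{not} independent, and more importantly the adversary acts \emph{between} the answers within a triple: after we learn $f(x_1)$, the adversary can move $t$ points, possibly targeting the (now partially determined) distribution of $x_1 \oplus x_2$. So I would not query $x_1$ then $x_2$ then $x_1\oplus x_2$ naively. Instead I would query the two "free" points $x_1, x_2$ of \emph{all} $q$ triples first, in a batch-like interleaving, and only afterwards query the $q$ "sum" points $x_1^{(i)} \oplus x_2^{(i)}$; even then, the sum point of triple $i$ is determined once $x_1^{(i)}, x_2^{(i)}$ are revealed, so the adversary gets to see it coming. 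The fix is the near-uniformity principle stated in the introduction: I will show that, over the randomness of the as-yet-unqueried $x_j^{(i)}$'s, the next query to be issued has marginal distribution within, say, total variation $o(1)$ of uniform on a $(1-o(1))$-fraction of the cube, \emph{no matter what the adversary has done so far}. Concretely, even after conditioning on $x_1^{(i)}$, the point $x_1^{(i)} \oplus x_2^{(i)}$ is uniform over $x_2^{(i)}$; the subtlety is only that once both are revealed we have "used up" the randomness, so the ordering must be chosen so that each query is issued while its distribution is still (conditionally) spread out. Handling this bookkeeping — quantifying exactly how much conditioning has happened and keeping the collision probability bounded — is the technical heart, and it is what forces the quantitative constraint $t\log^2 t \le c\,\eps^{2.5} 2^{n/2}$ (the $2^{n/2}$ being the familiar birthday-type threshold for collisions among $\sim q t$ adversarial points and $\sim q$ query points).

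The second issue is the budget-managing adversary: it may hoard all its allocation and dump $\approx qt$ erasures right before our last, most "predictable" query. This is exactly why the collision bound must be against the \emph{cumulative} budget $qt$ rather than the per-step budget $t$, and why we can afford it only in the stated parameter range. I would prove the collision bound in this worst-case-cumulative form from the start, so that fixed-rate and budget-managing adversaries are handled uniformly. A related point: we should make the number of repetitions depend on $t$ (hence the $\log t$ term) so that the BLR soundness error after amplification is small enough to survive the union bound together with the $O(qt/2^n)$ collision probability; balancing these two error terms is what pins down the $\log t$ in the query complexity and, after plugging $q = O(\log t)$ back into the collision bound, the $\log^2 t$ on the left-hand side of the hypothesis.

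Finally, for the corruption (rather than erasure) case, the tester can no longer tell a fabricated value from a real one, so the argument genuinely relies on the "no query ever hits a manipulated point" event rather than on recognizing $\bot$'s; this only costs us the $1$-sided error guarantee (we may now have small $2$-sided error), and everything else — the query ordering, the near-uniformity principle, the collision union bound, and the underlying BLR soundness — goes through verbatim. I expect the main obstacle to be making the "each query is nearly uniform conditioned on the adversary's transcript" claim fully rigorous while keeping the loss in the collision probability small enough to not degrade the $2^{n/2}$ threshold or the query complexity; everything downstream is a routine combination of that claim with a union bound and the classical linearity test analysis.
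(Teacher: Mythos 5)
Your proposal has a fundamental gap: it is built around the standard three-point BLR test, and in the online model that test simply cannot be made to work for the range of $t$ covered by the theorem. No matter how you order the queries of a BLR triple, the third point is a deterministic function of the first two. The moment the oracle has answered $x_1^{(i)}$ and $x_2^{(i)}$, the point $x_1^{(i)}\oplus x_2^{(i)}$ has zero remaining entropy, so the adversary can erase it with a single erasure. (And if instead you query $x_1^{(i)}$ and $x_1^{(i)}\oplus x_2^{(i)}$ first, then $x_2^{(i)}$ is determined — same problem.) This means a $1$-online adversary already defeats \emph{every} triple, and no amount of repetition helps since you never see a complete, untouched triple. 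Your suggested fix — querying all the ``free'' points first and then all the sums, and appealing to near-uniformity — does not avoid this: by the time the sums are queried, each one is a known, fixed point, and a budget-managing adversary has hoarded enough erasures to cover them all (for any $t\geq 1/2$, and certainly in the $t\approx 2^{n/2}$ regime the theorem targets). The ``near-uniformity of every query conditioned on the transcript'' claim that you flag as the technical heart is not merely hard to make rigorous; it is false for the third BLR point.

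The paper's proof has exactly the missing ingredient. Each iteration queries a \emph{reserve} of $m=2k$ uniformly random points $x_1,\dots,x_m$ and only afterward picks a uniformly random size-$k$ subset $S\subseteq[m]$ and queries $y=\bigoplus_{j\in S}x_j$. Even after the adversary has seen the entire reserve, $y$ is one of $\binom{m}{m/2}\approx 2^m/\sqrt{2m}$ essentially distinct points (collisions among the candidate XORs are shown to be rare), so the adversary's budget is dwarfed and $y$ is unlikely to be erased. This requires an improved soundness bound for the $k$-point XOR test $\blr{k}$ (the paper's \Cref{lem:krv_k_test}: rejection probability $\geq\min\{1/4,\,k\eps/2\}$, not just $\eps$), because with $k=\Theta(\log t)$ points per iteration you can only afford $O(1/(k\eps))$ iterations while still hitting the stated $O(\max\{1/\eps,\log t\})$ query budget. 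Your proposal contains none of this: neither the reserve/subset-sampling mechanism that defeats the adversary's ability to predict the sum query, nor the strengthened soundness analysis of the multi-point XOR test that makes the budget arithmetic close. The remaining elements of your write-up (the cumulative-budget bookkeeping, the $2^{n/2}$ birthday heuristic, the corruption case via ``never touch a manipulated point'') are sound in spirit and do appear in the paper, but they are downstream of the missing core idea.
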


In contrast, the linearity tester from \cite{KalemajRV22} works for a smaller range of $t$ (specifically, $t \leq c_0\cdot\eps^{5/4} 2^{n/4}$ for some constant $c_0$)  and makes $O\big(\min\big( \frac{1}{\eps}\log \frac{t}{\eps}, \frac t \eps\big)\big)$ queries. 

The linearity tester in  \Cref{thm:linearity_tester} has optimal query complexity both for fixed-rate and budget-managing adversaries, in the case of erasures (and, consequently, in the more challenging setting with corruptions). 
This follows from the known query lower bounds of $\Omega\left(1/\eps\right)$ with no erasures and $\Omega(\log t)$ for fixed-rate adversarial erasures \cite[Theorem 1.4]{KalemajRV22}).

Our online linearity tester improves on the tester in \cite{KalemajRV22} both in terms of query complexity and in terms of simplicity of the tester.
The classical linearity test of \cite{BLR93} looks for witnesses of nonlinearity consisting of three points $x_1,x_2$ and $x_1\xor x_2$ that satisfy
$f(x_1) + f(x_2) \neq f(x_1\xor x_2)$. Kalemaj et al.\ generalized it to witnesses consisting of any even number of points and their XOR. Let \blr{k} denote the test that picks $k$ points uniformly at random and checks if these points and their XOR form a witness of nonlinarity.
 We improve upon the soundness analysis of this 
test and use it to get an optimal online-erasure-resilient linearity tester. 
 Specifically, in \Cref{lem:krv_k_test}, we show that $\blr{k}$ rejects every function that is $\eps$-far from linearity with probability proportional to $k \eps$ (as long as $k\eps$ is at most a small constant).
It implies that $\blr{k}$ can be used in the standard linearity testing setting (without erasures) to obtain another optimal $O(1/\eps)$-query tester: $\blr{k}$ can be repeated $O(1/(k\eps))$ times to obtain constant probability of error. To the best of our knowledge, only testers based on the BLR test (i.e., \blr{2}) have been analyzed in the standard linearity testing setting.

Another important ingredient in the analysis of our online linearity tester is bounding the probability of seeing an erasure. A good bound for this event ensures that our tester is resilient to corruptions (not just erasures). It also allows us to obtain a clean online tester that is based on multiple simulations of $\blr{k}$, where each simulation initially samples $2k$ points to fool the adversary and then selects a random subset of size $k$ for the XOR as the final query. In contrast, the online linearity tester of \cite{KalemajRV22} is more complicated: it relies on work investment strategy. 

\subsubsection{Low-Degree Testing}\label{sec:intro-low-degree}
Low-degree testing, equivalent to local testing of Reed-Muller codes, is a natural generalization of linearity testing. It has been investigated, e.g.,
in~\cite{BabaiFL91,BabaiFLS91,GemmellLRSW91,FeigeGLSS96,FriedlS95,RubinfeldS96, RazS97, AlonKKLR05, AroraS03, MoshkovitzR08, Moshkovitz17, KaufmanR06, Samorodnitsky07, SamorodnitskyT09, JutlaPRZ09, BKSSZ10,HaramatySS13, Ron-ZewiS13, DinurG13,kaufman2022improved,KalemajRV22}.
For a $d\in\N$, let $\cP_d$ denote the set of all polynomials $p:\{0,1\}^n\to \{0,1\}$ of degree at most $d$, that is, functions $p(x)$ that can be represented as
a sum of monomials\footnote{To be consistent with previous work, we allow $S=\emptyset$. The property $\cP_1$ is \emph{affinity}, and linear functions (discussed in \Cref{sec:intro-linearity}) are affine with the additional requirement that the constant in the polynomial representation is 0.} of the form $\prod_{i\in S} x[i]$, where~$x=(x[1],\dots,x[n])$ is a vector of~$n$ bits and $|S|\leq d$.
 In the standard property testing model, $\cP_d$ can be $\eps$-tested with $O(1/\eps+2^d)$ queries by repeating the following test of Alon et al.~\cite{AlonKKLR05}: select $d+1$ points in $\{0,1\}^n$ uniformly at random, query all of their linear combinations $f$, and accept iff the sum of the returned values is 0.  This tester was analyzed by Alon et al.~\cite{AlonKKLR05}, with an asymptotic improvement by 
 Bhattacharyya et al.~\cite{BKSSZ10}, and the matching lower bound of $\Omega(1/\eps+2^d)$ on query complexity was proved in \cite{AlonKKLR05}.

As $d$ grows, the test of Alon et al.\ becomes more structured: the number of points queried is exponential in the number of points selected at random. This makes it hard to adapt to the online model, since the adversary can predict and erase the points needed by the tester. We show that there is an underlying reason for this difficulty: low-degree testing for $d>1$ is strictly harder than testing linearity in terms of the dependence on $t$, the rate of erasures.

\begin{restatable}[Lower bound for low-degree testing]{theorem}{LowerBoundForLowDegreeTestingThm}
    \label{thm:degree-d-lb}
    Fix an integer $d > 1$, and let $\cP_d$ be the property of being a polynomial of degree at most $d$ over $\F_2$. 
   There exists $n_0=n_0(d)$, such that for all $n\geq n_0$ and 
   $\eps \in(0, 1/3]$, every $\eps$-tester of functions $f \colon \{0,1\}^n\to \{0,1\}$ for $\cP_d$  that works in the presence of a $t$-online fixed-rate erasure adversary must make $\Omega\left( \log^d t \right)$ queries.   
\end{restatable}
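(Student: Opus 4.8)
The plan is to apply Yao's minimax principle and then induct on $d$. By Yao, it suffices to design, for every $d\ge 1$, a pair $(\cD_d,\cA_d)$ where $\cD_d$ is a distribution on functions $\cube{n}\to\{0,1\}$ that is a mixture of a ``yes'' part supported on $\cP_d$ and a ``no'' part supported on functions that are $\eps$-far from $\cP_d$, and $\cA_d$ is a $t$-online fixed-rate erasure adversary, such that every deterministic tester making $q=o(\log^d t)$ queries has advantage $o(1)$ in telling the two parts apart when its oracle answers are produced by $\cA_d$. The base case is $d=1$, where $\cP_1$ is affinity: the required $\Omega(\log t)$ lower bound against a $t$-online fixed-rate erasure adversary is (a straightforward affine variant of) the linearity lower bound of Kalemaj et al., \cite[Theorem~1.4]{KalemajRV22}, whose hard no-distribution plants a secret layered structure of depth $\Theta(\log t)$ that the adversary keeps concealed using its $t$ erasures per step, forcing the tester to spend $\Omega(\log t)$ queries before it can produce a surviving witness.

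For the inductive step I would \emph{lift} a hard pair $(\cD_{d-1},\cA_{d-1})$ for $\cP_{d-1}$ on $\cube{m}$ into a hard pair for $\cP_d$ on $\cube{m}\times\cube{k}$ with $k=\Theta(\log t)$ fresh coordinates. The organizing principle is the derivative characterization of degree: $\deg h\le d$ iff $\deg D_y h \le d-1$ for every direction $y$, where $D_y h(x)=h(x)\xor h(x\xor y)$. On the new block I install its own depth-$\Theta(\log t)$ secret, of the same ``linearity-type'' flavor as in the base case, which governs when a query to $f$ is \emph{informative}; on the informative part, $f$ restricts (after an affine change of coordinates) to $x_{m+1}\cdot g(x_{[m]})$ with $g\sim\cD_{d-1}$, and the remaining degrees of freedom are used to pad $f$ into a polynomial of degree at most $d$ in the yes case. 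The derivative identity gives $f\in\cP_d$ iff $g\in\cP_{d-1}$; and if $g$ is $\eps$-far from $\cP_{d-1}$ then $f$ is $\Omega(\eps)$-far from $\cP_d$, by restricting any candidate low-degree $p$ to the two slices of the $x_{m+1}$ coordinate and noting that the difference of the slices has degree at most $d-1$ and that one slice must be close to $0$. The densities are tuned so that the far-ness stays above the target $\eps\le 1/3$ across all $d$ levels.

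The adversary $\cA_d$ splits its budget of $t$ erasures per step into two tasks operating on disjoint parts of the domain, so they do not interfere: one copy of the base adversary, run on the new block to protect its depth-$\Theta(\log t)$ secret, so that the tester must pay an $\Omega(\log t)$ overhead before it can force an informative query; and a simulated copy of $\cA_{d-1}$, run on the induced $g$-instance, but only on the informative queries. Since $t$ is the same up to constant factors at every level, the split is affordable, and taking $n_0(d)$ large enough makes all the coordinate blocks fit. The net effect is that a $q$-query tester for $\cP_d$ against $\cA_d$ yields a tester for $\cP_{d-1}$ against erasure rate $\Omega(t)$ making only $O(q/\log t)$ queries, so by the inductive hypothesis $q/\log t = \Omega(\log^{d-1} t)$ and hence $q=\Omega(\log^d t)$.

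The main obstacle is making the per-level overhead \emph{multiplicative} rather than additive: one must show, against \emph{adaptive} testers, that protecting the depth-$\Theta(\log t)$ secret on the new block genuinely costs the tester a $\Theta(\log t)$ factor \emph{per informative query}, so that the $\log t$ factors from the $d$ levels multiply and not merely sum, while the \emph{same} rate-$t$ budget is simultaneously running the inner adversary $\cA_{d-1}$. The two secondary technical points -- keeping the global degree bound in the yes case while the hidden structure on the new block remains rich enough to need $\Omega(\log t)$ queries to learn, and propagating the distance guarantee through the $d$-fold lift -- are comparatively routine; the delicate part is the coupling inside $\cA_d$ of ``hide the secret on the new block'' with ``hide the inner instance,'' and arguing that no clever adaptive allocation of the tester's few queries can compromise both at once.
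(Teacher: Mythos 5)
Your inductive approach identifies the right target but the central gap you flag is a genuine one, not a technicality you can defer. The per-level overhead in your lift is \emph{additive}, not multiplicative: the ``depth-$\Theta(\log t)$ secret'' you install on the fresh block, in the style of the linearity lower bound of \cite{KalemajRV22}, is a one-time cost. In that argument the adversary erases the (affine) span of the queried points, and the tester is blind only as long as that span has size at most roughly $t$, i.e.\ for $O(\log t)$ queries in total. Once the tester has spent those $O(\log t)$ queries, the adversary's budget no longer covers the span, and all subsequent queries can be informative at no extra cost --- nothing forces the tester to ``re-pay'' $\Theta(\log t)$ outer queries per informative inner query. So the recursion you want, turning a $q$-query $\cP_d$-tester against $\cA_d$ into an $O(q/\log t)$-query $\cP_{d-1}$-tester against $\cA_{d-1}$, does not follow from budget-splitting across two disjoint blocks; what you would actually get is $q \ge \Omega(\log t) + \Omega(\log^{d-1} t)$. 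Your closing paragraph correctly names this as ``the delicate part,'' but you do not propose a mechanism that makes the cost per informative query scale with $\log t$, and it is not clear such a mechanism exists in the two-disjoint-adversaries framework.

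The paper closes exactly this gap with a direct, single-level argument rather than an induction. It maps each query point $x$ to its \emph{extended representation} $\ext(x)\in\F_2^{\binom{n}{\le d}}$ whose coordinates are the degree-$\le d$ monomial evaluations at $x$. The adversary erases, after each query, all points whose extension lies in the $\F_2$-span of the extensions of the points queried so far. Crucially, the paper invokes a rank bound (\cite[Lemma~1.4]{Ben-EliezerHL12}, equivalently \cite[Theorem~1.5]{KeevashS05}): if that span has $r$ points whose projections to $\F_2^n$ are distinct, then the span dimension is at least $\binom{\lfloor\log r\rfloor}{d}$, hence $k\ge\binom{\lfloor\log r\rfloor}{d}$. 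Inverting this gives that the span stays within the adversary's budget $t$ for $q=O\bigl(\binom{\log t}{d}\bigr)=O(\log^d t)$ queries. Combined with a standard Yao argument (uniform degree-$d$ polynomials vs.\ uniform functions, whose query-answer histories are identical under this adversary because new unerased queries always have extensions linearly independent of all previous ones), this yields the theorem in one shot. The $\log^d t$ appears because a rank-$k$ submatrix of the degree-$\le d$ monomial matrix can have at most $2^{O(k^{1/d})}$ columns --- precisely the multiplicative coupling across the $d$ ``levels'' that your two-separate-blocks design cannot produce. If you want to pursue an inductive proof, the missing ingredient would have to encode something like this extension-rank phenomenon, at which point the induction would likely collapse into the direct argument anyway.
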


Our lower bound is nearly tight in terms of the dependence on $t$ and $d$: in a concurrent work, Minzer and Zheng~\cite{MinZ} show that $\cP_d$ can be tested with $O\left(\frac 1 \eps \log^{3d+3}\frac t \eps\right)$ queries with $t$-online fixed-rate erasure adversaries. Thus, the query complexity of $\cP_d$ is $\log^{\Theta(d)} t$ (for constant $\eps$).

To prove our lower bound, we define an extended representation $\ext(x)$ for each vector $x \in \F_2^n$, where each entry of $\ext(x)$ is an evaluation of a monomial of degree at most $d$ on input $x$. We consider the adversarial strategy of erasing function values on all points whose extended representations are in the span of the extended representations of previous queries. We use \cite[Lemma 1.4]{Ben-EliezerHL12} (or, equivalently,  \cite[Theorem 1.5]{KeevashS05}) to demonstrate that when the number of queries is, roughly, at most $\binom{\log t}{d}$, the adversary can successfully execute this strategy. Finally, we apply Yao's minimax principle to show that in this case no online tester can distinguish random polynomials of degree $d$ from random functions.

\medskip

We overcome the lower bound by allowing batch queries, where the tester gets a group of $b$ queries answered between manipulations of the data. The original model corresponds to $b=1$.
Since the atomic tester of Alon et al.\ makes $2^{d+1}$ queries, it can be used directly in an online tester with batch size $b=2^{d+1},$ achieving query complexity $O(\frac 1 \eps + 2^d)$, the same as in the offline regime.
For completeness, in \Cref{sec:low-degree-testing-with-large-batches}, we analyze the rate of erasures $t$ achievable in this setting. 
It is natural to ask for which batch sizes we can achieve overhead polynomial in $d\log t$ over the offline query complexity. We show how to do it for $b=2^{d-1}$, i.e., with the batch size equal to one quarter of the points needed for the smallest witness of not satisfying $\cP_d.$
In particular, for quadraticity, it corresponds to batches of size $b=2$, a natural extension of testing linearity with batches of size 1.

\begin{restatable}
{theorem}
    {dDegChainOfCubesThm}\label{thm:d_deg:chain_of_cubes}
    There exists a constant $c>0$ such that for all $n,d\in\N,\ \eps \in (0, 1/2)$, and $t$ satisfying 
    $d <  n /11$ and $t \log^7 t\leq c\cdot \eps^{2.5}d^{-7} 2^{(n-11d)/2}$,
    there exists an $\eps$-tester for property $\cP_d$ of functions $f:\{0,1\}^n\to \{0,1\}$ that works in the presence of a $(2^{d-1},t)$-online erasure budget-managing adversary and makes
    $O\left(1/\eps + 2^{3d} \left(d +\log t\right)^3 \right)$ queries.
\end{restatable}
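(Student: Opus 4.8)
The plan is to realize the classical low-degree test of \cite{AlonKKLR05} (with the soundness amplification of \cite{BKSSZ10}) in the batched online model, reusing the over-sampling idea behind our linearity tester in \Cref{thm:linearity_tester}. Recall that this test picks $d+1$ uniformly random directions and checks that the sum of $f$ over the resulting $(d+1)$-dimensional affine cube is $0$; equivalently, for random $y_1,\dots,y_{d-1}$ it checks that the iterated derivative $g:=\Delta_{y_1}\cdots\Delta_{y_{d-1}} f$ (where $\Delta_y f(x):=f(x)+f(x+y)$) passes an affinity test. The key observation is that a single evaluation $g(x)=\sum_{v\in V} f(x+v)$, where $V=\langle y_1,\dots,y_{d-1}\rangle$, costs exactly $2^{d-1}$ queries to $f$ --- that is, one batch. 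So a round of our tester fixes fresh random $y_1,\dots,y_{d-1}$ (hence $V$) and a parameter $k=\Theta(d+\log t)$, over-samples $2k$ base points $p_1,\dots,p_{2k}$, queries the cosets $p_1+V,\dots,p_{2k}+V$ one per batch to learn $g(p_1),\dots,g(p_{2k})$, then picks a uniformly random subset $T\subseteq[2k]$ of odd size $k$ and queries the closing coset $\bigl(\bigoplus_{i\in T}p_i\bigr)+V$ in one more batch, rejecting iff $\sum_{i\in T} g(p_i)\neq g\bigl(\bigoplus_{i\in T}p_i\bigr)$. Every $f\in\cP_d$ has $\deg g\le 1$, so (using that $k$ is odd) it always passes: the tester never rejects a function with the property.

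For the erasure analysis we bound the probability of seeing an erased point in a round. Conditioned on the adversary's view, $V$ is known but each $p_i$ is fresh and uniform, so each ``honest'' coset $p_i+V$ is a uniformly random coset of $V$ and meets any fixed erased point with probability $2^{d-1}/2^n$; since at most $t$ erasures are made per query answered, a union bound over all erased points and over all batches of all rounds is negligible under the hypothesis on $t$. (The hypotheses $d<n/11$ and the slack in $2^{(n-11d)/2}$ also guarantee that $p_1,\dots,p_{2k}$ have no nonempty XOR-subset lying in $V$, with overwhelming probability.) The crucial point is that, conditioned on the view, the closing coset $\bigl(\bigoplus_{i\in T}p_i\bigr)+V$ is uniform over the $\binom{2k}{k}$ distinct --- hence pairwise disjoint --- cosets $\{(\bigoplus_{i\in T}p_i)+V : T\}$, so it meets the erased set with probability at most $T_{\mathrm{tot}}/\binom{2k}{k}$, where $T_{\mathrm{tot}}$ is the total number of erasures over the whole execution. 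Taking $k=\Theta(d+\log t)$ makes $\binom{2k}{k}$ exceed $T_{\mathrm{tot}}\approx t\cdot 2^{3d}\poly(d+\log t)$ by a large polynomial factor, so the closing query also avoids erasures with high probability; in particular the adversary's ability to predict the \emph{set} of candidate closing cosets does not help it, exactly as in \Cref{thm:linearity_tester}.

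Soundness is the main obstacle. We first need an affinity analog of \Cref{lem:krv_k_test}: the test ``pick a uniformly random odd-size-$k$ subset $T$ of $2k$ random points and check $\sum_{i\in T}h(p_i)=h(\bigoplus_{i\in T}p_i)$'' rejects $h$ with probability $\Omega\bigl(k\cdot\dist(h,\text{affine})\bigr)$, as long as this quantity is below a small constant. Composing this with the \cite{AlonKKLR05, BKSSZ10} analysis of the $(d+1)$-flat test on $f$ --- which lower-bounds the rejection probability of that test, equivalently $\E_{y_1,\dots,y_{d-1}}[\dist(g,\text{affine})]$, in terms of $\min(1,2^d\eps)$ when $\dist(f,\cP_d)=\eps$ --- shows that one round rejects an $\eps$-far function with probability $\Omega\bigl(\min(1,2^d\eps)\big/(2^{\Theta(d)}\poly(d+\log t))\bigr)$, and since the probability of seeing an erased point in a round is negligible we may condition on this not happening at the cost of only a constant factor. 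Repeating $O\bigl(2^{\Theta(d)}\poly(d+\log t)/\min(1,2^d\eps)\bigr)$ rounds --- while reusing the $2k$ decoy cosets of a round for many independent closing queries, so that the $\Theta(2^d(d+\log t))$ setup cost is paid only a few times in the small-$\eps$ regime --- drives the error below $1/3$, and carefully tracking the $2^{\Theta(d)}$ and polynomial losses in the soundness composition yields the claimed $O(1/\eps+2^{3d}(d+\log t)^3)$ query bound. The delicate points are (i) proving the affinity analog of \Cref{lem:krv_k_test} with a clean linear dependence on $k$; (ii) pushing the $(d+1)$-flat soundness analysis through the derivative-plus-over-sampling decomposition while keeping the dependence on $d$ at $2^{O(d)}$; and (iii) verifying that conditioning the random subset $T$ on ``the round encountered no erasure'' does not bias the rejection event, so that the erasure-avoidance and soundness arguments can be combined.
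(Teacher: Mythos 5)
Your erasure analysis is the paper's: you project queries and erasures onto the quotient $\F_2^n / V$ and reduce to the linearity analysis, and you correctly identify the collision concern (distinct subsets of $\{p_i\}$ mapping to the same coset) that the paper handles in \Cref{claim:low-degree_no_collisions_for_y}. That part is on solid ground. Where you diverge is the soundness argument, and there you have a genuine gap.

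The paper does \emph{not} condition on the cube directions and analyze the derived function $g_V$. Instead it treats the entire chain-of-cubes object --- cube parameters $v_1,\dots,v_{d-1}$ \emph{and} chain parameters $x_1,\dots,x_m$ together --- as a single ``testing pattern,'' shows it is a $2$-ary local characterization in the sense of Kaufman--Sudan, and proves one clean soundness bound (\Cref{lem:patterns-soundness}) directly on this joint object: for $\ell$ pattern points, rejection probability $\geq \min\{\ell\eps/2,\,1/(2\ell^2)\}$. The $\ell\eps/2$ term comes from a BKSSZ-style ``exactly one bad point'' union, using only pairwise independence of the $\ell$ pattern points over the \emph{joint} randomness; the $1/(2\ell^2)$ term comes from the generic KS08 local-characterization soundness. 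Because the lower bound is linear in $\ell$ in the small-$\eps$ regime, the round cost $\Theta(\ell)$ and the number of rounds $\Theta(1/(\ell\eps))$ multiply out to the stated $O(1/\eps)$ with no extra bookkeeping.

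Your route --- condition on $V$, apply an affinity analog of \Cref{lem:krv_k_test} to $g_V=\Delta_{v_1}\cdots\Delta_{v_{d-1}}f$, then compose with AKKLR/BKSSZ --- runs into a composition problem that you acknowledge but do not resolve. The affinity analog gives a pointwise bound $\Pr[\text{rej}\mid V]\gtrsim\min\{k\cdot\dist(g_V,\text{affine}),\,c\}$, so the unconditional rejection probability is $\E_V[\min\{k\cdot\dist(g_V),c\}]$. The AKKLR/BKSSZ analysis controls $\E_V[\text{BLR-rej}(g_V)]$, not $\E_V[\dist(g_V)]$, and even if you had $\E_V[\dist(g_V)]$ the expectation does not commute past the $\min$: if $\dist(g_V)$ is a rare-heavy-tail random variable, $\E_V[\min\{k\dist,c\}]$ can be much smaller than $\min\{k\E_V[\dist],c\}$. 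This is exactly why, as written, your bound degrades to $\Omega(\min(1,2^d\eps)/(2^{\Theta(d)}\poly(d+\log t)))$, which in the small-$\eps$ regime forces $\Theta(\poly(d+\log t)/\eps)$ rounds at cost $\Theta(2^d(d+\log t))$ each, overshooting $O(1/\eps)$ --- hence your ad hoc ``reuse the decoy cosets across closing queries'' patch, which is a further departure from the paper's (simpler) algorithm and would need its own erasure and independence analysis. The paper sidesteps all of this by never conditioning on $V$; the pattern soundness lemma is proved over the joint distribution, where the pairwise independence of the $\ell$ testing points holds unconditionally.

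In short: the algorithm and erasure argument you propose are essentially the paper's, but your soundness decomposition (conditional derivative + affinity analog + composition) has a real gap at the $\E_V[\min(\cdot)]$ step that your plan does not close, and the recovery you suggest (decoy reuse) changes the algorithm. The paper's abstract pattern-soundness lemma is the missing ingredient that makes the simple algorithm go through cleanly.
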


Our online low-degree tester is a natural generalization of our linearity tester and of the tester of \cite{AlonKKLR05}. The witnesses we consider generalize the $(d+1)$-dimensional cubes formed by the points queried by the tester of \cite{AlonKKLR05} to \emph{chains of cubes} (see \Cref{fig:chain_of_cubes}).

We show chains of cubes are viable witnesses as they are a special case of \emph{$k$-local characterizations} defined in the seminal work of Kaufman and Sudan~\cite{KS08} on algebraic property testing.
In addition, our \emph{chains-of-cubes} test shares important features with the test of Alon et al.
This allows us to show, by generalizing an argument from~\cite{BKSSZ10}, that for small values of $\eps$, the probability of seeing a violation increases linearly in the size of the witness.

To test with a chain of cubes, the tester first declares $d-1$ directions that form a linear subspace~$A$. Then, intuitively, it runs our online linearity tester, replacing each query  $x$ with a batch query $x + A$ (all points in this affine subspace).
To analyze erasure resilience, whenever the adversary erases point $x,$ we allow it to erase the entire space $x+A$ for free.
This allows us to analyze the probability of seeing an erasure over the quotient group $\F_2^n / A$, which is isomorphic to $\F_2^{n-d+1}$, essentially reducing the analysis to that of the online linearity tester (because, over this space, each query and erasure are of a single point).

\subsubsection{Testing Local Properties}\label{sec:intro-local-properties}

Next we investigate properties of sequences, represented by functions $f : [n] \to \mathbb{R}$.
Kalemaj et al.\ showed that two fundamental properties of sequences, \emph{sortedness} and the \emph{Lipschitz property}, are not testable with any number of queries in the presence of a fixed-rate adversary making $t=1$ erasures per query. 
Are these properties testable in models with fewer erasures than queries?

We explore this question in depth. The answer is generally positive, but differs between the two 
adversarial manipulation
models we study: fixed-rate and budget-managing. Our results are optimal, and the query complexity in many regimes matches that of the standard (offline) property testing model. Our online testers work in the general framework of local properties \cite{BE19}.

Before stating our results, we define sortedness, the Lipschitz property, and the general class of local properties.
 A sequence $f : [n] \to \mathbb{R}$ is sorted if $f(x) \leq f(y)$ for all $x < y$, where $x,y\in[n]$.
Sortedness is one of the most investigated properties in the context of property testing (see \cite{EKKRV00,DGLRRS99,Ras99,BGJRW12,CS13,Belovs18,PRV18} and the survey in~\cite{Enc1}). 
A sequence $f : [n] \to \mathbb{R}$ is Lipschitz if $|f(x+1) - f(x)|\leq 1$ for all $x \in [n-1]$. The Lipschitz property has been studied in \cite{JhaR13,CS13, DixitJRT13, BermanRY14, AwasthiJMR16, ChakrabartyDJS17} and has applications to data privacy.
Both sortedness and the Lipschitz property belong to the class of local properties, defined by
 \cite{BE19}. A property $\cP$ of sequences $f \colon [n] \to \R$ is \emph{local}\footnote{This class of properties is called 2-local in \cite{BE19}, where more general $k$-local properties are defined as characterized by families of forbidden $k$-tuples. For simplicity and clarity of exposition, we focus on 2-local properties, but our results easily generalize to $k$-local properties (with increased batch sizes).} if there exists a family $\cF$ of \emph{forbidden} pairs $(a,b)\in \R^2$ such that

\begin{center}
$f\in\mathcal{P}$ $\Leftrightarrow$ $\forall i \in [n-1] \ \forall (a,b) \in \mathcal{F}\colon (f(i), f(i+1)) \neq (a, b)$.
\end{center}
Then we say that $\mathcal{P}$ is characterized by the family $\mathcal{F}$.
 Sortedness is characterized by $\cF=\{(a,b): a > b\}$; that is, a sequence is sorted if and only if it does not contain a pair of \emph{consecutive} elements with decreasing values. The Lipschitz property is characterized by $\cF=\{(a,b):|a-b| > 1\}$.
 
\paragraph{Results for batch size $b=1$.}
We give tight bounds on the rate of erasures and corruptions online algorithms can handle in the presence of \emph{budget-managing} adversaries.
\begin{restatable}[Testing local properties in presence of budget-managing adversary with batch size $1$]{theorem}{TestingLocalPropertiesBudgetManagingTHM}
\label{thm:sortedness-budget-managing}
There exist absolute constants $0 < c < C$ satisfying the following. For all $\eps > 0$ and $n \geq C/\eps$:
\begin{enumerate}
\item For every local property $\mathcal{P}$ of sequences $f \colon [n] \to \R$, and every $t \leq c \eps$, there exists an $\eps$-tester for $\mathcal{P}$ 
that works in the presence of a $t$-online erasure budget-managing adversary and makes $O\big(\frac{\log (\eps n)} \eps\big)$ queries.

\item\label{item:impossibility-sortedness} For every $t \geq C \eps$, there is no $\eps$-tester for sortedness 
of sequences $f \colon [n] \to \R$ that works in the presence of a $t$-online erasure budget-managing adversary (with any number of queries).
\end{enumerate}
Both of these results hold (with the same parameters) if erasures are replaced with corruptions. In the case of erasure adversary, the tester has one-sided error.
\end{restatable}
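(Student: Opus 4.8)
Both parts rest on the fact that a $2$-local property is witnessed by a pair of coordinates at a single dyadic distance. For Part~1 I would take (essentially) the offline local-property tester of~\cite{BE19}, arranged so that each query is marginally almost uniform over $[n]$, and then show that an online manipulation budget of only $O(c\log(\eps n))$ cannot block it. Concretely, fix $L=\ceil{\log(\eps n)}$ and define the \emph{atomic test}: pick $i\in[n]$ and $\ell\in\{0,\dots,L\}$ uniformly and independently, query $f(i)$ and then $f((i+2^\ell)\bmod n)$, and reject iff there is no $\cF$-respecting sequence of length $2^\ell+1$ with these two values as its endpoints (for sortedness this is exactly ``$f(i)\le f((i+2^\ell)\bmod n)$''). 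The tester runs $T_0=\Theta(\log(\eps n)/\eps)$ atomic tests, with a large enough constant, and rejects iff some atomic test does; this makes $O(\log(\eps n)/\eps)$ queries. Offline, this is a local-property tester in the sense of~\cite{BE19,EKKRV00}: it never rejects $f\in\cP$, and if $f$ is $\eps$-far then (using that $\eps$-farness yields $\Omega(\eps n)$ disjoint violated intervals and that restricting to scales $\ell\le L$ loses only a constant factor) each atomic test independently reveals a violation with probability $p=\Omega(\eps/\log(\eps n))$, so $T_0 p$ can be taken to be at least, say, $100$.

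Now run this against a $t$-online budget-managing adversary with $t\le c\eps$. After $q=O(\log(\eps n)/\eps)$ queries the adversary has made at most $B:=tq=O(c\log(\eps n))$ modifications. Each atomic test's first query is uniform over $[n]$ conditioned on the transcript, so the probability that any first query lands on an already-erased point is at most $T_0\cdot B/n=O(c\log^2(\eps n)/(\eps n))\le 1/10$ once $\eps n\ge C$ and $c$ is small --- this is where the hypothesis $n\ge C/\eps$ enters, to tame $\log^2(\eps n)/(\eps n)$. Condition on this; then the adversary's only remaining lever is to erase, after seeing an atomic test's first coordinate $i$ but before learning $\ell$, some of the $L+1$ candidate second coordinates $\{(i+2^\ell)\bmod n\}_\ell$, and spending $b$ of its budget on one atomic test lowers that test's chance of revealing a violation by at most $b/(L+1)$. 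Hence the total loss over all atomic tests is at most $B/(L+1)=O(c)$, so on an $\eps$-far input the probability that no atomic test rejects is at most $e^{B/(L+1)}\cdot\E[e^{-\sum_k\pi(i_k)}]$ with $\pi(i)=\Pr_\ell[(i,i+2^\ell)\text{ violated}]$; since $\sum_k\pi(i_k)$ is a sum of $T_0$ i.i.d.\ $[0,1]$ random variables of mean $\ge 100$, it exceeds $50$ except with negligible probability, and the bound is $e^{O(c)}\cdot 2e^{-50}\le 1/10$. So the tester errs with probability $<1/3$, and it never rejects an $f\in\cP$ (erased answers cannot trigger a rejection), giving one-sided error for erasures. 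For corruptions the same counting bounds the probability that any queried point is currently corrupted (a corrupted position is among a given atomic test's second-query candidates with probability $(L+1)/n$, so the contribution is again $O(c\log^2(\eps n)/(\eps n))$), while masking of real violations costs at most $B/(L+1)=O(c)$ as before, yielding a two-sided tester.

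\textbf{Part 2 (impossibility).} By Yao's principle it suffices to give a hard distribution on which every deterministic tester errs with probability $\ge 1/2$ regardless of the number of queries. Let $f_{\mathrm{yes}}(x)=3x$ (sorted), and let $f_{\mathrm{no}}$ be obtained by choosing a uniformly random family $P$ of $\eps n$ pairwise-disjoint adjacent pairs $\{a,a+1\}$ and swapping the two values in each pair; then $f_{\mathrm{no}}$ is exactly $\eps$-far from sorted --- its longest increasing subsequence misses exactly one coordinate per swapped pair --- and it agrees with $f_{\mathrm{yes}}$ off the $2\eps n$ coordinates of $P$. Take the even mixture of $f_{\mathrm{yes}}$ and a random $f_{\mathrm{no}}$. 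The adversary, which knows $P$ in either case, answers $3x$ at every queried coordinate outside $P$ and \emph{erases} every queried coordinate inside $P$; this behavior is a function of $P$ and the query sequence only, not of whether the hidden values were swapped, so the transcript seen by any tester is identically distributed in the yes and no cases and it cannot distinguish them. It remains to check the budget: every non-$P$ answer equals $3\cdot(\text{coordinate})$ and carries no information about $P$, so until the tester first queries a $P$-coordinate it is oblivious to $P$, whence among its first $k$ queries the number landing in $P$ is hypergeometrically distributed around $2\eps k$ and stays below $C\eps k$ for all $k$ with overwhelming probability; a budget-managing adversary with rate $t\ge C\eps$ accumulates enough during the long stretches outside $P$ to erase the $P$-coordinates on time, for any number of queries. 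Replacing ``erase'' by ``corrupt to $3x$'' gives the identical argument for corruptions (now the transcript is literally ``every answer equals $3\cdot(\text{coordinate})$'' in both cases).

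\textbf{Main obstacle.} The delicate point is the online analysis of Part~1. With batch size $1$ the adversary moves between the two queries of an atomic test, and the second query $i+2^\ell$ is only $(L+1)$-way unpredictable rather than $\Theta(n)$-way, so the total budget $B=O(c\log(\eps n))$ genuinely exceeds the $\Theta(\log(\eps n))$ cost of fully neutralizing a single atomic test --- just by a factor $O(c)$. Making this $O(c)$ loss negligible against the $\Theta(1)$ baseline per-test success, while simultaneously keeping $O(c\log^2(\eps n)/(\eps n))$ below a constant via $n\ge C/\eps$, is exactly what pins the threshold at $t=\Theta(\eps)$ and forces $c<C$; getting these constants to cooperate (and confirming that the restricted scale range $\ell\le L$ still captures a $\Omega(\eps/\log(\eps n))$-fraction of violations for every local property) is the crux of the argument.
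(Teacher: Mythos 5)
Your Part 1 outline is in the same spirit as the paper's — a pair tester at dyadic distances, with erasures split into ``blind'' and ``relative'' types, and the key numerical observations ($O(Tq/n)$ for blind, $O(T/\log(\eps n))$ for relative) match. However, you treat as routine the claim that each atomic test ``reveals a violation with probability $\Omega(\eps/\log(\eps n))$ for every local property,'' with a parenthetical appeal to $\Omega(\eps n)$ disjoint violated intervals. This is the crux of the paper's contribution here and is not a known fact for general 2-local properties: for a local property, ``unrepairable on a single pair at distance $2^i$'' is much weaker than ``contains a forbidden adjacent pair inside that interval,'' and there is no a priori reason that the unrepairable pairs spread nicely across dyadic scales. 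The paper establishes this via a randomly-shifted hierarchical interval partition (\Cref{lem:structural_charac_unrepairability}), a notion of maximal witness intervals, and a coupling argument showing the naive pair tester simulates the structured tester (\Cref{lem:marginal_pair_balanced_tree_tester}, \Cref{lem:pair_tester_offline}). Without that machinery, or a replacement for it, your Part 1 has a genuine gap exactly at the claimed offline soundness.

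Your Part 2 has a more serious problem. You write that the adversary ``erases every queried coordinate inside $P$,'' but an erasure made after a coordinate is queried is useless: the tester already received the value, and since $f_{\mathrm{yes}}(a)=3a$ while $f_{\mathrm{no}}(a)=3(a+1)$ for $a\in P$, the very first query to a $P$-coordinate distinguishes the two cases. For your plan to work the adversary must erase (or corrupt) $P$-coordinates \emph{before} they are queried, which is a preemptive-simulation argument that you neither state nor analyze (and the first $P$-query can never be pre-erased, since the budget starts at zero; your ``hypergeometric'' count implicitly assumes a fixed non-adaptive query sequence, which breaks once answers depend on $P$). The paper's construction sidesteps all of this: the yes-distribution $\mathcal{D}^+$ is deliberately structured (low/high/increasing pairs) so that the value observed at the \emph{first} queried element of any pair is statistically consistent with both $\mathcal{D}^+$ and $\mathcal{D}^-$; the only information-carrying coordinate is the \emph{other} element of the pair, which the adversary erases reactively with carefully calibrated probability. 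No preemption is needed, and the budget check then reduces to a clean Chernoff bound on how many reactive erasures are ever demanded. Your $f_{\mathrm{yes}}(x)=3x$ construction lacks this calibration and, as written, does not yield identically distributed transcripts.
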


The positive result in Item 1 of \Cref{thm:sortedness-budget-managing} matches the query complexity of the offline optimal tester from \cite{BE19} while also attaining optimal resilience guarantees.

\Cref{thm:sortedness-budget-managing} highlights a dramatic phase transition 
in the threshold regime where $t = \Theta(\eps)$: when $t = \omega(\eps)$, sortedness is not testable at all; whereas when $t = o(\eps)$, it is testable (and in fact, all local properties are testable) with offline-optimal query complexity!

For fixed-rate adversaries, the threshold rate is arbitrarily close to $1$. The negative result is established in \cite{KalemajRV22}; the positive result is stated in \Cref{prop:local-fixed-rate} and proved in \Cref{sec:local}.

\paragraph{Results for batch size $b=2$.}
For batch size one, we have seen that the threshold rate for sortedness in the presence of a budget-managing adversary is $\Theta(\eps)$, i.e., less than one.  Batches of size two result in a dramatically different picture: we can tolerate as many as $\tilde{\Omega}(n)$ erasures or corruptions between consecutive batches while maintaining optimal query complexity! 

\begin{restatable}{theorem}{SortednessBatchTwoTHM}
\label{thm:sortedness_batch_2}
There exists $c > 0$ satisfying the following. Let $\eps > 0$ and $n \in \N$. 
For every local property $\mathcal{P}$ of sequences $f \colon [n] \to \R$, and every $t \leq \frac{c \eps^2 n}{\log^2 \eps n}$, there exists an $\eps$-tester for $\mathcal{P}$ with batch size $b=2$ that works in the presence of a $t$-online erasure (or corruption) budget-managing adversary and makes $O\big(\frac{\log (\eps n)} \eps\big)$  queries. In the case of erasures, the tester has one-sided error.
\end{restatable}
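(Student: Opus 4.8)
The plan is to adapt the offline optimal tester for local properties of sequences due to \cite{BE19} to the online batch‑$2$ setting, using the same paradigm as in our linearity tester (\Cref{thm:linearity_tester}): arrange that every point the tester queries is distributed nearly uniformly over a linear‑sized subset of $[n]$, so that the probability that the tester ever probes a point the adversary has manipulated is a small constant, and then observe that conditioned on this good event the tester behaves exactly as it would offline.

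First I recall and lightly modify the offline tester. It runs $R=O(1/\eps)$ independent rounds; a round picks a uniformly random anchor $i\in[n]$, queries $f(i)$, and then, for $O(\log \eps n)$ geometrically growing scales $s$, picks a random subinterval of length $\Theta(s)$ having $i$ near one of its endpoints, queries the value $f(y)$ at the other endpoint $y$, and rejects if the values at $\{i,y\}$ admit no valid completion of the corresponding length (for sortedness this is simply ``$y<i$ and $f(y)>f(i)$'', or the symmetric condition). The smallest scale contributes the consecutive pair $\{i,i\pm 1\}$. The essential modification is to use \emph{randomized} subintervals (a constant‑fraction middle band of the current window at each scale) rather than a fixed dyadic tree. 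Two facts are then needed. (i) \textbf{Soundness is preserved}: the analysis of \cite{BE19} --- if $f$ is $\eps$‑far from $\mathcal P$ then a single round exposes a violation with probability $\Omega(\eps)$, hence the whole tester rejects with constant probability --- carries over, as it only uses that searches have depth $O(\log \eps n)$ and that the nested intervals are mutually consistent. (ii) \textbf{Marginals are spread}: conditioned on any prefix of the query--answer transcript, the next query point equals any fixed element of $[n]$ with probability $O(1/n)$; for the anchor and the smallest‑scale pair this is immediate from uniformity of $i$, and for the intermediate scales it follows by an induction over scales, using that each endpoint is chosen uniformly in a window whose length is a constant fraction of the length of the current interval, so the hidden constant does not grow with the scale.

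Now the online reduction. The batch size $2$ is used in exactly one, crucial place: the smallest‑scale consecutive pair $\{i,i+1\}$ --- which for a general local property is the only kind of witness on which the tester is allowed to reject --- is issued as a single batch, so the adversary cannot reveal $f(i)$ and then erase $f(i+1)$ before it is probed; this ``reveal one point, destroy its partner'' move is exactly what confines the batch‑$1$ model to erasure rate $\Theta(\eps)$ in \Cref{thm:sortedness-budget-managing}. All other queries are paired into size‑$2$ batches arbitrarily (say, padded with independent uniform decoys), which only helps. Fix an adversary. Over the $M=O(\tfrac{\log \eps n}{\eps})$ queries the tester makes, a budget‑managing adversary manipulates at most $tM=O(\tfrac{\eps n}{\log \eps n})$ points in total; let $\mathcal G$ be the event that no query of the tester lands on a manipulated point. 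By fact (ii), a union bound over the $M$ queries --- bounding, for the $j$‑th query, the manipulated set by its size $\le tM$ and its hit probability by $O(tM/n)$ --- gives $\Pr[\overline{\mathcal G}] = O(tM^2/n) = O(c)$, which is below any prescribed constant once $c$ is chosen small, because the two $\log^2 \eps n$ factors cancel exactly. Conditioned on $\mathcal G$ all answers are genuine values of $f$, so with the same internal randomness the online tester makes the same decisions as the offline one; hence its rejection probability is at least that of the offline tester minus $\Pr[\overline{\mathcal G}]$, which remains a constant when $f$ is $\eps$‑far, and the query complexity is $O(\tfrac{\log \eps n}{\eps})$ as claimed. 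If the tester is instructed to never reject on a query that returns the erasure symbol $\bot$, then for $f\in\mathcal P$ it never rejects (no genuine consecutive pair of $f$ is forbidden), giving one‑sided error for erasures; the identical argument with ``manipulated'' read as ``corrupted'' handles corruptions, at the cost of an $O(c)$ two‑sided error.

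I expect the main obstacle to be fact (ii) together with its compatibility with fact (i): designing a randomized‑pivot search whose query points are provably $O(1/n)$‑spread at \emph{every} scale conditioned on the entire past transcript --- so that the union bound is clean and the $\log^2 \eps n$ in the denominator of the bound on $t$ is not lost --- while simultaneously keeping the search depth $O(\log \eps n)$ and retaining the $\Omega(\eps)$‑per‑round soundness of \cite{BE19}. Wide, flat pivot windows are what make the marginals spread, but a priori they hurt both the geometric shrinkage of the search and the combinatorial ``local fixing'' argument behind soundness; the resolution is to take the windows to be a constant‑fraction middle band of the current interval and then carry the marginal bound through the scale induction so that the hidden constant stays bounded.
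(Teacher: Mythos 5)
Your high‑level intuition is right --- batch size $2$ wins because it eliminates the ``see $f(i)$, then erase its partner'' move --- but your implementation reintroduces exactly that vulnerability at larger scales, and the two facts your reduction rests on are both false as stated.

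First, the claim that the consecutive pair $\{i,i+1\}$ ``is the only kind of witness on which the tester is allowed to reject'' for a general local property is not true, and it cannot be made true. The whole point of the [BE19]/pair‑tester approach (\Cref{lem:structural_charac_unrepairability}, \Cref{def:interval_unrepairability}) is that for a function $\eps$‑far from $\mathcal P$, a constant $\eps/\log(\eps n)$ fraction of the probability mass of violations sits on \emph{unrepairable pairs at distance $2^k$ for $k\ge 1$}, and the tester must be permitted to reject on those (for sortedness, $f(x) > f(y)$ with $y - x = 2^k$ is such a witness). If you restrict rejection to consecutive pairs you are back to a naive $\Theta(n)$ test.

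Second, once you concede the tester must reject on pairs at all scales, your tester's round structure breaks fact (ii). You query the anchor $i$ (with its consecutive partner) in the first batch, then query partners $y_s$ at scales $s = 2, 4, \ldots$ in later batches. After the first batch the adversary knows $i$, and $y_s$ is concentrated in a window of size $\Theta(s)$ around $i \pm s$; its conditional marginal is $\Theta(1/s)$, not $O(1/n)$. For small $s$ a budget‑managing adversary with budget $t = \tilde\Theta(\eps^2 n)$ can erase the entire window for the first few scales of every round, and ``padding with decoys'' does nothing to hide $y_s$ once $i$ is known. Your union bound $\Pr[\overline{\mathcal G}] = O(tM^2/n)$ relies on the false $O(1/n)$ spread. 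You flag this as ``the main obstacle'' in your last paragraph, but the wide‑pivot‑window fix you sketch cannot simultaneously keep the window a constant fraction of the \emph{current} scale (needed for soundness, so the search contracts) and make the conditional marginal $O(1/n)$ (which would require window length $\Theta(n)$ at every scale).

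The paper sidesteps both problems with a different tester. \Cref{alg:pair_tester_local_properties} does not anchor and then branch: each iteration independently samples a random scale $i\in[0:\ell]$ and a random pair $(x,y)\in D_{2^i}(n)$, queries exactly those two points, and rejects if the pair is unrepairable. The entire pair is one batch of size $2$, so \emph{no} erasure is ever made with knowledge of a partial pair --- in the paper's terminology (\Cref{lem:local_batch_two}), all erasures are blind. Because the pair is fresh randomness in every iteration, \Cref{lem:marginal_pair_tester} gives the clean $2/n$ marginal for each query, and the $tM^2/n$ union bound (your \textquotedblleft fact (ii)\textquotedblright{} calculation) then goes through exactly as you wanted. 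The price of this simplification is that correctness of the pair tester offline is no longer immediate; the paper pays it by showing the pair tester simulates a shifted hierarchical tester (\Cref{alg:shifted_balanced_tree_tester_local_properties}, \Cref{lem:marginal_pair_balanced_tree_tester}, \Cref{lem:pair_tester_offline}), which is where the actual randomly‑shifted variant of [BE19] lives. So the paper and your proposal both randomize [BE19] and both exploit the blind/relative erasure dichotomy, but the paper factors the randomization into the offline correctness proof, leaving a tester whose online query distribution is trivially $O(1/n)$‑spread, whereas your proposal keeps the adaptive round structure and consequently cannot deliver the marginal bound it needs.
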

\paragraph{Technical overview: Pair tester for local properties.} 
Our main technical contribution here shows that a simple and generic pair tester (see Algorithm~\ref{alg:pair_tester_local_properties} and~\Cref{lem:pair_tester_offline}), which queries $f$ on pairs of the form $(x,x+2^i) \in [n]^2$, 
works for
all local properties of sequences in the \emph{offline} property testing model, with query complexity of $O\big( \frac{\log (\eps n)}{\eps} \big)$. This matches known lower bounds on the query complexity of sortedness \cite{ChSe14, Belovs18}.
Our upper bound is the same as that obtained in \cite{BE19}, but the new (pair) tester is simpler and has the additional feature of being resilient to online manipulations. 
Notably, there exist pair testers which obtain this query complexity for specific classes of local properties;
see, e.g., the work of Chakrabarty, Dixit, Jha, and Seshadhri \cite{ChakrabartyDJS17} on bounded derivative properties. Our work generalizes this result to all local properties of sequences. The proof proposes and analyzes a randomly-shifted variant of Ben-Eliezer's generic tester for local properties \cite{BE19}, and shows that this more structured tester can be simulated by a pair tester.

We show that the pair tester is online erasure-resilient (and corruption-resilient) in a strong sense: 
with good probability it never queries previously manipulated
elements. The analysis distinguishes between two types of erasures: ones that happen after the first element of a pair is queried, but before the second element; and all other erasures. Roughly speaking, the sharp distinction in the erasure thresholds between batch size $b=1$ and $b=2$ exists since the first type of erasures, which is only available to the adversary when $b=1$, is substantially more effective than the second type in hiding information.

\subsection{Related Work on  Erasures and Corruptions in Property Testing}\label{sec:related-work}
Erasure-resilient testing was first investigated by Dixit et al.~\cite{DixitRTV18} in an offline model.
In the model of Dixit et al., also studied in~\cite{RV18,RRV19,BenFLR20,PallavoorRW22,LPRV21,NV20},
 the adversary performs all erasures to the function before the execution of the algorithm. 

As discussed, the online testing model was defined in \cite{KalemajRV22}. 
In addition to the results already mentioned,
\cite{KalemajRV22} gave an online $\eps$-tester for quadraticity of functions $f:\{0,1\}^n\to\{0,1\}$ that has query complexity $O(1/\eps)$ for constant erasure rate $t.$ The dependence on $t$ in the query complexity was doubly exponential, and it was left open to obtain a quadraticity tester that can deal with corruptions. 
In a concurrent work, 
 Minzer and Zheng~\cite{MinZ} improve the quadraticity tester and vastly generalize it to deal with all low-degree properties $\cP_d$ over general fields $\F_q$. Their tester works in the presence of
 $t$-online fixed-rate erasure adversaries
 and makes $O\left(\frac 1 \eps \log^{3d+3}\frac t \eps\right)$ queries when $q$ is a prime and 
 $q^{O(1)}\cdot O\left(\frac 1 \eps \log^{3d+3q}\frac t \eps\right)$ queries when $q$ is a prime power.

\paragraph{Testing in dynamic environments.}
Another related line of work is on property testing in dynamic environments (see, e.g., \cite{goldreich2017learning, NakarR21}),
which considers settings where the tested object undergoes changes independent of the actions taken by the tester. 
The (adversarial) online testing model extends the study of dynamic settings to regimes where the data in the object, or the access to the data, may change continuously in response to the actions of the tester.

\subsection{Connection to Maker-Breaker Games}\label{sec:intro-maker-breaker-games}
Positional games are a central and widely investigated topic in the modern combinatorics literature; see, e.g., the standard textbooks on this topic \cite{Beck08,HefetzKSS14}. 
Property testing in the online erasure model is closely related to positional games, and in particular to the most prominent and well studied example of positional games: \emph{Maker-Breaker games}. Even  though Kalemaj et al.\ described their quadraticity tester as a game, they did not discuss the connection to the Maker-Breaker literature. 

A Maker-Breaker game is defined by a finite set $X$ of board elements and a family $\cW\subseteq 2^X$ of winning sets. In an $(s : t)$ Maker-Breaker game, two players, called Maker and Breaker, take turns in claiming previously unclaimed elements of $X$. On each turn, Maker claims $s$ board elements, whereas Breaker claims $t$ elements. Maker wins the game if she manages to claim all elements of some winning set; otherwise, Breaker wins.

In online testing, the algorithm plays the role of Maker and the adversary is Breaker. The set $X$ is the domain of the function, and the winning sets are witnesses, i.e., tuples of points that demonstrate that the function does not have the property. A big complication is that the tester does not know in advance which sets are in $\cW$. A prerequisite for designing an online tester is being able to identify the general structure of the sets in $\cW$ and a winning strategy for Maker. For example, Kalemaj et al.\ build their tester for quadraticity by identifying a winning strategy for a game where $\cW$ consists of ``cubes'' of the form $(x,y,z,x+y,x+z,y+z,x+y+z)$. 
Our low-degree tester (for the special case of quadraticity) uses more intricate winning sets; see the discussion of patterns in \Cref{sec:low-degree}. 

Note that the original online model corresponds to $(1:t)$ Maker-Breaker games, whereas the version with batches of size $b$ corresponds to general $(b:t)$ Maker-Breaker games. This provides additional motivation to study batches.  Going in the other direction, we hope that online testing inspires new research on Maker-Breaker games. A lot of the current literature on positional games focuses on the case where the board is a complete graph and the winning sets are graph-related (e.g., cliques). Examples from online property testing may motivate new research on Maker-Breaker games with emphasis on other types of boards, such as the hypercube.

\section{Preliminaries}
\paragraph{Notation.} 
We use $[n]$ to represent $\set{1,2,\dots,n}$ and $\log$ to denote logarithm base 2. 
Let $\cP_d$ be the class of Boolean functions $f:\set{0,1}^n\to\set{0,1}$ of degree at most $d$ over $\F_2$.

\subsection{Online Testing}\label{sec:model-definitions}
We model access to the input with a sequence $\set{\oracle{i}}_{i\in {\mathbb N}}$ of oracles, where $\oracle{i}$  is used to answer the $i$-th query (or, more generally, the $i$-th batch of queries). Oracle $\oracle{1}$ gives access to the original input (e.g., when the input is a function $f$, we have $\oracle{1} \equiv f$). Subsequent oracles are objects of the same type as the input (e.g., functions with the same domain and range). Each such oracle is obtained by the adversary by modifying the previous oracle to include a growing number of erasures/corruptions as $i$ increases.
We use $\Dist(\oracle{},\oracle{}')$ for the Hamming distance between the two oracles (i.e., the number of queries for which they give different answers).
We let $t \in \R_{\geq 0}$  denote the number of \emph{erasures} (or \emph{corruptions}) \emph{per query} (or a batch of queries). 

\begin{definition}[Fixed-rate and budget-managing adversaries]\label{def:fixed-rate_budget-managing}
     Fix a parameter $t>0$. A sequence\footnote{Our algorithms only access a finite subsequence of this sequence.} of oracles $\oracle{} =  \set{\oracle{i}}_{i\in {\mathbb N}}$
     is induced by a \emph{$t$-online fixed-rate adversary} if $\oracle{1}$ is equal to the input and, for all $i\in \N$, 
     \[
        Dist(\oracle{i}, \oracle{i+1}) \leq \floor{(i+1)\cdot t} - \floor{i\cdot t}.
     \]
    A sequence of oracles $\oracle{} =  \set{\oracle{i}}_{i\in {\mathbb N}}$
     is induced by a \emph{$t$-online budget-managing adversary} if $\oracle{1}$ is equal to the input and, for all $i \in \N$,
     \[
        \Dist(\oracle{1}, \oracle{i+1}) \leq i\cdot t.
     \]
\end{definition}
     
Note that a budget-managing adversary has more power: 
an oracle sequence that can be induced by a $t$-online fixed-rate adversary can also be induced by a $t$-online budget managing adversary.

\begin{definition}[Batch-$b$ adversary] Fix $b\in\N.$
A \emph{batch-$b$ adversary} uses oracle $\oracle{1}$ to answer the first $b$ queries and, more generally, oracle $\oracle{i}$ to answer the $i$-th batch of $b$ queries. By default (if $b$ is not specified), we assume $b=1.$
\end{definition}
 Our complexity measure is always the total number of queries, regardless of the batch size $b$. Finally, we consider two types of manipulations to the input: erasures and corruptions.
\begin{definition}[Erasure and corruption adversaries]
Let $\perp$ represent the erasure symbol. A sequence of oracles $\oracle{} =  \set{\oracle{i}}_{i\in {\mathbb N}}$ is induced by an \emph{erasure adversary} if for all $i\in\N$ and data points~$x$,
$$\oracle{i+1}(x) \in \set{\oracle{i}(x), \perp} .$$
A \emph{corruption adversary} can change answers to anything in the range, i.e., $\oracle{i}$ can be any valid input for the computational task at hand.
\end{definition}

A property $\cP$ denotes a set of objects (typically, a set of functions). Intuitively, it represents the set of positive instances for the testing problem. The (relative Hamming) distance between a function $f$ and a property $\cP$, denoted $dist(f,\cP)$, is the smallest fraction of function values of $f$ that must be changed to obtain a function in $\cP.$ Given a proximity parameter $\eps\in(0,1)$, we say that $f$ is \emph{$\eps$-far} from $\cP$ if $dist(f,\cP)\geq\eps.$
 An online tester is given a proximity parameter $\eps$ and, in addition, one or two parameters that characterize its adversary: the rate of erasures (or corruptions) $t$ and (optionally) the batch size $b.$
\begin{definition}[Online $\eps$-tester]\label{def:online_tester} Fix $\eps\in(0,1).$
    An online $\eps$-tester $\cT$ for a property $\mathcal{P}$ that works in the presence of a specified adversary (e.g., $t$-online batch-$b$ erasure budget-managing adversary) is given access to to an input function $f$ via a sequence of oracles 
    $\oracle{} =  \set{\oracle{i}}_{i\in {\mathbb N}}$
     induced by that type of adversary.
    For all adversarial strategies of the specified type,
    \begin{enumerate}
        \item if $f \in \mathcal{P}$, then $\cT$ accepts with probability at least 2/3, and
        
        \item if $f$ is $\eps$-far from $\cP,$
        then $\cT$ rejects with probability at least 2/3, 
    \end{enumerate}
    where the probability is taken over the random coins of the tester. If $\cT$ works in the presence of an erasure (resp., corruption) adversary, we refer to it as an online-erasure-resilient (resp., online-corruption-resilient) tester.
    
    If $\cT$ always accepts all functions $f\in\cP$, then it has \emph{1-sided error.} If $\mathcal{T}$ chooses its queries in advance, before observing any outputs from the oracle, then it is \emph{nonadaptive}.
\end{definition}

To ease notation, we use $\oracle{}(x)$ for the oracle's answer to query  $x$  (omitting the timestamp $i$). 
If $x$ was queried multiple times, $\oracle{}(x)$ denotes the first answer given by the oracle.
 
\section{Linearity Testing}\label{sec:linearity}

This section is dedicated to proving \Cref{thm:linearity_tester}. We first analyze an \emph{offline} tester which we later simulate as  part of our main algorithm.
Since it uses Fourier analysis, here we use the standard notation switch for the value returned by a Boolean function (see, e.g., \cite{OD1}), where true is denoted by -1 and false is denoted by 1. As a result, the input function is of the form $f\colon\{0,1\}^n\to\{-1,1\}$, and it is linear if  $f(x_1) \cdot f(x_2) = f(x_1\xor x_2)$ 
 for all $x_1,x_2\in\cube{n}.$
\subsection{An Offline
Linearity Test}
\label{sec:linearity_generalized_patterns}
We first define $\blr{k}$, introduced by~\cite{KalemajRV22}, which naturally extends the BLR test.
\begin{algorithm}
 \caption{$\blr{k}$
 }
 \label{alg:krv-linearity-test}
{
    \Given {Even integer parameter $k\geq 2$ and query access to a function $f:\{0,1\}^n\to\{-1, 1\}$ 
    }
     Query $k$ points $x_1,\dots, x_k \in \cube{n}$ chosen uniformly at random (with replacement).\\
     Query point $y = \bxor_{i\in[k]} x_i$.\\
     {\bf Accept} if $f(y) = \prod_{i\in[k]} f(x_i)$ (equivalently, if $f(y) \cdot \prod_{i\in[k]} f(x_i) = 1$); otherwise, {\bf reject}.
    }
\end{algorithm}    

The test always accepts all linear functions, as can be shown by induction on $k$. The next lemma demonstrates that $\blr{k}$ rejects functions that are $\eps$-far from linear with sufficient probability.  This is a strengthening
of \cite[Theorem 1.2]{KalemajRV22}, which bounded 
the rejection probability by~$\eps.$
\begin{lemma}     \label{lem:krv_k_test}
    If $f$ is $\eps$-far from  linear, and $k\geq 2$ is even, then
    \begin{align*}
        \Prob{}{\blr{k}(f)  \text{\em\ rejects}}
        \geq \frac{1 - (1-2\eps)^{k-1}}{2} 
        \geq \min\Big\{\frac 1 4 \, ,\, \frac{k \eps}{2}\Big\}.
    \end{align*}   
\end{lemma}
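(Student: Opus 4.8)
The plan is a Fourier-analytic argument that generalizes the classical BLR soundness analysis from $k=2$ to arbitrary even $k$. I will work in the $\pm 1$ convention, where the linear functions are exactly the characters $\chi_\alpha(x)=(-1)^{\langle\alpha,x\rangle}$ and $\dist(f,\chi_\alpha)=\tfrac{1-\fc f\alpha}{2}$; thus $f$ being $\eps$-far from linear is equivalent to $\fc f\alpha\le 1-2\eps$ for every $\alpha\in\cube n$.

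First I would compute the rejection probability exactly. Since $f(y)\prod_{i\in[k]}f(x_i)\in\{-1,1\}$, we have $\Prob{}{\blr k(f)\text{ accepts}}=\tfrac12+\tfrac12\,\E_{x_1,\dots,x_k}\!\big[f(\bxor_{i\in[k]}x_i)\prod_{i\in[k]}f(x_i)\big]$. Expanding each copy of $f$ in the Fourier basis and using that $\chi_\alpha$ is a character (so $\chi_\alpha(\bxor_i x_i)=\prod_i\chi_\alpha(x_i)$), the expectation over independent uniform $x_1,\dots,x_k$ kills every Fourier term except those in which all $k+1$ frequencies coincide, leaving $\sum_\alpha\fc f\alpha^{\,k+1}$. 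Hence
\[
\Prob{}{\blr k(f)\text{ rejects}}=\frac{1-\sum_{\alpha}\fc f\alpha^{\,k+1}}{2}.
\]
For the first claimed inequality, note $k+1$ is odd because $k$ is even, so for every $\alpha$ the bound $\fc f\alpha\le 1-2\eps$ gives $\fc f\alpha^{\,k+1}\le (1-2\eps)^{k-1}\fc f\alpha^{\,2}$ (a short case check on signs; when $\fc f\alpha<0$ the left side is already negative). Summing over $\alpha$ and applying Parseval ($\sum_\alpha\fc f\alpha^{\,2}=1$, as $f$ is $\pm1$-valued) yields $\sum_\alpha\fc f\alpha^{\,k+1}\le (1-2\eps)^{k-1}$, hence $\Prob{}{\blr k(f)\text{ rejects}}\ge\tfrac{1-(1-2\eps)^{k-1}}{2}$.

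For the second inequality it then suffices to prove $1-(1-2\eps)^{k-1}\ge\min\{\tfrac12,k\eps\}$. The case $\eps>\tfrac12$ is immediate (the left side exceeds $1$), so I would assume $\eps\le\tfrac12$ and set $x=2\eps\in[0,1]$, $m=k-1\ge1$, and examine $g(x):=1-(1-x)^m-\tfrac{(m+1)x}{2}$. This $g$ is concave on $[0,1)$ (since $g''(x)=-m(m-1)(1-x)^{m-2}\le0$), satisfies $g(0)=0$, and at $x^\ast:=\tfrac1{m+1}$ satisfies $g(x^\ast)=\tfrac12-\big(\tfrac m{m+1}\big)^m\ge0$ because $\big(\tfrac m{m+1}\big)^m\le\tfrac12$ for all $m\ge1$. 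Concavity then forces $g\ge0$ on $[0,x^\ast]$, i.e.\ $1-(1-x)^m\ge\tfrac{(m+1)x}{2}=k\eps$ there; and for $x>x^\ast$ the quantity $1-(1-x)^m$ is increasing and already equals $\tfrac{(m+1)x^\ast}{2}+g(x^\ast)\ge\tfrac12$ at $x^\ast$. Dividing by $2$ gives the claim.

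The Fourier calculation is routine; the real work is the elementary inequality in the last step, where the linear coefficient must come out to exactly $k\eps$ (tight already at $k=2$, where equality holds). A naive estimate via $(1-x)^m\le e^{-mx}$ only produces a coefficient $k-1$ in place of $k$, so closing that gap — via the concavity argument above, or an equivalent two-term expansion of $(1-x)^m$ — is the main obstacle.
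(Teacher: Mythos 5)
Your proof is correct and follows essentially the same route as the paper: the identity $\Pr[\text{reject}]=\tfrac12-\tfrac12\sum_\alpha\fc f\alpha^{\,k+1}$, the term-by-term bound $\fc f\alpha^{\,k+1}\le(1-2\eps)^{k-1}\fc f\alpha^{\,2}$ combined with Parseval (the paper factors out $\max_\alpha\fc f\alpha^{\,k-1}$, which is the same estimate), and then an elementary numeric inequality. The only cosmetic difference is in that final numeric step, where you use concavity of $g(x)=1-(1-x)^m-\tfrac{(m+1)x}2$ on $[0,1/(m+1)]$ whereas the paper runs an induction on $k$ up to the threshold $k_0$ where $(1-2\eps)^{k_0-1}=\tfrac12$; both are sound, and your observation that the split point $x^\ast=1/(m+1)$ coincides exactly with the $k\eps=\tfrac12$ crossover is the reason the concavity version closes cleanly.
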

 Since $k \geq 2$ and $\eps\in(0,1/2]$, we get $(1-2\eps)^{k-1} \leq 1-2\eps$, with the first inequality in the lemma yielding the bound proved in~\cite{KalemajRV22}.
However, as it is imperative to use $k > 2$,  the strengthened bound is crucial in obtaining an optimal online-erasure-resilient tester.
In addition, the stronger guarantee we prove suffices to obtain new optimal offline linearity testers by repeating Algorithm~\ref{alg:krv-linearity-test}; this can be done with any value of $k$ from $3$ to $O(1/\eps)$.

\begin{proof}[Proof of \Cref{lem:krv_k_test}.]
The key tool used in the proof is Fourier analysis (see, e.g., \cite{OD1} for an overview of the technique and standard facts). We start by giving a couple of standard definitions.

The \emph{character} functions $\chi_S \colon \{0, 1\}^n \to \{-1,1\}$, defined as $\chi_S = (-1)^{\sum_{i \in S} x_i}$ for $S \subseteq [n]$, form an orthonormal basis for the space of  all real-valued functions on $\{0,1\}^n$ equipped with the inner-product 
$\langle g, h \rangle = \underset{x\sim \{0,1\}^n}{\E}[g(x)h(x)],$
where $g, h \colon \{0,1\}^n \to \R$. 
For $g \colon \{0,1\}^d \to \R$ and $S \subseteq [n]$, the Fourier coefficient of $g$ on $S$ is 
$
    \widehat{g}(S) = \langle g, \chi_S \rangle = \underset{x\sim \{0,1\}^n}{\E}[g(x)\chi_S(x)].
$

Now consider a function $f:\{0,1\}^n \to \{-1,1\}$ that is $\eps$-far from linear. It is well known that the distance from $f$ to linearity is $\frac{1}{2} - \frac{1}{2}\max_{S\subseteq [n]}\widehat{f}(S).$ Since the distance is at least $\eps$, we get
\begin{align}\label{eq:distance-to-linearity}
\max_{S\subseteq [n]}\widehat{f}(S)\leq 1-2 \eps.    
\end{align}
 As shown in \cite[Equation (4)]{KalemajRV22},
\begin{align}
       \Prob{}{\blr{k}(f)  \text{\em\ rejects}}
      &= \underset{x_1, \dots, x_k \sim \{0,1\}^n}{\E} \Big[\frac{1}{2} - \frac{1}{2}
     \prod_{i\in[k]} f(x_i) \cdot f\left(\xor_{i=1}^{k} x_i\right)\Big] \nonumber \\
      &=  \frac{1}{2} - \frac{1}{2} \sum_{S \subseteq [n]} \widehat{f}(S)^{k+1}. \label{eq:KVR}
  \end{align}
 Next, we bound the sum in \eqref{eq:KVR} in terms of $\max_{S\subseteq [n]}\widehat{f}(S)$ and then apply \eqref{eq:distance-to-linearity}:
    \begin{align*}
  \sum_{S\subseteq [n]} \fc{f}{S}^{k+1}
        \leq \max_{S\subseteq [n]} \fc{f}{S}^{k-1} \cdot \sum_{S\subseteq [n]} \fc{f}{S}^{2}
        &= \max_{S\subseteq [n]} \fc{f}{S}^{k-1}
        \leq (1-2\eps)^{k-1}.
    \end{align*}
 Substituting this expression into \eqref{eq:KVR}, we obtain the first inequality in \Cref{lem:krv_k_test}.
To obtain the second inequality in \Cref{lem:krv_k_test}, we show that $1 - (1-2\eps)^{k-1}$ is at least $1/2$ for large values of $k$ and at least $k \eps$ for small values of $k \geq 2$. 
     The bound holds trivially for $\eps=1/2$, so from now on assume  $\eps<1/2$.
    Let $k_0 \in \R$ be such that $(1-2\eps)^{k_0 - 1}=1/2$. If $k \geq k_0$, we have     
    \[
        1 - (1-2\eps)^{k-1}\geq 1 - (1-2\eps)^{k_0-1} = 1/ 2 .
    \]   
    For the case $k\in [2,k_0)$, we prove by induction on $k$ that $1 - (1-2\eps)^{k-1}\geq k\eps$.
    For $k=2$, equality holds. For each $k\in [3, k_0)$, 
   we have
    \[
        1 - (1-2\eps)^{k-1}
        = \big(1 - (1-2\eps)^{k-2}\big) + 2\eps (1-2\eps)^{k-2}
        \geq (k-1)\eps+2\eps \cdot (1-2\eps)^{k_0 - 1}
        = k \eps,
    \]
    where for the inequality we bound the first term by the inductive hypothesis on $k-1$ and the second term using $k-2\leq k_0-1$.
 Thus, in both cases, at least one of the expressions in the minimum is a lower bound.
\end{proof}

\subsection{Online-Erasure-Resilient Linearity Tester}
\label{sec:linearity-tester}
Our algorithm (Algorithm~\ref{alg:linearity-main}) is based on multiple simulations of $\blr{k}$. Each simulation starts by querying a reserve of $m=2k$ initial points to keep many possibilities open for the final $\blr k$ query.
Specifically,
a reserve of $2k$ random points creates roughly $2^{2k}$ different options for the final $\blr k$ query\footnote{In \cite{KalemajRV22}, the reserve is used to simulate
$\blr k$ with different values of $k$. 
Fixing the value of $k$ to half of the reserve size eases our analysis (allowing us to use \Cref{lem:krv_k_test} with the same $k$)  while providing many options.}.
We set $m$ to about $\log t$ to ensure that a significant fraction of options remains open before the final query of the \blr{k} simulation is made. There is also a small additional dependence on $\eps$ to overcome the fact that the overall number of iterations depends on $\eps$ and, as a result, for some settings of parameters, the probability of error in each iteration has to be proportional to $\eps.$
In principle, setting $m = \Theta(\log t +1/\eps)$ and the number of iterations, $r$, to $\Theta(1)$ is enough for the desired query complexity, but it significantly restricts the range of parameters (e.g., it only works when
$\eps \geq 2/n$). 
 Our choice of parameters is more subtle: besides ensuring optimal query complexity, it enlarges the range of $\eps$ and $t$ (in \Cref{thm:linearity_tester}) for which Algorithm~\ref{alg:linearity-main} is applicable. To do this, we set $m$ to a smaller value: $\log t +o(\log t) + \Theta(\log(1/\eps))$, and the number of iterations, $r$, depends on $\eps$ and $m$.
\begin{algorithm}
 \caption{Online-Erasure-Resilient Linearity Tester}\label{alg:linearity-main}
{
    \Given {Parameters $\eps\in(0,1/2],t\in\N$; query access to $f$ via $t$-erasure oracle sequence $\oracle{}$ 
    }
    $t\gets\max\{t,2\}$
    \Comment{If $t< 2$, replace it with $t=2.$
    }\\
    $m \gets 4\Big\lceil{\frac{1}{4}\big(14+\log t+\log\log^2 t+\log \frac{1}{\eps^2}}\big)\Big\rceil$, $\alpha\gets\min\big\{\frac{1}{4},\frac{m\eps}{4}\big\}$ and,
    $r \gets \ceil{\frac{5}{4\alpha}}$.\\
     \RepTimes{$r$}  {\label{step:linearity-repeat}
         Sample $X =(x_1,\dots,x_{m})\in(\cube{n})^m$ uniformly 
         at random.\\
         Query $f$ at points $x_1,\dots, x_m$.\\
         Query $f$ at point $y = \xor_{j\in S} x_j$, where $S$ is a uniformly random subset of $[m]$ of size $\frac m2$.\\ 
         \If{$\oracle{}(y) \cdot \Pi_{j\in S} \oracle{}\left(x_j\right) = -1$}{\textbf{Reject}
         \Comment{This implies no erasures in this iteration.}
         }
        
    }
        \textbf{Accept}
  }  
 \end{algorithm}
Crucially, in each iteration, the distribution over the queries made by Algorithm~\ref{alg:linearity-main} is identical to that in the $\blr{k}$ test. Indeed, as $S$ and $X$ are independent, we could draw $S$ first. Then, for each choice of $S$, the marginal distribution of $\set{x_j}_{j\in S}$ is uniform over $\left(\cube{n}\right)^{k}$. Finally, the last query is  $y = \bxor_{j\in S} x_j$, resulting in 
the same distribution on the $k+1$ queries as in $\blr{k}$. 

The second ingredient we need is the following lemma.
\begin{restatable}{lemma}
    {LinearityNoErasuresLemma}\label{claim:linearity_no_erasures}
    For all $m\geq 10$, the probability that one specific iteration of the loop in Line~\ref{step:linearity-repeat} of Algorithm~\ref{alg:linearity-main} queries an erased point is at most $3\alpha/25$. 
\end{restatable}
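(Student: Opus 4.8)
The plan is to bound separately the probability that one of the first $m$ queries $x_1,\dots,x_m$ hits an erased point and the probability that the last query $y=\bigoplus_{j\in S}x_j$ is erased, and then union bound. Two facts are used throughout: (i) Algorithm~\ref{alg:linearity-main} makes at most $r(m+1)$ queries in total, so at any moment the $t$-online budget-managing adversary has erased fewer than $R:=r(m+1)t$ of the $2^n$ points; and (ii) the choice of $m$ gives $2^m\ge 2^{14}\,t\log^2 t/\eps^{2}$, while the hypothesis $t\log^2 t\le c\,\eps^{2.5}2^{n/2}$ of \Cref{thm:linearity_tester} makes $2^n$ far larger than $R$ or $2^m$ (for the constant $c$ small enough). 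For the first $m$ queries the bound is easy: each $x_i$ is uniform on $\cube{n}$ and independent of the set of erased points present when it is queried (that set depends only on answers to earlier queries), so $\Prob{}{x_i\text{ erased}}\le R/2^n$, and a union bound over $i\in[m]$ gives probability at most $mR/2^n$; since $m=O(\log(t/\eps))$ and $2^n\gtrsim (t\log^2 t)^2/\eps^5$, this is $o(\alpha)$.

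The heart of the argument is the final query $y$. I would condition on $x_1,\dots,x_m$ and on the set $E$ of erased points present just before $y$ is queried; the key observation is that $E$ does \emph{not} depend on $S$, since within this iteration $S$ is sampled only to form $y$ and is never revealed to the adversary. Conditioned on this, $y$ is distributed as $\bigoplus_{j\in S}x_j$ for a uniformly random $S\subseteq[m]$ of size $m/2$. If $x_1,\dots,x_m$ are linearly independent over $\F_2$, distinct such $S$ give distinct values, so $y$ is uniform over a set of $\binom{m}{m/2}$ points and $\Prob{S}{y\in E\mid x_1,\dots,x_m,E}\le |E|/\binom{m}{m/2}\le R/\binom{m}{m/2}$. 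The $x_i$ fail to be linearly independent only if some $x_i$ lies in the span of the earlier ones, an event of probability at most $m2^{m-n}$; since $m\ll n$ this is again $o(\alpha)$. Hence $\Prob{}{y\text{ erased}}\le R/\binom{m}{m/2}+o(\alpha)$.

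It remains to verify $R/\binom{m}{m/2}\le \tfrac{3}{25}\alpha-o(\alpha)$. Using $\binom{m}{m/2}\ge 2^m/\sqrt{2m}$ and $2^m\ge 2^{14}t\log^2 t/\eps^2$, one gets $R/\binom{m}{m/2}\le r(m+1)\sqrt{2m}\,\eps^2/(2^{14}\log^2 t)$, and I would split on the value of $\alpha$. If $\alpha=m\eps/4$, then $r=\lceil5/(m\eps)\rceil\le 6/(m\eps)$ and the right-hand side is at most $6(m+1)\sqrt{2m}\,\eps/(m\,2^{14}\log^2 t)$, which is comfortably below $\tfrac{3}{25}\cdot\tfrac{m\eps}{4}$ because $(m+1)\sqrt{2m}/m^{2}<1$ and $\log^2 t\ge1$. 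If $\alpha=1/4$, then $m\eps\ge1$ and $r=5$; in this regime $\log(1/\eps)\le\log m$, so the definition of $m$ yields $m-2\log m\le 18+\log t+2\log\log t$, forcing $m=O(\log t)$ and $(m+1)\sqrt{2m}=O\big((\log t)^{3/2}\big)$, so with $\eps\le1/2$ the right-hand side is $O\big((\log t)^{-1/2}\big)$, a small explicit constant that stays under $\tfrac{3}{25}\cdot\tfrac14$ after tracking constants.

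I expect this last computation to be the main obstacle. The $\log\log^2 t$ term and the additive constant in the definition of $m$ (equivalently, the extra $\log^2 t$ factor in the lower bound on $2^m$) are exactly what absorbs the $\sqrt m\cdot m\approx(\log t)^{3/2}$ loss coming from the $\binom{m}{m/2}^{-1}$ factor together with the $m$-point ``reserve'', and obtaining the explicit constant $3/25$ requires careful bookkeeping of the ceilings in the definitions of $m$ and $r$, as well as choosing the constant $c$ in \Cref{thm:linearity_tester} small enough that the several $o(\alpha)$ terms remain within the leftover slack.
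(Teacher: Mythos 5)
Your proof is correct and follows essentially the same approach as the paper: decompose the bad event into (i) an erasure among the $m$ reserve points $x_1,\dots,x_m$, (ii) a ``collision'' event ensuring $y$ has $\binom{m}{m/2}$ distinct possible values, and (iii) an erasure on the final query $y$, then case-split on $\alpha$ exactly as you do. The only cosmetic differences are that you phrase event (ii) as linear dependence of the $x_i$ whereas the paper (in its Claim on $y_{T_1}=y_{T_2}$) phrases it as two subsets of $[m]$ having the same XOR --- these are the same event --- and the paper adds the bounds on (i) and (iii) into a single quantity $\frac{trm^2}{2^{m+1}}$ before the case analysis rather than absorbing (i) as a separate lower-order remainder.
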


We next \short{}{restate and }prove the main theorem of this section,
deferring the proof of \Cref{claim:linearity_no_erasures} to the next section.
\short{}{
\OptimalOnlineLinearityTesterThm*}
\begin{proof}[\short{Proof of \Cref{thm:linearity_tester}}{Proof}]
We first focus on erasures, and show that Algorithm~\ref{alg:linearity-main} satisfies the conditions of the theorem. Clearly, it always accepts all linear functions.
Now, fix an adversarial (budget-managing) strategy and suppose that the input function is $\eps$-far from linear. By \Cref{lem:krv_k_test} and since $k=m/2$ in Algorithm~\ref{alg:linearity-main} is even, the probability that one iteration of the loop in Step~\ref{step:linearity-repeat}  samples a witness of nonlinearity is at least $\alpha=\min\set{1/4,m\eps/4}.$ 

  By \Cref{claim:linearity_no_erasures}, the probability that an erasure is seen in a specific iteration is at most $\frac{3\alpha}{25}$.
     By a union bound, the probability of a single iteration seeing an erasure or not selecting a witness of nonlinearity is at most $1-\alpha + \frac{3\alpha}{25} = 1 - \frac{22\alpha}{25}$.
    Algorithm~\ref{alg:linearity-main} errs only if this occurs in all iterations. By independence of random choices in different iterations, the failure probability is at most
    \[
    \Big(1-\frac{22\alpha}{25}\Big)^r\leq 
        \Big(1-\frac{22\alpha}{25}\Big)^{\frac{5}{4\alpha}}
        \leq e^{-1.1}
        \leq \frac{1}{3},
    \]
    where we used that $r=\ceil{\frac{5}{4\alpha}}\geq \frac{5}{4\alpha}$ and 
    $1-x\leq e^{-x}$ for all $x$.
  The query complexity 
  is 
  at most
  \[
  r(m+1)
  \leq \left(\frac{5}{4\alpha}+1\right)(m+1)
  \leq \frac{4m}{\alpha}
  = \max\set{16m, \frac{16}{\eps}}
  \leq \max\set{640 \cdot \log t, \frac{640}{\eps}},
  \] 
  where 
  the last inequality is due to the fact that $m\leq 20(\log t+1/\eps)$.

To show that 
Algorithm~\ref{alg:linearity-main} is also corruption-resilient, one may apply  \cite[Lemma 1.8]{KalemajRV22},
which essentially says an error-resilient algorithm that has probability at most $1/3$ to either err or see a manipulation, is also corruption-resilient.
 For the soundness, our analysis holds since it suffices to have one iteration that finds a witness without seeing manipulations. For completeness, note that the algorithm can only err if it has seen a manipulation, and the probability of seeing a manipulation at any iteration is at most $3\alpha/25$ by \Cref{claim:linearity_no_erasures}. Using a union bound, the overall probability of seeing any manipulated entry during the entire execution is at most  
\[\frac{3\alpha}{25} \cdot r \leq \frac{3\alpha}{25}\left(\frac{5}{4\alpha}+1\right)=\frac 3{20}+\frac{3\alpha}{25}\leq \frac 1 3 .\qedhere\]
\end{proof}

\begin{remark}
Our tester is applicable for $t \leq \poly(\eps) \cdot 2^{n/2}$.
On the other hand, it can be easily shown that for $t = \Omega\left(\eps 2^{n}\right)$ testing is impossible.  This follows from the fact that there exists some constant $c$ such that $c/\eps$ queries are not enough, and by then the adversary can already manipulate all other entries.
We leave it as an open question to understand for which values of $t$ linearity can be tested in the online setting. Such questions are investigated for other properties in \Cref{sec:local}.
\end{remark}
\subsection{Probability of Seeing an Erasure}
\label{sec:earasure_probability}
Recall that \Cref{lem:krv_k_test} shows that, assuming $f$ is $\eps$-far, the probability of spotting a witness in a single iteration is at least $\alpha=\min\{\frac{1}{4},\frac{m\eps}{4}\}$.
In this section we prove the probability of querying an erasure is at most $3\alpha/25$ in every iteration, as stated in \Cref{claim:linearity_no_erasures}.
We start with an auxiliary claim, similar to an argument that appears in the proof of~\cite[Lemma 2.8]{KalemajRV22}.
\begin{claim}
    \label{claim:no_collisions_for_y}
    Fix an iteration of the loop in Step~\ref{step:linearity-repeat} of Algorithm~\ref{alg:linearity-main}. 
    For all $T\subseteq[m]$, 
    let $y_{_T} = \bxor_{j\in T} x_j$.
    Then, the probability there exist two subsets $T_1\neq T_2$ of the set $[m]$ with $y_{_{T_1}} = y_{_{T_2}}$ is small:
    \[
        \Prob{X}{\exists T_1\neq T_2 \text{\rm\ such that\ } y_{_{T_1}} = y_{_{T_2}}} \leq \frac{\alpha}{25} .
    \]
\end{claim}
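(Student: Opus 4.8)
The plan is a union bound over all pairs of distinct subsets $T_1, T_2 \subseteq [m]$. Fix such a pair. The event $y_{T_1} = y_{T_2}$ is equivalent to $\bigoplus_{j \in T_1 \triangle T_2} x_j = 0$, where $T_1 \triangle T_2$ is the symmetric difference, which is nonempty since $T_1 \neq T_2$. For any fixed nonempty set $U \subseteq [m]$, if the $x_j$ are drawn uniformly and independently from $\{0,1\}^n$, then $\bigoplus_{j \in U} x_j$ is uniform on $\{0,1\}^n$ (condition on all but one coordinate of $U$ and use that XOR with a uniform independent string is uniform); hence $\Pr_X[\bigoplus_{j \in U} x_j = 0] = 2^{-n}$.

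Next I would count pairs. The number of ordered pairs $(T_1, T_2)$ with $T_1 \neq T_2$ is at most $2^{2m}$, so by a union bound
\[
\Prob{X}{\exists T_1 \neq T_2 \text{ such that } y_{T_1} = y_{T_2}} \leq 2^{2m} \cdot 2^{-n} = 2^{2m - n}.
\]
It then remains to check that the parameter settings in Algorithm~\ref{alg:linearity-main} force $2^{2m-n} \leq \alpha/25$. Recall $m = 4\lceil \tfrac14(14 + \log t + \log\log^2 t + \log\tfrac1{\eps^2})\rceil$, so $m \leq 14 + \log t + \log\log^2 t + \log\tfrac1{\eps^2} + 4$, giving roughly $2^{2m} \leq 2^{36} \cdot t^2 \log^4 t \cdot \eps^{-4}$ (up to the rounding slack, which I would track carefully). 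Using the hypothesis of \Cref{thm:linearity_tester} that $t \log^2 t \leq c\,\eps^{2.5} 2^{n/2}$, i.e.\ $t^2 \log^4 t \leq c^2 \eps^5 2^n$, one gets $2^{2m - n} \leq 2^{36} c^2 \eps^5 \eps^{-4} = 2^{36} c^2 \eps$. Since $\alpha = \min\{\tfrac14, \tfrac{m\eps}{4}\} \geq \tfrac{m\eps}{4} \cdot \tfrac14 \geq \ldots$ — more precisely $\alpha \geq \tfrac{\eps}{2}$ because $m \geq 10$ forces $\tfrac{m\eps}{4} \geq \tfrac{5\eps}{2} \geq \tfrac{\eps}{4}$, and also $\tfrac14 \geq \tfrac{\eps}{4}$ since $\eps \leq 1$ — we have $\alpha \geq \eps/4$ (say), so it suffices that $2^{36}c^2\eps \leq \eps/100$, i.e.\ that the absolute constant $c$ in \Cref{thm:linearity_tester} is chosen small enough. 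I would therefore state at the outset that $c$ is taken sufficiently small and let this inequality pin it down.

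The only mildly delicate point is bookkeeping the ceiling in the definition of $m$ and the exact constants relating $2^{2m-n}$, $\eps$, and $\alpha$; the probabilistic content (uniformity of a nonempty XOR, union bound over $\leq 2^{2m}$ pairs) is routine. I expect the main obstacle — really the only obstacle — to be making sure the chain of constant manipulations genuinely closes with the same $c$ used elsewhere in the proof of \Cref{thm:linearity_tester}, rather than any conceptual difficulty.
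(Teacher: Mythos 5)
Your proposal is correct and follows essentially the same route as the paper: union bound over the $\leq 2^{2m}$ ordered pairs $(T_1,T_2)$, with each collision event having probability exactly $2^{-n}$ by the uniformity of $\bigoplus_{j\in T_1\triangle T_2}x_j$ (the paper phrases the same fact by fixing an index $\ell\in T_1\setminus T_2$ and integrating out $x_\ell$). One small bookkeeping remark: you weaken $\alpha\geq\eps/2$ to $\alpha\geq\eps/4$, which costs a factor of $2$ in the final constraint on $c$; keeping $\alpha\geq\eps/2$ (which you derive and then discard) makes your chain close with the paper's actual constant $c=2^{-21}$, since $2^{36}c^2=1/64\leq 1/50$.
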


\begin{proof}

We first bound
$m$ using its setting in Algorithm~\ref{alg:linearity-main} and the premise $t \cdot \log^2 t \leq 2^{-21}\cdot\eps^{2.5} 2^{n/2}$ from \Cref{thm:linearity_tester}:
\begin{align}\label{eq:linearity-bound-on-m}
        m
        \leq \log\left(\frac{2^{18} t  \log^2 t}{\eps^2}\right) 
        \leq \log\left(\frac{2^{-3}\eps^{2.5}2^{n/2}}{\eps^2}\right)
        \leq
\log\left(\sqrt{\eps \cdot 2^n/50}\right)
        = \frac{\log\left(\eps \cdot 2^n/50\right)}{2}.
    \end{align}
   Consider two distinct sets $T_1,T_2\subset [m]$. W.l.o.g.\ there exists an element $\ell\in T_1 \setminus T_2$. Fix all entries in $X$ besides $x_\ell$. The value of $y_{_{T_2}}$ is now fixed, but over the random choice of $x_\ell\in\cube{n}$, the vector $y_{_{T_1}}$ is uniform over $\cube{n}$. Thus, $\Pr_{x_{\ell}}[y_{_{T_1}} = y_{_{T_2}}]=2^{-n}$ and, consequently,
    \[
        \Pr_X[y_{_{T_1}} = y_{_{T_2}}]
        = \E\big[\Pr_{x_{\ell}}[y_{_{T_1}} = y_{_{T_2}}]\big]
        =\E[2^{-n}] = 2^{-n},
    \]
    where both expectations are over all entries in $X$ besides $x_\ell$, which are drawn independently from $x_{\ell}$.
    We use a union bound over all pairs of subsets $T_1$ and $T_2$,  and then apply \eqref{eq:linearity-bound-on-m}
    to get 
    \[
        \Prob{X}{\exists T_1\neq T_2 \text{\rm\ such that\ } y_{_{T_1}} = y_{_{T_2}}}
        \leq \frac{2^{2m}}{2^n}\leq \frac{\eps}{50}
        \leq \frac{\alpha}{25} 
        .
    \]
    The last inequality holds since $\eps\leq \min\{1/2,m\eps/2\}=2\alpha$ for all $m\geq 2$. 
\end{proof} 

We now upper bound the probability of seeing an erasure 
in one iteration.

\begin{proof}[Proof of \Cref{claim:linearity_no_erasures}]
 
The total number of erasures performed during the execution of the algorithm is at most $tr(m+1)$.
We define three bad events and give upper bounds on their probabilities.
    \paragraph{An erasure while querying $X$.} Let $B_1$ be the event that $\oracle{}(x_j) = \perp$ for some point $x_j$ sampled in this iteration, where $j\in[m]$. Each point $x_j$
   is sampled uniformly from $\cube{n}$, so the probability it is erased is at most $tr(m+1)/2^n$. By a union bound over all $m$ points, recalling $n\geq 2m$, we get 
    \begin{equation}
        \label{eq:bound_b1}
        \Pr_X[B_1]\leq \frac{tr(m+1)m}{2^{n}} 
        \leq \frac{tr(m+1)m}{2^{2m}}.
    \end{equation}
   
    \paragraph 
    {$X$ induces a bad distribution of $y$ points.}
    Let $B_2$ be the event that, in this iteration, there exist two different choices of $S$ leading to the same choice $y \in \cube{n}$. 
    By Claim~\ref{claim:no_collisions_for_y}, $\Pr[B_2]\leq\frac{\alpha}{25} $. 
    
    \paragraph{An erasure on query $y$.} Let $B_3$ be the event that $\oracle{}(y) = \perp$ for $y$ queried in this iteration.
    The adversary knows $X$ before $y$ is queried, but there are plenty of choices for $y$.
    Conditioned on $\overline{B_2}$, the distribution of $y$ 
    is uniform over  
    $\binom{m}{m/2}$ different choices. 
     We use 
    $\binom{m}{m/2} \geq 2^m/\sqrt{2m}$, to obtain
\begin{equation}
    \label{eq:bound_b3}
        \Pr[B_3 | \overline{B_2}]
        \leq \frac{tr(m+1)}{\binom{m}{m/2}}
        \leq \frac{tr(m+1)\sqrt{2m}}{2^m}.
\end{equation}
 
In terms of the bad events, the lemma states that $\Pr[B_1\cup B_3]\leq 3\alpha/25.$
By using a union bound over $B_1$ and $B_3$ and then the law of total probability to compute $\Pr[B_3]$, we get
   \begin{align*}
       \Pr[B_1\cup B_3]
       &\leq \Pr[B_1]+\Pr[B_3 | \overline{B_2}]\cdot\Pr[\overline{B_2}]+\Pr[B_3 | B_2]\Pr[B_2]      \leq \underbrace{\Pr[B_1]+\Pr[B_3 | \overline{B_2}]}_{(\star)}+ \Pr[B_2]. 
   \end{align*}
  
Since $\Pr[B_2] \leq \frac{\alpha}{25}$, it remains to show  
that $(\star)\leq \frac{2\alpha}{25}$, 
in order to complete the proof. To do this, we combine the bounds from \eqref{eq:bound_b1} and \eqref{eq:bound_b3}: \begin{equation}\label{eq:seeing_erasure_bound_1}
       (\star)= \Pr[B_1]+\Pr[B_3 | \overline{B_2}]\leq \frac{tr(m+1)m}{2^{2m}}+\frac{tr(m+1)\sqrt{2m}}{2^m}\leq \frac{tr m^2}{2^{m+1}}\leq  \frac {r m^2 \eps^2}{2^{15}\log^2 t}  .
    \end{equation} 
    For the second inequality, we use the fact that 
   $\frac{(m+1)m}{2^m}+(m+1)\sqrt{2m}\leq \frac{m^2}{2}$ for all $m\geq 10$. For the last inequality, we use the value of $m$ from Algorithm~\ref{alg:linearity-main} which satisfies $2^m\leq \frac{2^{14}t\log^2t}{\eps^2}$.
To further bound \eqref{eq:seeing_erasure_bound_1}, we split the analysis into two cases, depending on the value of $\alpha=\min\{1/4,m\eps/4\}$.

  \textbf{Case I: $\alpha=\frac{m\eps}{4}$. } 
    In this case, 
   there are $r =\ceil{\frac{5}{m\eps}}\leq \frac{5}{m\eps}+1$ iterations, 
   which means that $rm\eps \leq 5+m\eps\leq 6$. 
    Therefore, 
    \[(\star)\leq 
    \frac{(r m \eps)m \eps}{2^{15}\log^2 t} 
    \leq
    \frac{6 m \eps}{2^{15}\log^2 t} 
    \leq \frac { m \eps}{50}=\frac{2\alpha}{25}
    .\]
    
  \textbf{Case II: $\alpha=\frac 1 4$. }
        For this case, note that
   \begin{align*}
           m 
          &\leq 18 + \log t + 2\log\log t + 2\log(1/\eps)\\
          &\leq 18+2.1\log t+1.1/\eps\leq 11.2\cdot\max\{2\log t,1/ \eps\},
        \end{align*}
        where the second inequality uses $2\log y\leq 1.1y$ for all $y>0$ (with $y=1/\eps$ and $y=\log t$), and the last inequality uses $2\log t \geq 2$ and $1/\eps\geq 2$.
        It follows that $$\frac{m\eps}{2\log t}\leq 11.2 \frac{\max\{2\log t,1/\eps\}}{2\log t\cdot 1/\eps}=\frac{11.2}{\min\{2\log t,1/\eps\}} \leq 5.7.$$
       Finally, we use $r=\ceil{\frac{5}{4\alpha}}=5$ to obtain from \Cref{eq:seeing_erasure_bound_1}
        \[
        (\star)\leq  
        \frac{r}{2^{13}}\left(\frac { m \eps}{2\log t}\right)^2 
             \leq  \frac {5\cdot (5.7)^2}{2^{13}} 
             \leq\frac{1}{50}=\frac{2\alpha}{25} .
         \qedhere\] 
\end{proof}

\section{The Lower Bound for Low-Degree Testing}\label{sec:low-degree-lb}

In this section, we prove our lower bound for online low-degree testing\short{ stated in \Cref{thm:degree-d-lb}.}{.
\LowerBoundForLowDegreeTestingThm*}
\begin{proof}[\short{Proof of \Cref{thm:degree-d-lb}}{Proof}]
    Fix a degree $d > 1$. Let $\binom{n}{\leq d}$ denote $\sum_{i=0}^d \binom n i$.
    For a vector $x \in \F_2^n$, define its extension $\ext(x)\in \F_2^{\binom{n}{\leq d}}$ to be the vector where each entry is indexed by a set $S\subseteq [n]$ of at most $d$ coordinates and $\ext(x)_{S} = \prod _{i\in S} x_i$. In other words, each entry of $\ext(x)$ is an evaluation of one monomial of degree at most $d$ on input $x$.
    The extended vector $\ext(x)$ includes all entries $x_i$ of $x$ (as entries indexed by singletons $S = \set{i}$). We define projection $\pi$ as the inverse of $\ext$, i.e., 
     $\pi(\ext(x)) = x$.
    
    Consider the following strategy $\cS$ for a fixed-rate erasure adversary. Suppose the tester has successfully obtained answers to $k$ distinct
    queries $y_1, \dots, y_k$.
   Consider the
    linear space in $\F_2^{\binom{n}{\leq d}}$ spanned by $\ext(y_1), \dots, \ext(y_k)$, and 
     let $Z\subseteq \F_2^n$ denote set of projections of its vectors back to $\F_2^n$  using $\pi$.
    Note that $y_i \in Z$ for $i\in[k]$.
    The adversary simply tries to erase all (non-erased, non-queried) points of $Z$. Observe that $Z$ is determined by the queries made by the tester and does not depend on the input function $f$.
We next show that $\card{Z}$ is small relative to $k$, and so unless $k$ is large enough, the adversary can entirely erase $Z$. 
    \begin{claim}\label{claim:low-degree-lb}
        Let $r =|Z|.$ Then $\binom{\floor{\log r}}{d}\leq k$.
    \end{claim}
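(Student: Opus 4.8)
The plan is to sandwich the $\F_2$-dimension of $\mathrm{span}\{\ext(z) : z \in Z\}$ between $k$ from above and $\binom{\floor{\log r}}{d}$ from below, and then chain the two bounds. The upper bound is immediate from the way $Z$ is built. By construction $Z$ consists exactly of those $x \in \F_2^n$ with $\ext(x) \in W$ (equivalently, the $\pi$-images of the vectors of $W$ on which $\pi = \ext^{-1}$ is defined), so $\ext(z) \in W$ for every $z \in Z$. Since $W$ is a linear subspace of $\F_2^{\binom{n}{\leq d}}$, it contains $\mathrm{span}\{\ext(z) : z \in Z\}$, and since $W$ is spanned by the $k$ vectors $\ext(y_1), \dots, \ext(y_k)$, we get $\dim_{\F_2} \mathrm{span}\{\ext(z) : z \in Z\} \le \dim_{\F_2} W \le k$.

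For the lower bound I would identify each $z \in \F_2^n$ with its support $S_z = \{i : z_i = 1\} \subseteq [n]$; this is a bijection over $\F_2$, so the set system $\{S_z : z \in Z\}$ has exactly $r = \card{Z}$ members. For any $S$ with $\card{S} \le d$ we have $\ext(z)_S = \prod_{i \in S} z_i = \ind{S \subseteq S_z}$, so the $0/1$ matrix whose rows are the vectors $\ext(z)$, $z \in Z$, is precisely the inclusion matrix of this $r$-element set system against all subsets of $[n]$ of size at most $d$. The combinatorial input is then \cite[Lemma 1.4]{Ben-EliezerHL12} (equivalently \cite[Theorem 1.5]{KeevashS05}), which lower-bounds the $\F_2$-rank of such an inclusion matrix by $\binom{\floor{\log r}}{d}$ — equivalently, the degree-$\le d$ polynomials over $\F_2$, restricted to any set of $r$ points, span a space of dimension at least $\binom{\floor{\log r}}{d}$. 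Applying this to $Z$ gives $\binom{\floor{\log r}}{d} \le \dim_{\F_2}\mathrm{span}\{\ext(z) : z \in Z\}$.

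Combining the two inequalities yields $\binom{\floor{\log r}}{d} \le k$, which is the claim. The one point requiring care is bibliographic: matching the precise hypotheses and the floor/ceiling convention of the external lemma to the quantity $\binom{\floor{\log r}}{d}$ in the statement, and checking that it is the rank \emph{over $\F_2$} that is bounded (this is what the first paragraph needs, since $W$ lives over $\F_2$). Everything else is the routine identification of monomial-evaluation vectors with rows of an inclusion matrix; no probabilistic or soundness reasoning enters this claim — that enters only afterwards, when $r$ is fed back into the adversary's erasure budget.
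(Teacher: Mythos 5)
Your proof is correct and takes essentially the same route as the paper's: both bound the $\F_2$-rank of the same matrix (your inclusion matrix is the transpose of the paper's monomial-evaluation submatrix $A'$) from above by $k$ via the spanning structure, and from below via the Ben-Eliezer--Hod--Lovett / Keevash--Sudakov bound. The ``bibliographic'' point you flag is handled in the paper by first passing to an arbitrary submatrix with $2^{\floor{\log r}}$ columns, to which the cited lemma gives row dimension at least $\binom{\floor{\log r}}{\leq d} \geq \binom{\floor{\log r}}{d}$, and then invoking monotonicity of rank under taking submatrices.
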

 \begin{proof}  
    Consider the matrix $A$ with rows indexed by polynomials of degree at most $d$ and columns indexed by vectors $z\in Z$, where the entry indexed by a polynomial $p$ and a point $z$ contains the value $p(z) \in \F_2$. To prove the claim, we give two bounds on the rank of $A.$

    \paragraph{Upper bound: $rk(A) \leq k$.} Let $A'$ be the submatrix of $A$ with rows indexed by monomials. Since every degree-$d$ polynomial is a linear combination of monomials, we have $rk(A) = rk(A')$.
    In $A'$, the column indexed by $z$ is exactly the extended vector $\ext(z)$. By definition, every column in $A'$ is spanned by the $k$ columns indexed by $y_1, \dots, y_k$, showing that $rk(A') \leq k$.

    \paragraph{Lower bound: $rk(A) \geq \binom{\floor{\log r}}{d}$.}
    Let $B$ be an arbitrary submatrix of $A$ with only $r' \leq r$ columns, where $r' = 2^a$ is the largest power of $2$ not exceeding $r$.
    Trivially, $rk(A) \geq rk(B)$.
    We now apply \cite[Lemma 1.4]{Ben-EliezerHL12} (or  \cite[Theorem 1.5]{KeevashS05}) to see that the dimension of the rows of $B$ is at least
    \[
        \binom{a}{\leq d}
        \geq \binom {a}{d}
        = \binom{\floor{\log r}}{d} .
    \] 
     The two inequalities together yield the claim. \end{proof}

    Now we apply Yao's minimax principle for the online model \cite[Corollary 9.4]{KalemajRV22}. It states that
    to prove a lower bound $q$ on the worst-case query complexity of online-erasure-resilient property testing,
		it suffices to give an adversarial strategy $\cS$,
	a distribution $\mathcal{D}^+$ on positive instances, 
 and a distribution $\mathcal{D}^-$ on instances 
 that are negative with probability at least $\frac{6}{7}$,	such that  
  every deterministic $q$-query algorithm that accesses its input via an oracle using strategy $\cS$ sees the same distribution on the the query-answer histories under $\cD^+$ and $\cD^-$. 

Let $\cD^+$ be the uniform distribution over all degree $d$ Boolean functions on $\{0,1\}^n$ and  $\cD^{-}$ be the uniform distribution over all Boolean functions on $\{0,1\}^n$. 
 
 We show that a function $f \sim \cD^-$ is $\frac 1 3$-far from $\cP_d$ (set of functions of degree at most $d$) with probability at least $6/7$.
 Let $g\in\cP_d, f \sim \cD^-,$ and $\dist(f,g)$ be the fraction of domain points on which $f$ and $g$ differ. Then, $\E[dist(f,g)] = 1/2$. By the Hoeffding bound, $\Pr_{f\sim \cD^-}[\dist(f,g) \leq \frac{1}{3}] \leq e^{-\frac{2^n}{18}}$. 
By a union bound over the $2^{\binom{n}{\leq d}}$ functions of degree at most $d$, we get $\Pr_{f\sim \cD^-}[dist(f,\cP_d)\geq \frac 13] \geq 1 - 2^{\binom{n}{\leq d}} \cdot e^{-\frac{2^n}{18}}$. For large enough $n$, this probability is at least $6/7$.

 Let $A$ be a deterministic algorithm that makes $q \leq \binom{\floor{\log t} - 1}{d}$ queries to the oracle $\cO$ with adversarial strategy $\cS$. By \Cref{claim:low-degree-lb}, we get $r \leq t$, i.e., after each query $y_i$ for $i\in[q]$, the adversary has sufficient erasure budget to erase $Z$. 
 We argue that the distributions on the histories of query answers are the same under the distributions $\cD^+$ and $\cD^-$.
 Under both distributions, the adversary erases the same query points. Consider a query $y_k$ of $A$ that is not erased and suppose it is made after $A$ successfully obtained answers $a_1,\dots,a_{k-1}$ to distinct queries $y_1,\dots,y_{k-1}$. When $f\sim \cD^-$, we have $\Pr_{f\sim \cD^-}[f(y_k)=a_k \mid f(y_1)=a_1 \wedge\dots \wedge f(y_{k-1})=a_{k-1}]=1/2$ for all $a_1,\dots,a_k\in\{0,1\}$, since the value of $f(y_k)$ is set uniformly and independently of values at other points. 
Now consider $f\sim \cD^+$, viewed as a uniformly random coefficient vector $\ind{f}\in\set{0,1}^{\binom{n}{\leq d}}$, where each entry corresponds to a monomial of degree at most $d$. For any query it holds that $f(y_i) = \ip{\ind{f}}{\ext(y_i)}$.
By definition of $\cS$, the extension $\ext(y_k)$ is linearly independent of $\ext(y_1), \dots, \ext(y_{k-1})$, which implies that for any fixed values $b_1, \dots b_{k-1}$ of $\ip{\ind{f}}{\ext(y_i)}, \dots, \ip{\ind{f}}{\ext(y_{k-1})}$, the value of $\ip{\ind{f}}{\ext(y_k)}$ is still uniformly distributed. 

Thus, $\cD^+$ generates degree-$d$ polynomials, $\cD^-$ generates functions that are $\frac 13$-far from $\cP_d$ with probability at least $\frac 67$, and the query-answer histories for any deterministic algorithm $A$ that makes $q\leq\binom{\floor{\log t} - 1}{d}$  queries and runs against the $t$-online fixed-rate erasure oracle employing strategy $\cS$ are identical under $\cD^+$ and $\cD^-$. Consequently,
 Yao's principle implies the desired lower bound. 
 \end{proof}

\section{Low-Degree Testing}\label{sec:low-degree}
This section is dedicated to the proof of \Cref{thm:d_deg:chain_of_cubes} that gives an online tester for the property $\cP_d$ (being a polynomial of degree at most $d$) with batches of size $b=2^{d-1}.$

Our strategy is a natural extension of the one employed for linearity in \Cref{sec:linearity-tester}: we view the 
\blr{2}
as a ``linear square" ($x, y, x+y$, omitting the origin), and perceive the extended test as a ``chain of squares" (see the top part of \Cref{fig:chain_of_cubes}). For quadraticity and any degree $d > 2$, we resort to ``chains of cubes" in a similar way (see the bottom part of \Cref{fig:chain_of_cubes}).

\begin{figure}[!t]
\centering
 \includegraphics[scale=1.3]{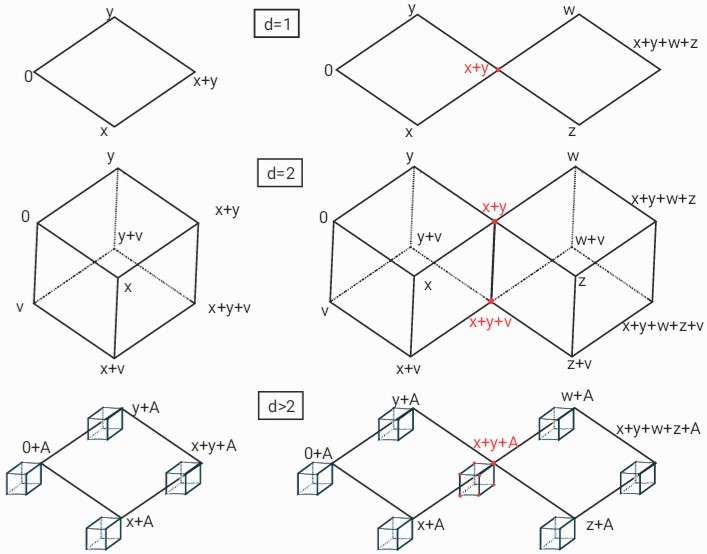}
  \caption{Chains of cubes (with 0 as the first parameter). The left side shows linear subspaces (drawn as cubes); the right side shows chains formed by two cubes. Red points overlap and cancel each other. From top to bottom, we have witnesses for linearity (chains of squares), witnesses for quadraticity (each vertical edge represents a batch of $b=2$ points), and witnesses for $\cP_d$,
where each batch queries an affine subspace with a fixed linear component $A$ and different translations.}
  \label{fig:chain_of_cubes}
\end{figure}

\subsection{Algebraic Testing Patterns}
In the following, we define a \emph{testing pattern} to be a mapping from a set of parameters to a structured set of points (e.g., mapping parameters $x,y$ to the three points $x, y, x+y$). Setting a value to all parameters induces an \emph{instance} of the testing pattern. To test for a low-degree property $\cP_d$, we simply draw a random instance of some ``good'' pattern and check the parity of 
the sum of function values on the points in the pattern.
This is a generalization of the tester of \cite{AlonKKLR05},
which  
checks the parity on a random linear subspace (which is an example of a good testing pattern). 

Our notion of a good testing pattern is a special case of the more abstract notion of \emph{$k$-local formal characterizations} (see  \cite[Definition 2.3]{KS08}), where each linear map defines a single testing point, and the subspace $V$ of ``good vectors'' is fixed to all answer vectors with parity $0$.
In particular, a good testing pattern is a $2$-ary independent characterization (see  \cite[Definition 2.5]{KS08}).

\begin{definition}[Testing pattern, parameter, instance]

    A \emph{testing pattern} is a matrix\footnote{We use $\mathcal{M}_{\ell \times m}(\F_2)$ for the set of $\ell\times m$ matrices with entries in $\F_2.$} 
    \[X \in \mathcal{M}_{\ell \times m}(\F_2)\] 
    with rows $c_1,\dots, c_{\ell}$, such that $\ell \geq m$, each row is unique, and $rank(X) = m$.
    An \emph{instance} of a pattern $X$ is described by the product $XM\in \mathcal{M}_{\ell \times n}(\F_2)$, for some parameter matrix $M\in\mathcal{M}_{m \times n}(\F_2)$.
    The rows of $XM$ specify $\ell$ testing points $y_1, \dots, y_{\ell}$ in $\F_2^n$. 
\end{definition}

Generalizing the notation from~\cite{AlonKKLR05}, for a function $f$ and a pattern instance $XM$, we denote the sum (over $\F_2$) of the values of $f$ on these points by
\[
    T_{X,f}(a_1,\dots, a_m) := \sum_{i\in[\ell]} f(y_i) .
\]
Each testing pattern induces a 
parity-checking 
tester, defined below.
\begin{algorithm}
 \caption{An $X$-tester}
 \label{alg:pattern-tester}
{
    \Given {a pattern $X\in \mathcal{M}_{\ell \times m}(\F_2)$ and query access to a function $f:\F_2^n\to\F_2$ 
    }
     Choose a parameter matrix $M \in \mathcal{M}_{m \times n}(\F_2)$ uniformly at random (alternatively, choose $m$ row vectors $a_1,\dots, a_m \in \F_2^n$ independently and uniformly at random).\\
     Query $f$ at testing points $y_1,\dots, y_{\ell}$, the $\ell$ row vectors of the instance matrix $XM$.\\
     {\bf Accept} if $T_{X,f}(a_1,\dots, a_m) = 0$; otherwise, {\bf reject}.
    }
\end{algorithm}

Next, we define a good testing pattern to have two desired properties.
\begin{definition}[Good patterns]\label{def:pattern_completeness_minimaly_sound}
    A pattern $X_{\ell \times m}$ is \emph{complete} for a property $\mathcal{P}$ if for all functions $f \in \mathcal{P}$,
    \[
        \Prob{a_1,\dots,a_m}{T_{X,f}(a_1,\dots,a_m) = 1} = 0 .
    \]
    A pattern $X_{\ell \times m}$ is \emph{minimally sound} for a property $\mathcal{P}$ if for all functions $f \notin \mathcal{P}$,
    \[
        \Prob{a_1,\dots,a_m}{T_{X,f}(a_1,\dots,a_m) = 1} > 0 .
    \]
     If a pattern is both complete and minimally sound for $\cP$, we call it  \emph{good} for 
    $\cP$. 
\end{definition}
In other words, a good pattern $X_{\ell \times m}$ for a property $\cP$ characterizes~$\cP$:
\[
    f \in \mathcal{P} \ \iff \ \forall a_1,\dots,a_m:\  T_{X,f}(a_1,\dots,a_m) = 0 .
\]
In the context of testing $f\in\cP_d$, we have the following useful characterization of good patterns.

\begin{restatable}{claim}{lowDegreePatternCharacterization}\label{claim:low_deg_characterization}
    A pattern $X\in\mathcal{M}_{\ell \times m}(\F_2)$ is good for $\cP_d$ if and only if the two conditions hold:
    \begin{enumerate}[(i)]
        \item \label{itm:completeness_property}For all 
        coordinate sets $S \subseteq [m]$ of size $\card{S} \leq d$, we have
                   $\ \displaystyle\sum_{i\in[\ell]} \prod_{j\in S} X_{ij} = 0 .$
        \item\label{itm:soundness_property} There exists a set $S \subseteq [m]$ of size $\card{S} = d+1$ such that
            $\ \displaystyle\sum_{i\in[\ell]} \prod_{j\in S} X_{ij} = 1 .$
    \end{enumerate}
\end{restatable}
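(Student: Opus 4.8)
The plan is to compute the test statistic $T_{X,f}$ explicitly, as a multilinear polynomial over $\F_2$ in the entries of the random parameter matrix, and to recognize conditions~(i) and~(ii) among its coefficients. For $U\subseteq[m]$, set $c_U:=\sum_{i\in[\ell]}\prod_{j\in U}X_{ij}\in\F_2$, so that (i) reads ``$c_U=0$ whenever $\card{U}\le d$'' and (ii) reads ``$c_U=1$ for some $U$ with $\card{U}=d+1$''. (I assume $d+1\le n$, as in all our regimes, so that monomials of degree $d+1$ over $\F_2^n$ exist.) Recall that $X$ is good for $\cP_d$ iff the $\F_2$-linear map $f\mapsto T_{X,f}$ has kernel exactly $\cP_d$. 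Since the monomials $\mu_S(x):=\prod_{k\in S}x[k]$, over $S\subseteq[n]$, form a basis of the space of all functions $\F_2^n\to\F_2$, it suffices to understand $T_{X,\mu_S}$ for each $S$. Writing the $i$-th testing point as $y_i=\sum_{j\in[m]}X_{ij}a_j$, expanding $\mu_S(y_i)=\prod_{k\in S}\bigl(\sum_{j\in[m]}X_{ij}a_j[k]\bigr)$, and using $X_{ij}^2=X_{ij}$, I obtain
\[
    T_{X,\mu_S}(a_1,\dots,a_m)=\sum_{\phi\colon S\to[m]}c_{\mathrm{im}(\phi)}\cdot\prod_{k\in S}a_{\phi(k)}[k].
\]

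The key structural point is that the monomial $\prod_{k\in S}a_{\phi(k)}[k]$ is squarefree (its variables have pairwise distinct second index $k$), distinct maps $\phi$ yield distinct such monomials, and the set of second indices appearing in it is exactly $S$. Hence for distinct $S$ the polynomials $T_{X,\mu_S}$ have disjoint monomial supports, and each is multilinear. Since a multilinear polynomial over $\F_2$ vanishes as a function iff it is the zero polynomial, I conclude: (a) $T_{X,f}\equiv 0$ iff $T_{X,\mu_S}\equiv 0$ for every $S$ with nonzero coefficient in $f$; and (b) $T_{X,\mu_S}\equiv 0$ iff $c_U=0$ for every $U$ of the form $\mathrm{im}(\phi)$, i.e.\ (for $S\ne\emptyset$) for every nonempty $U\subseteq[m]$ with $\card{U}\le\card{S}$, and (for $S=\emptyset$) $c_\emptyset=\ell=0$. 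Thus $\ker(f\mapsto T_{X,f})$ is spanned by $\{\mu_S:T_{X,\mu_S}\equiv 0\}$ and $\cP_d$ by $\{\mu_S:\card{S}\le d\}$; both being spanned by subsets of the monomial basis, $X$ is good iff, for every $S\subseteq[n]$, $T_{X,\mu_S}\equiv 0$ exactly when $\card{S}\le d$.

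It remains to match the two directions of this last equivalence with (i) and (ii). The ``$\Leftarrow$'' direction (completeness: $T_{X,\mu_S}\equiv 0$ for all $\card{S}\le d$) holds iff $c_U=0$ for all $U$ with $\card{U}\le d$, by applying (b) with $S$ of size $d$ (to cover the nonempty $U$) and with $S=\emptyset$ (for $U=\emptyset$); this is exactly~(i). Assuming (i), the ``$\Rightarrow$'' direction (minimal soundness: no $S$ with $\card{S}\ge d+1$ has $T_{X,\mu_S}\equiv 0$) fails iff some $S$ with $\card{S}=d+1$ has $T_{X,\mu_S}\equiv 0$ --- indeed, if some larger $S_0$ worked, then by (b) all $c_U$ with $\card{U}=d+1$ vanish, and then by (b) together with (i) \emph{every} degree-$(d+1)$ monomial lies in the kernel. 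Finally, by (b) together with (i), $T_{X,\mu_S}\equiv 0$ for $\card{S}=d+1$ iff $c_U=0$ for all $U$ with $\card{U}=d+1$, i.e.\ iff (ii) fails. Altogether, $X$ is good for $\cP_d$ iff (i) and (ii) both hold.

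The only delicate step, and the one I expect to be the main obstacle, is the bookkeeping behind the displayed expansion and its consequences (a)--(b): one must verify that distinct maps $\phi\colon S\to[m]$ give genuinely distinct squarefree monomials and that the coefficient of such a monomial depends on $\phi$ only through its image, so that identical vanishing as a function can be read off coefficient by coefficient as a family of $\F_2$-linear conditions on the $c_U$. Once that is in place, the subsequent matching with (i) and~(ii) is a short combinatorial check.
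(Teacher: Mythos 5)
Your proof is correct. It starts from the same key identity as the paper — expanding $T_{X,\mu_S}$ as a sum over maps $\phi\colon S\to[m]$ with coefficients $c_{\mathrm{im}(\phi)}=\sum_{i\in[\ell]}\prod_{j\in\mathrm{im}(\phi)}X_{ij}$ — but the two proofs extract consequences from this identity in different ways. The paper works by plugging in explicitly constructed parameter matrices: for the forward direction of condition~(i), an $M'$ chosen so that $T_{X,g_S}(M')$ collapses to a single $\varphi_S(X)$; for the backward direction of minimal soundness, an $M^\star$ built to isolate a minimal-degree monomial in the support of a given $f\notin\cP_d$. You instead identify $T_{X,\mu_S}$ as a multilinear polynomial in the coordinates $a_j[k]$, observe that distinct $\phi$ yield distinct squarefree monomials and that distinct $S$ yield disjoint monomial supports, and then invoke the fact that over $\F_2$ a multilinear polynomial vanishes as a function iff every coefficient is zero. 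This replaces the paper's two ad-hoc matrix constructions with a single uniform principle, and it directly exhibits $\ker(f\mapsto T_{X,f})$ as a subspace spanned by monomials, making the comparison with $\cP_d=\mathrm{span}\{\mu_S:|S|\le d\}$ immediate. Both proofs reduce goodness to the same conditions on the $c_U$'s; yours is a cleaner and more symmetric packaging of essentially the same computation, under the same implicit hypothesis $d+1\le n$ that the paper also relies on (when it invokes $g_{[d+1]}\notin\cP_d$).
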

The proof of this claim is deferred to \short{the full version}{\Cref{sec:proof_of_gen_witness}}, it generalizes the one of \cite[Lemma 1]{AlonKKLR05}.

\subsection{Generalized Witnesses for \texorpdfstring{$\cP_d$}{Pd}} 
The tester of~\cite{AlonKKLR05}, discussed in \Cref{sec:intro-low-degree},
queries a linear 
$(d+1)$-dimensional subspace; the variant of the tester considered in \cite{BKSSZ10} uses an affine $(d+1)$-dimensional subspace. Both subspaces can be represented as cubes\footnote{We call them cubes for easy visualisation, even though most pattern instances are not parallel to the axes.}. The affine version of the tester is equivalent to $X$-tester with the 
pattern $X$ with $d+1$ parameters $a_1, \dots, a_{d+1}$ representing directions and a translation parameter $a_{d+2}$.
The rows of pattern $X$ are all different binary vectors of $d+2$ bits with value $1$ at the last coordinate. It is easy to verify this pattern satisfies the conditions of~\Cref{claim:low_deg_characterization}.
(For the tester based on a linear subspace simply delete the last column of $X$, which corresponds to fixing the last parameter $a_{d+2} = \vec{0}$.)
For the special case of affinity ($d=1)$, this pattern can be equivalently viewed as taking
parameters $x_1,x_2,x_3$ to testing points $\set{x_1,x_2,x_3,x_1+x_2+x_3}$
(the BLR linearity test simply fixes $x_3=0$), 
and the generalized witness used in Algorithm~\ref{alg:krv-linearity-test} can be viewed as a ``chain of squares'': the second square shares the point $x_1+x_2+x_3$ as if it was its first parameter, and gets additional parameters $x_4, x_5$. When we check the parity of $f$ on all points in both squares, the intermediate point $x_1+x_2+x_3$ appears twice and cancels out. We are left with testing points $x_1, \dots, x_5$ and their sum $\sum_{i\in[5]} x_i$.

To generalize this to higher dimensions, we consider the similar picture for high-dimensional cubes. Alternatively, we can view this as the chain from linearity, but each point is replaced with an entire $(d-1)$-dimensional cube (see \cref{fig:chain_of_cubes}).
Note the parameters of these patterns have two different roles, some parameters impose the chain structure while others are used for the cubical structure.
We dub these patterns ``chains of cubes''.

\begin{definition}[Chain of Cubes Pattern]
    For degree $d$ and odd integer $s$ (length of the chain), we define the {\em chain of cubes pattern} $\chi_{d,s}$.
    It takes parameters $a_1, \dots, a_{d}$ for the cube structure, and $a_{d+1}, \dots, a_{d+s}$ for the chain, and outputs $(s+1) \cdot 2^{d-1}$ testing points, each is a combination of a subset of the first $d$ parameters, and either all or exactly one of the last $s$ parameters. 

    Formally, $\chi_{d,s}$ has ``cube columns'' $1,\dots, d$ and ``chain columns'' $d+1,\dots, d+s$.
    We index each row with $i\in[s+1], j\in \cube{d}$.
    In row $(i,j)$, the first $d$ coordinates are exactly the vector $j$. The last $s$ coordinates depend on $i$, where if $i\in[s]$ then only column $d+i$ has value $1$, and if $i = s+1$ then all $s$ columns have value $1$.
\end{definition}

\begin{claim}
    The chain of cubes pattern $\chi_{d,s}$ is 
    good for $\cP_{d+1}$ for all $d\in \N$ and all odd $s$.
\end{claim}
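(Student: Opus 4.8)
The plan is to verify the two conditions of \Cref{claim:low_deg_characterization} for the pattern $\chi_{d,s}$ with target degree $d+1$, i.e., I need to show condition \eqref{itm:completeness_property} for all coordinate sets $S$ of size at most $d+1$, and condition \eqref{itm:soundness_property} for some coordinate set of size $d+2$. The number of parameters (columns) here is $m = d+s$, split into the $d$ ``cube columns'' $C = \{1,\dots,d\}$ and the $s$ ``chain columns'' $H = \{d+1,\dots,d+s\}$, and the rows are indexed by $(i,j)$ with $i \in [s+1]$, $j \in \cube{d}$. For a fixed row $(i,j)$ and a set $S \subseteq [d+s]$, the monomial $\prod_{\ell \in S} X_{(i,j),\ell}$ factors as $\big(\prod_{\ell \in S \cap C} j_\ell\big) \cdot \big(\prod_{\ell \in S \cap H} (\text{chain entry})\big)$. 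The chain part: if $i \in [s]$, the chain entry in column $d+i$ is $1$ and all other chain entries are $0$, so $\prod_{\ell \in S\cap H}(\cdot) = 1$ iff $S \cap H \subseteq \{d+i\}$; if $i = s+1$, all chain entries are $1$, so the product is always $1$.

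First I would handle \textbf{completeness}. Fix $S$ with $|S| \le d+1$, and write $S = S_C \sqcup S_H$ with $S_C \subseteq C$, $S_H \subseteq H$. Summing $\prod_{\ell\in S} X_{(i,j),\ell}$ over all rows, the contribution splits by $i$. The $j$-sum of $\prod_{\ell \in S_C} j_\ell$ over all $j \in \cube{d}$ equals $2^{d - |S_C|} \bmod 2$, which is $0$ whenever $|S_C| < d$ and $1$ when $|S_C| = d$ (so $S_C = C$). Now case on $|S_H|$. If $|S_H| \ge 2$: for every $i \in [s]$ we have $S_H \not\subseteq \{d+i\}$, so those rows contribute $0$; the $i = s+1$ rows contribute the $j$-sum, which is $0$ because $|S_C| = |S| - |S_H| \le d+1-2 = d-1 < d$. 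If $|S_H| = 1$, say $S_H = \{d+i_0\}$: only $i = i_0$ and $i = s+1$ give nonzero chain factors, each contributing the same $j$-sum $2^{d-|S_C|} \bmod 2$; the two copies cancel over $\F_2$, giving $0$. If $|S_H| = 0$: every $i \in [s+1]$ contributes the $j$-sum, so the total is $(s+1) \cdot (2^{d-|S_C|} \bmod 2)$; since $s$ is odd, $s+1$ is even, so this is $0$ regardless. Thus in all cases the sum is $0$, establishing \eqref{itm:completeness_property}.

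Next, \textbf{minimal soundness}: I must exhibit an $S$ of size $d+2$ with $\sum_{i} \prod_{\ell\in S} X_{(i,j),\ell} = 1$. The natural candidate is $S = C \cup \{d+i_0\}$ for any single $i_0 \in [s]$ (so $|S_C| = d$, $|S_H| = 1$, total size $d+1$) — wait, that is size $d+1$, not $d+2$; I need one more chain column. Reconsider: take $S = C \cup \{d+i_0, d+i_1\}$ with $i_0 \ne i_1$ both in $[s]$ (this requires $s \ge 2$, which holds since $s$ is odd and, for the pattern to be nontrivial, $s \ge 1$; note $s=1$ forces the chain columns to be a single column and then $d+2$ parameters among $d+1$ columns is impossible, so implicitly $s \ge 3$, or one treats $d+1$ via the linear/affine pattern separately). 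With $|S_H| = 2$: rows $i \in [s]$ all contribute $0$ (chain factor vanishes), and row $i = s+1$ contributes the $j$-sum $\sum_j \prod_{\ell \in C} j_\ell = 1$. Hence the total is $1$, giving \eqref{itm:soundness_property} with the witnessing set of size $d+1+1 = d+2$. (If $s$ is so small that no such pair exists, the claim should be read with the convention $s \ge 3$, matching the construction where chains of length one reduce to the affine pattern already analyzed.)

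The main obstacle I anticipate is purely bookkeeping: getting the parity arithmetic over $\F_2$ right in the three sub-cases of the chain-column count, and in particular noticing that the oddness of $s$ is exactly what is needed to kill the $|S_H| = 0$ case (even number of identical copies) while the cancellation in the $|S_H| = 1$ case comes from pairing row $i_0$ with row $s+1$. There is no deep idea; the proof is a direct double-counting argument invoking \Cref{claim:low_deg_characterization}, and the only care needed is to confirm that a size-$(d+2)$ subset with two chain columns genuinely exists, i.e., that $s \ge 2$, which holds for all odd $s \ge 3$ (the case relevant to ``chains of cubes'' proper).
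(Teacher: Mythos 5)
Your proof is correct and takes essentially the same approach as the paper's: verify the two conditions of the characterization claim by splitting the coordinate set into cube columns and chain columns, then count rows with all-ones restrictions. The paper pairs rows $(i_0,j)$ with $(s+1,j)$ when there is exactly one chain coordinate, and uses a free cube coordinate when there are two or more; your three-way case split on $|S_H|$ is the same argument reorganized around the $j$-sum $2^{d-|S_C|}\bmod 2$.

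Two small ways your write-up is actually more careful than the paper's. First, the paper's completeness argument opens with ``Let $T$ be a set of $d+1$ coordinates,'' and the ensuing observation that $T$ must contain a chain coordinate is only valid at that exact size; the case $|S|\le d$ with $S\subseteq C$ (your $|S_H|=0$ branch) is where the oddness of $s$ is really used, and the paper never spells it out. You do, and correctly identify that $s+1$ being even is what kills this case. Second, you flag that soundness needs two distinct chain columns, hence $s\ge 2$, hence $s\ge 3$ since $s$ is odd. This is a real caveat: for $s=1$ the pattern has only $d+1$ columns so no set of size $d+2$ exists, and in fact the rows $(1,j)$ and $(2,j)$ coincide, so $\chi_{d,1}$ is not even a legal testing pattern (rows must be distinct). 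The paper's choice $T=[d+2]$ silently assumes $s\ge 3$ as well. So your proof is a faithful, slightly tightened version of the intended argument.
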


\begin{proof}
    We show the pattern $\chi_{d,s}$ satisfies the conditions of \Cref{claim:low_deg_characterization}.
    That is, each set of at most $d+1$ coordinates has an even number of rows where all these coordinates are $1$, and there exists a set of $d+2$ coordinates for which the number of such rows is odd. 

    Let $T$ be a set of $d+1$ coordinates. Then $T$ includes at least one chain coordinate, as there are only $d$ cube coordinates.
    First, suppose that $T$
    includes exactly one chain coordinate $c\in\set{d+1,\dots,d+s}$. Rows where all coordinates in $T\setminus\set{c}$ are $1$ come in pairs, indexed either $(c,j)$ or $(s+1,j)$ for any fitting $j$. Thus, there is an even number of them.
    Next, suppose $T$ includes
    at least two coordinates from $\set{d+1,\dots,d+s}$. Then, there exists a cube column $c\in[d]$ such that $c\notin T$. For each row where all coordinates in $T$ are $1$, coordinate $c$ can be 0 or 1, leading to an even number of such rows.

    To show the pattern is minimally sound, consider the set $T = [d+2]$. 
    Two chain coordinates together are $1$ only in rows  
    indexed by $(s+1, j)$ for some $j$. But $j$ must be $\vec 1$ to get $1$ in all the first $d$ coordinates. Thus, only a single row has value $1$ on all coordinates in $T$.  
\end{proof}

\subsection{Soundness of Patterns}
In this section, we show a soundness result for any pattern, and in particular for our chain of cubes pattern, $\chi_{d,s}$.
We resort to two previously known results and combine them together. The first is inherited by the soundness guarantee in~\cite{KS08} for any $2$-ary local characterization.

The other is a specialized argument for affine subsapces that appears in~\cite{BKSSZ10}. The argument deals with small values of $\eps$ (so most violating instances are such due to a single point).
We generalize this argument for any pattern, relying on two properties held by all patterns. First, each two induced testing points, over a random instance, are independent. Second, the tester depends on the parity of all answers to the queried points.

Formally, we prove the following lemma.
\begin{lemma}
    \label{lem:patterns-soundness}
    Fix any pattern $X\in \mathcal{M}_{\ell \times m}(\F_2)$ with $\ell \geq 3$. For any function $f$ that is $\eps$-far from $\cP_d$, we have
    \[
        \Prob{a_1,\dots,a_{m}}{T_{X,f}(a_1,\dots,a_{m}) = 1}
        \geq \min\set{\frac{\ell\eps}{2},\frac{1}{2\ell^2}}.
    \]
\end{lemma}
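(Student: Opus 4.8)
The plan is to combine two complementary soundness arguments, each handling a different regime of $\eps$. For the "large $\eps$" regime, I would invoke the soundness guarantee of Kaufman--Sudan \cite{KS08} for $2$-ary local (independent) characterizations: since every good pattern $X$ is such a characterization (as noted after \Cref{def:pattern_completeness_minimaly_sound}, referencing \cite[Definitions 2.3, 2.5]{KS08}), their result gives a rejection probability of $\Omega(\eps/\ell^2)$ for any $f$ that is $\eps$-far from $\cP_d$ — actually the natural bound here is $\min\{\Omega(\eps), \Omega(1/\ell^2)\}$, matching the "$1/(2\ell^2)$" term in the statement once $\eps$ is bounded away from $0$. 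This disposes of everything except small $\eps$, where the Kaufman--Sudan bound degrades linearly in $\eps$ and is too weak to give the desired $\ell\eps/2$.

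For the "small $\eps$" regime I would adapt the argument of Bhattacharyya et al.\ \cite{BKSSZ10}, which I expect the authors abstracted precisely to make it pattern-independent. The two structural facts used are stated in the paragraph before the lemma: (1) over a uniformly random instance $XM$, any two of the induced testing points $y_i, y_{i'}$ (with $i \neq i'$, rows distinct since $X$ has unique rows) are \emph{independent} and uniform on $\F_2^n$; and (2) the test's accept/reject decision depends only on the parity $\sum_{i \in [\ell]} f(y_i)$. Let $g$ be a closest function in $\cP_d$ to $f$, so $\Pr_x[f(x) \neq g(x)] = \eps$. Define the "error indicator" $e = f \oplus g$ (over $\F_2$), which is supported on an $\eps$-fraction of the domain. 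Since $g \in \cP_d$ and the pattern is complete, $\sum_i g(y_i) = 0$ always, so $T_{X,f}(a_1,\dots,a_m) = \sum_{i \in [\ell]} e(y_i)$. Now I would estimate $\Pr[\sum_i e(y_i) = 1]$ from below. By inclusion--exclusion / a union bound in reverse (Bonferroni),
$$\Pr\Big[\sum_{i} e(y_i) = 1\Big] \geq \sum_{i} \Pr[e(y_i)=1] - \sum_{i < i'} \Pr[e(y_i) = 1 \wedge e(y_{i'}) = 1] \cdot (\text{const}),$$
and here is where pairwise independence (fact 1) is essential: $\Pr[e(y_i)=1 \wedge e(y_{i'})=1] = \eps^2$, while $\Pr[e(y_i)=1] = \eps$ for each $i$. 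More carefully, I would isolate the event that exactly one of the $\ell$ points lands in the error set $E = \{x: e(x)=1\}$; that event forces the parity to be $1$, and its probability is at least $\ell\eps - \binom{\ell}{2}\eps^2 \cdot(\text{something})$. When $\eps \leq c/\ell^2$ for a suitable small constant, the quadratic correction is dominated and one gets $\gtrsim \ell\eps$, which after tracking constants yields the $\ell\eps/2$ bound. (One has to be slightly careful that a point landing in $E$ more than once doesn't flip the parity back, but pairwise independence bounds the probability of any repeat-in-$E$ by $\binom{\ell}{2}\eps^2$, which is absorbed into the same correction term.)

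The main obstacle I anticipate is not either individual argument but the \emph{bookkeeping at the seam}: verifying that the Kaufman--Sudan bound really does give $\min\{\Omega(\eps), \Omega(1/\ell^2)\}$ in the form I want (their theorem is typically phrased with a fixed constant and may need a small massage), and choosing the crossover threshold on $\eps$ so that the two regimes together cover all $\eps \in (0,1)$ with the clean constants $1/2$ and $1/(2\ell^2)$ appearing in the statement. A secondary subtlety is the Bonferroni/repeat-in-$E$ estimate: getting the constant in front of $\ell\eps$ up to exactly $1/2$ (rather than some messier fraction) may require being a bit more precise than a crude second-moment bound — e.g. conditioning on which single index $i$ has $y_i \in E$ and using independence of the remaining $\ell-1$ points to bound the probability they all avoid $E$ by $(1-\eps)^{\ell-1} \geq 1 - (\ell-1)\eps$. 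I would write the small-$\eps$ case first since it is the genuinely new content, then cite \cite{KS08} for the rest and stitch.
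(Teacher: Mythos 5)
Your proposal matches the paper's proof in its essential structure: the paper splits into exactly the two regimes you describe, invoking Kaufman--Sudan \cite[Theorem~2.9]{KS08} for a $\min\{\eps/2,\gamma(\ell)\}$ bound with $\gamma(\ell)=1/((2\ell+1)(\ell-1))\geq 1/(2\ell^2)$, and generalizing \cite[Lemma~8]{BKSSZ10} via pairwise independence of the $y_i$'s (which holds because any two rows of $X$ are distinct, hence linearly independent over $\F_2$). The BKSSZ-style step defines events $F_i$ (exactly coordinate $i$ disagrees with the nearest $g\in\cP_d$), bounds $\Pr[F_i]\geq \eps-\ell\eps^2$ by the pairwise Bonferroni inequality you wrote down, and sums over $i$ to get $\ell\eps(1-\ell\eps)$. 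So your worry about the constant $1/2$ being ``messier'' is unfounded: the crude second-moment bound already gives $\ell\eps(1-\ell\eps)\geq\ell\eps/2$ as soon as $\ell\eps\leq 1/2$, with no further refinement needed.

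Two small corrections worth flagging. First, the crossover is at $\ell\eps=1/2$ (equivalently $\eps=1/(2\ell)$), not at $\eps\sim c/\ell^2$: if you put the threshold at $\eps\sim 1/\ell^2$ and rely on KS08 above it, then for $\eps\in(c/\ell^2,\,1/(2\ell))$ the KS08 bound $\min\{\eps/2,\gamma(\ell)\}$ can equal $\eps/2<1/(2\ell^2)$, which is too weak; the BKSSZ term must be the one doing the work all the way up to $\ell\eps=1/2$. Second, the alternative refinement you float---conditioning on which index $i$ has $y_i\in E$ and then bounding the remaining $\ell-1$ points avoiding $E$ by $(1-\eps)^{\ell-1}$---requires \emph{mutual} independence of the $\ell$ points, which is generally false: the $y_i$ are determined by only $m$ random parameter vectors, and $\ell$ typically exceeds $m$ (e.g., chains of cubes have $\ell=(s+1)2^{d-1}$ and $m=d+s$). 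Only pairwise independence is available, so the Bonferroni route is not merely adequate but mandatory.
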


\begin{proof}
    Our starting point is the soundness guarantee of local characterization, that is Theorem~2.9~in~\cite{KS08}. Any pattern $X\in \mathcal{M}_{\ell \times m}(\F_2)$ is a $2$-ary $\ell$-local characterization, where each row dictates a function, and a vector in $\cube{\ell}$ of evaluations of $f$ is rejected if and only if its hamming weight is odd. Moreover, the first row is linearly independent from all the others (since each two rows are distinct and we are over $\F_2$).
    Thus, for any $\eps$-far function $f$, we get:
    \begin{equation}
        \label{eq:soundness-ks08}
        \Prob{a_1,\dots,a_{m}}{T_{X,f}(a_1,\dots,a_{m}) = 1}
        \geq \min\set{\eps/2, \gamma(\ell)} ,
    \end{equation}  
    where $\gamma(\ell) = 1/((2\ell+1)(\ell-1))$ only depends on $\ell$.

    Next, we adjust the argument from Lemma~8~in~\cite{BKSSZ10}.
    Let $g$ be a degree-$d$ function that achieves minimal distance, $\eps$, from $f$.
    Our probability space is the random choice of parameters $a_1,\dots a_m$. We observe two types of events for each $i\in[\ell]$. The first event is $E_i$ - the event that $f(y_i) \neq g(y_i)$. The second is $F_i$ - the event that $f(y_i) \neq g(y_i)$ but $f(y_j) = g(y_j)$ for all $j\neq i$.

    Since each $y_i$ is uniform over $\F_2^n$, it is clear that $\Prob{}{E_i} = \eps$. Due to the fact that each two rows are different (and hence, over $\F_2$, linearly independent), we get pairwise independence for the variables $y_1,\dots, y_{\ell}$. I.e, for any $i\neq j$ we have $\Prob{}{E_i \wedge E_{j}} = \eps^2$.
    It is clear that for all $i\in [\ell]$:
    \[
        \Prob{}{F_i}
        \geq \Prob{}{E_i} - \sum_{j\neq i} \Prob{}{E_i \wedge E_j}
        \geq \eps - \ell\eps^2 .
    \]

    Note that $g\in\cP_d$ passes the parity-check for every instance, meaning that for $f$ it depends on the number of points in the instance in which it differs from $g$.
    In particular, the test rejects if $f$ differs from $g$ on \emph{exactly one} such point. This event is exactly the disjoint union of all events $F_i$, and its probability is lower bounded by
    \[
        \Prob{}{\bigsqcup_{i\in[\ell]} F_i} 
        = \sum_{i\in[\ell]} \Prob{}{F_i} 
        \geq \ell \left(\eps - \ell\eps^2\right)
        = \ell\eps (1-\ell\eps) .
    \]
     Thus, for any $\eps$-far function $f$, we get
     \begin{equation*}
     \label{eq:soundness-bkssz}
        \Prob{a_1,\dots,a_{m}}{T_{X,f}(a_1,\dots,a_{m}) = 1}
        \geq \ell\eps (1-\ell\eps) . 
     \end{equation*}
 We finish up by combining both arguments. 
    If $\ell\eps\leq 1/2$, 
   we are guaranteed soundness of at least $\ell\eps (1-\ell\eps) \geq \ell\eps/2$.
    Otherwise, $\ell\eps > 1/2 \geq 1/\ell$, which implies $\eps/2 > 1/(2\ell^2)
    $.
    Combined with $\gamma(\ell) \geq 1/(2\ell^2)$, the soundness guaranteed by \Cref{eq:soundness-ks08} must be at least $1/(2\ell^2)$.
\end{proof}

\subsection{Algorithm with Batches of \texorpdfstring{$2^{d-1}$}{exp(d-1)} Queries}
Our low-degree tester is based on multiple simulations of the parity-check tester with the pattern $\chi_{d-1,m}$, but each simulation uses additional queries to overcome the adversary,
mirroring the simulation of $\blr{k}$ in Algorithm~\ref{alg:linearity-main}.
 The number of iterations and the length of the chain, $m$, are chosen according to the soundness guarantee of these generalized patterns\footnote{Improving the soundness guarantee in \Cref{lem:patterns-soundness} to $\Theta\left(\min\set{\ell\eps, 1}\right)$ would allow a better choice $m, r$, similar to Algorithm~\ref{alg:low-degree-main}, and result in a better query complexity.}.

\begin{algorithm}
 \caption{Online-Erasure-Resilient Degree-$d$ Tester}\label{alg:low-degree-main}
{
    \Given {Parameters $\eps\in(0,1/2]$ and $d,t\in\N$; query access to $f$ via 
    $t$-online erasure oracle sequence $\oracle{}$ with batch size $2^{d-1}$}
    $t\gets\max\{t,2\}$
    \Comment{If $t< 2$, replace it with $t=2.$}\\
    $m \gets 2\ceil{\log\left(\frac{2^{20}2^dt^{1/4}  (d\log t)^{3/2}}{\eps^{1/2}}\right)}+1$, $\alpha\gets\min\set{\frac{\eps\ell}{2},  \frac{1}{2\ell^2}}, r \gets \ceil{\frac{2}{\alpha}}$.\\
     \RepTimes{$r$}  {\label{step:low-degree-repeat}
         Sample $V = (v_1,\dots,v_{d-1}) \in (\cube{n})^{d-1}$ uniformly at random.\\
         Sample $X =(x_1,\dots,x_{2m})\in(\cube{n})^{2m}$ uniformly
         at random.\\
         {\bf for each $i\in[2m]$}: batch query $f$ at points $x_i + \sum_{j\in T} v_j$ for all $T \subseteq [d-1]$.\\
         Sample uniformly at random a subset of $[2m]$ of size $m$, denoted $S = \set{s_1,\dots,s_{m}}$.\\
         Set $y = \xor_{j\in S} x_j$ and batch query $f$ at points $y + \sum_{j\in T} v_j$ for all $T \subseteq [d-1]$.\\ 
         \If{$T_{\chi_{d-1,m},\oracle{}}(v_1,\dots,v_{d-1},x_{s_1},\dots,x_{s_m}) = 1$}{\textbf{Reject}
         \Comment{This implies no erasures in this iteration.}
         }
        
    }
        \textbf{Accept}
  }
\end{algorithm}

\subsection{Probability of Seeing an Erasure}

To analyze the probability of seeing an erasure, we use the structured (``cube'') part of our patterns, which is evident in our tester as well. This is done, in essence, by considering the subspace $A = sp\set{v_1,\dots,v_{d-1}}$ and observing the quotient group $\F_2^n / A$, where each element has the form $x + A$ (corresponding to $2^{d-1}$ points in $\F_2^n$). Projecting all our queries and erasures to this group brings us back to the linearity tester.  

More formally, assume $dim(A) = d-1$ (if it is smaller, the argument only improves), and let $U = (u_d, \dots, u_n)$ complete the set of vectors $V$ to a basis of the entire space.
Define $B = sp\set{u_d,\dots,u_n}$, and let $\pi_B$ be a projections of any vector $x\in\F_2^n$ to $B$ using its unique representation over the basis $V \cup U$.
The image of the projection is $B$, which is isomorphic to $\F_2^{n-d+1}$, and for each $u\in B$, its pre-image is exactly $\pi_B^{-1}(u) = u + A = \set{u+v \ : \  v\in A}$.

We can therefore recast a single iteration of Algorithm~\ref{alg:low-degree-main} in terms of $B$ alone. Projection of a uniformly random point $x_i\in \F_2^n$ is a uniformly random point $u_i\in B$, with $x_i + A = u_i + A$ (which defines the same batch of queries).
Furthermore, after choosing the random subset of indices $S$, we get
\[
    \pi_B(y) = \pi_B\left(\xor_{j\in S} x_j\right) = \xor_{j\in S} \pi_B(x_j) \in B ,
\]
for which we again query $y + A = \pi_B(y) + A$.
It is now clear that an iteration of Algorithm~\ref{alg:low-degree-main} reduced to its queries on $B$ is equivalent to an iteration of Algorithm~\ref{alg:linearity-main} on an entire space of dimension $n' = n - d +1$ (which $B$ is isomorphic to).
Lastly, we strengthen our adversary, by saying each time it tries to erase a point $x\in \F_2^n$, the entire set $x + A$ is erased instead, and we think of it as an erasure of $\pi_B(x)$ in $B$. Thus, the probability of seeing an erasure, even with the strengthened adversary, is that of Algorithm~\ref{alg:linearity-main} seeing an erasure.

Applying the same arguments as in \Cref{claim:no_collisions_for_y}, with $n' = n - d + 1$, the adjusted number of batches, $2m$, and adjusted premise on $t$ leads to the following claim.
\begin{claim}
    \label{claim:low-degree_no_collisions_for_y}
    Fix an iteration of the loop in Step~\ref{step:low-degree-repeat} of Algorithm~\ref{alg:low-degree-main}, and denote $u_i = \pi_B(x_i)$. 
    For all $T\subseteq[2m]$,
    let $y_{_T} = \bxor_{j\in T} u_j$.
    Then, the probability there exist two subsets $T_1\neq T_2$ of the set $[2m]$ with $y_{_{T_1}} = y_{_{T_2}}$ is small:
    \[
        \Prob{X}{\exists T_1\neq T_2 \text{\rm\ such that\ } y_{_{T_1}} = y_{_{T_2}}} 
        \leq\frac{2^{4m}}{2^{n'}}
        \leq \frac{\eps}{2^{20} 2^{2d} (d\log t)^2},
    \]
    as long as $t \log^7 t\leq c\cdot \eps^{2.5}d^{-7} 2^{(n-11d)/2}$ for some constant $c$.
\end{claim}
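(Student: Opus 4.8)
The plan is to mirror the collision argument from \Cref{claim:no_collisions_for_y} verbatim, after replacing the ambient space $\F_2^n$ with the isomorphic copy $B$ of $\F_2^{n'}$ where $n'=n-d+1$, the number of sampled vectors $m$ with $2m$, and the premise on $t$ with the one in the statement. First I would fix an iteration and condition on the choice of $V$ (equivalently, on the subspace $A=sp\{v_1,\dots,v_{d-1}\}$ of dimension $d-1$); the points $u_i=\pi_B(x_i)$ are then i.i.d.\ uniform on $B\cong\F_2^{n'}$, since $\pi_B$ pushes the uniform distribution on $\F_2^n$ forward to the uniform distribution on $B$. For two distinct subsets $T_1\neq T_2\subseteq[2m]$, pick an index $\ell$ in the symmetric difference, fix all $u_j$ with $j\neq\ell$, and note that over the uniform choice of $u_\ell$ the vector $y_{T_1}\xor y_{T_2}$ is uniform on $B$, so $\Pr[y_{T_1}=y_{T_2}]=2^{-n'}$; averaging over the other coordinates keeps this at $2^{-n'}$, exactly as in the original claim.

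Next I would union-bound over the at most $2^{2m}$ subsets $T$ (so at most $2^{4m}$ ordered pairs) to get $\Pr[\exists T_1\neq T_2: y_{T_1}=y_{T_2}]\leq 2^{4m}/2^{n'}$, which is the first displayed bound. For the second bound, I would substitute the value $m=2\ceil{\log(2^{20}2^d t^{1/4}(d\log t)^{3/2}/\eps^{1/2})}+1$ from Algorithm~\ref{alg:low-degree-main}: each of the two $\lceil\cdot\rceil$ factors contributes, so $2^m \leq C\cdot 2^{2d} t^{1/2}(d\log t)^3/\eps$ for an explicit constant $C$, hence $2^{4m}\leq C^4\cdot 2^{8d} t^{2}(d\log t)^{12}/\eps^4$. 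It then remains to check that $n'=n-d+1$ is large enough that $2^{4m}/2^{n'}\leq \eps/(2^{20}2^{2d}(d\log t)^2)$; rearranging, this asks for $2^{n'}\geq 2^{20+2d}(d\log t)^2\eps^{-1}\cdot 2^{4m}$, and plugging in the bound on $2^{4m}$ reduces it to a condition of the form $2^{n-d}\geq \mathrm{const}\cdot 2^{10d} t^2 (d\log t)^{14}\eps^{-5}$, i.e.\ $t\log^7 t \leq c\cdot\eps^{2.5} d^{-7} 2^{(n-11d)/2}$ for a suitable absolute constant $c$ (taking square roots and absorbing the $(d\log t)^{14}$ versus $d^{-14}\log^{14}t$ bookkeeping and the factors of $2^{9d}$ versus $2^{10d}$ into the constant and the slack between $11d$ and $10d$). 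This is just the hypothesis stated in the claim, so we are done.

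The only real subtlety—and hence the step I expect to be the main obstacle—is the arithmetic reconciling the exponent $(n-11d)/2$ with what the substitution actually produces. One must be careful that $d<n/11$ (from \Cref{thm:d_deg:chain_of_cubes}) guarantees $n-11d>0$ so the premise is non-vacuous, that $n'=n-d+1$ rather than $n-d$ only helps, and that the various polynomial-in-$d$ and polynomial-in-$\log t$ factors coming from $2^{4m}$ are dominated by the exponential slack $2^{(11d-2d)/2}=2^{4.5d}$ and by replacing $t^2$-type terms with $t\log^7 t$ squared; all of these are absorbed into the single constant $c$, which is why the claim only asserts ``for some constant $c$'' rather than an explicit value. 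Everything else is a direct transcription of the linearity-tester collision argument, legitimized by the quotient-group reduction set up in the paragraph preceding the claim.
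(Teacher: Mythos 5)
Your proposal is correct and is exactly the intended argument. The paper itself gives no self-contained proof of this claim; it simply asserts that one should ``apply the same arguments as in Claim~\ref{claim:no_collisions_for_y}, with $n' = n - d + 1$, the adjusted number of batches, $2m$, and adjusted premise on $t$.'' You carry out precisely that transcription and verify the arithmetic: the $u_i$ are i.i.d.\ uniform on $B \cong \F_2^{n'}$, a fixed ordered pair of distinct subsets collides with probability $2^{-n'}$, the union bound over $\leq 2^{4m}$ pairs gives $2^{4m}/2^{n'}$, and substituting $2^m \leq C\, 2^{2d} t^{1/2}(d\log t)^3/\eps$ (from the setting of $m$ in Algorithm~\ref{alg:low-degree-main}) reduces the second inequality to exactly the hypothesis $t\log^7 t \leq c\,\eps^{2.5}d^{-7}2^{(n-11d)/2}$ after squaring and rearranging. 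Two small remarks: (i) your worry about slack between $10d$ and $11d$ is unfounded — the extra $d$ comes cleanly from $n' = n-d+1$, so the exponent $11d$ emerges exactly, with no slack to absorb; and (ii) you should note (as the paper does for the surrounding text) that if $\dim(A)<d-1$ then $n' > n-d+1$ and the bound only improves, so conditioning on $\dim(A)=d-1$ is without loss.
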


Finally, we can obtain the following lemma. We omit its proof which is identical to the proof of \Cref{claim:linearity_no_erasures}, using $n'\geq 4m$ (instead of $n\ge 2m$ there).

\begin{lemma}
    \label{claim:lowd-egree_no_erasures}
    For all $m\geq 5$, and $t$ as above, the probability that one specific iteration  of the loop in Line~\ref{step:low-degree-repeat} of Algorithm~\ref{alg:low-degree-main} queries an erased point is at most
    $$\frac{4tr m^2}{2^{2m+1}} + \frac{\eps}{2^{20} 2^{2d} (d\log t)^2} .$$
\end{lemma}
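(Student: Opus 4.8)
The plan is to transcribe the proof of \Cref{claim:linearity_no_erasures} into the quotient group $B\cong\F_2^{n'}$ with $n'=n-d+1$, using the reduction established in the discussion preceding \Cref{claim:low-degree_no_collisions_for_y}. Recall from that discussion that, after projecting every query and erasure of one iteration of Algorithm~\ref{alg:low-degree-main} via $\pi_B$, the iteration becomes exactly an iteration of Algorithm~\ref{alg:linearity-main} run over a space of dimension $n'$, with the single bookkeeping change that the reserve consists of $2m$ projected points $u_i=\pi_B(x_i)$ rather than $m$; and that it suffices to analyze the \emph{strengthened} adversary which, on each attempted erasure of $x$, wipes out the entire coset $x+A$, i.e.\ a single point of $B$. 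Since Algorithm~\ref{alg:low-degree-main} issues $r(2m+1)$ batch queries over the whole run and the budget-managing adversary is allotted $t$ erasures per batch, the strengthened adversary erases at most $tr(2m+1)$ points of $B$ in total.

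With this in hand, I would introduce the three bad events of \Cref{claim:linearity_no_erasures}, now interpreted in $B$: let $B_1$ be the event that some reserve projection $u_i$ ($i\in[2m]$) is erased; let $B_2$ be the event that two distinct size-$m$ subsets $S$ of $[2m]$ yield the same value $\bxor_{j\in S}u_j$; and let $B_3$ be the event that the final projection $\pi_B(y)$ is erased. Each $u_i$ is uniform over $B$, so a union bound over the $2m$ reserve points together with $n'\geq 4m$ gives $\Pr[B_1]\leq tr(2m+1)\cdot 2m/2^{n'}\leq tr(2m+1)\cdot 2m/2^{4m}$. The bound $\Pr[B_2]\leq\eps/(2^{20}2^{2d}(d\log t)^2)$ is precisely \Cref{claim:low-degree_no_collisions_for_y}. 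Conditioned on $\overline{B_2}$, the projected final query $\pi_B(y)$ is uniform over $\binom{2m}{m}\geq 2^{2m}/\sqrt{4m}$ distinct points of $B$, whence $\Pr[B_3\mid\overline{B_2}]\leq tr(2m+1)\sqrt{4m}/2^{2m}$.

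Exactly as in \Cref{claim:linearity_no_erasures}, a union bound plus the law of total probability gives $\Pr[B_1\cup B_3]\leq\Pr[B_1]+\Pr[B_3\mid\overline{B_2}]+\Pr[B_2]$, so it only remains to verify that $\Pr[B_1]+\Pr[B_3\mid\overline{B_2}]\leq 4trm^2/2^{2m+1}$. Factoring out $tr/2^{2m}$, this reduces to the elementary inequality $(2m+1)\cdot 2m/2^{2m}+(2m+1)\cdot 2\sqrt{m}\leq 2m^2$; its dominant term is $(2m+1)\cdot 2\sqrt m$, which is at most $m^2$ precisely when $2m+1\leq m^{3/2}$, and this holds for all $m\geq 5$ (the $2^{2m}$-suppressed term is comfortably absorbed by the remaining slack). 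Adding back $\Pr[B_2]$ yields the stated bound.

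The reduction to $B$ is the only conceptual ingredient, and it has already been spelled out in the text preceding \Cref{claim:low-degree_no_collisions_for_y}; consequently the remaining work is purely mechanical, and the one place demanding a moment of care is checking the constant in the final elementary inequality at the boundary value $m=5$. Everything else is a line-by-line copy of the linearity argument under the substitutions $n\mapsto n'$, reserve size $m\mapsto 2m$, and $\binom{m}{m/2}\mapsto\binom{2m}{m}$.
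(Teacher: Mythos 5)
Your proof is correct and follows exactly the route the paper intends: the paper explicitly omits the proof with the remark that it is "identical to the proof of \Cref{claim:linearity_no_erasures}, using $n'\geq 4m$." You have carried out that transcription faithfully — projecting to $B\cong\F_2^{n'}$ via the strengthened adversary, replacing the reserve size $m$ by $2m$, replacing $\binom{m}{m/2}$ by $\binom{2m}{m}$, and invoking \Cref{claim:low-degree_no_collisions_for_y} for the $B_2$ bound — and the closing elementary inequality $(2m+1)\cdot 2m/2^{2m}+(2m+1)\cdot 2\sqrt{m}\leq 2m^2$ does indeed hold for all $m\geq 5$ (at $m=5$ the left side is about $49.3$ versus $50$).
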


\subsection{Analysis of the Online Tester}

Now we complete the analysis of Algorithm~\ref{alg:low-degree-main} and prove \Cref{thm:d_deg:chain_of_cubes} 
for erasure adversaries. 
Similar to the case of linearity, our tester has a small constant probability of ever seeing a manipulated entry, and hence it is robust against corruptions as well.
\short{}{
\dDegChainOfCubesThm*}
\begin{proof}[\short{Proof of Theorem~\ref{thm:d_deg:chain_of_cubes} for the case of erasures}{Proof for the case of erasures}]
We show that Algorithm~\ref{alg:low-degree-main} satisfies the conditions of the theorem. It always accepts all degree-$d$ functions.
Now, fix an adversarial (budget-managing) strategy and suppose that the input function is $\eps$-far from any degree $d$ function.

By \Cref{lem:patterns-soundness} with $\ell = (m+1)2^{d-1}$ points in the $\chi_{d-1,m}$ pattern, the probability that one iteration the loop in Step~\ref{step:low-degree-repeat}  samples a witness for not being degree $d$ is 
$\alpha=\min\set{\frac{\ell\eps}{2},  \frac{1}{2\ell^2}}.$ 
We split the analysis into two cases, depending on which term achieves the minimum.

        \textbf{Case I:} 
        $\alpha=\ell\eps/2$,
    the number of iterations is $r=\ceil{\frac{2}{\alpha}}=\ceil{\frac{4}{\ell\eps}}$
    which means that $r\ell\eps \leq \left(\frac{4}{\ell\eps}+1\right)\ell\eps=4+\ell\eps\leq 5$. 
    By \Cref{claim:lowd-egree_no_erasures}, the probability that an erasure is seen within a single iteration is at most
  \begin{align*}\label{eq:low-degree-bound-cases}
         \frac{4tr m^2}{2^{2m+1}} + \frac{\eps}{2^{20} 2^{2d} (d\log t)^2} .
     \end{align*}
    We bound the second term naively by $\frac{\eps}{100}$. For the first term, we use the setting of $m$ in the algorithm which implies $2^{2m}\geq \frac{2^{20}t}{\eps^2}$, and the fact that $m \leq \ell$ to get
    \[
   \frac {4r m^2 \eps^2}{2^{20}}\leq \frac {4(r m \eps) m\eps }{2^{20}} 
    \leq
    \frac {20 m \eps}{2^{20}} 
    \leq \frac { m \eps}{2^{15}}.
    \]
    The probability that an erasure is seen in a specific iteration is therefore at most
    \begin{align*}\label{eq:low-degree-case1}
       \frac { m \eps}{2^{15}} + \frac {\eps}{100}
      < \frac {\ell \eps}{80}.
    \end{align*}
     By a union bound, the probability of a single iteration seeing an erasure or not selecting a witness is at most $1-\frac{\ell\eps}{2} + \frac{\ell\eps}{80} \leq 1 - \frac{\ell\eps}{3}$.
    Algorithm~\ref{alg:low-degree-main} errs only if this occurs in all iterations. By independence of random choices in different iterations, the failure probability is at most
     \[
        \Big(1-\frac{\ell\eps}{3}\Big)^r
        \leq \Big(1-\frac{\ell\eps}{3}\Big)^{\frac{4}{\ell\eps}}
        \leq e^{-\frac 4 3} 
        \leq \frac{1}{3},
    \]
where we used $1-x\leq e^{-x}$ for all $x$. 
\\
The total number of queries used in this case is at most $r(2m+1)b \leq 2r\ell \leq 10/\eps$.
    
    \textbf{Case II:}
    $\alpha=\frac{1}{2\ell^2}$, and $r = 4\ell^2$. We first bound $m$, set in Algorithm~\ref{alg:low-degree-main}, in terms of $d$ and $\log t$.
        \begin{align*}
           m 
          &\leq 41 + 2d+\log t +3\log d+ 3\log\log t + \log(1/\eps)\\
          &\leq 41+4\log t+8d+3\log m\\
          &\leq 20(\log t+d)+3\log m,
        \end{align*}
        The second inequality follows from the fact that, in this case, $\log(1/\eps)\leq 3\log \ell\leq 3(\log m+d)$.
        Since $m\geq 2^5$ we have $3\log m\leq m/2$ which implies (for $t,d \geq 2$)
        $$m\leq 40(\log t+d)\leq 2^6 d \log t .$$  
        To bound the probability that an erasure is seen in one iteration, we start, like in Case I, with \Cref{claim:lowd-egree_no_erasures}.
        We bound the first term recalling that $r=4\ell^2\leq 4m^2 2^{2d}$,
        \[
    \frac{4trm^2}{2^{2m+1}}\leq \frac {8t m^42^{2d} }{2^{2m}}
     \leq \frac{2^3 \cdot 2^{2d} m^4\eps^2}{2^{80}2^{4d}(d\log t)^6}
     \leq \frac{1}{2^{40} m^2 2^{2d}}
     \leq \frac{1}{20\ell^2} ,
\]
where the second inequality plugs in the value of $m$ set in the algorithm to bound $t/2^{2m}$, and the third uses $m\leq 40 d \log t$ in the denominator.
Adding the second term, we get at most
\[ 
    \frac{4tr m^2}{2^{2m+1}} + \frac{\eps}{2^{20}2^{2d} (d \log t)^2}
    \leq \frac{1}{20\ell^2} + \frac{1}{20\ell^2} \leq \frac{1}{10\ell^2}.
\]
   
    By a union bound, the probability of a single iteration seeing an erasure or not selecting a witness is at most $1- \frac{1}{2\ell^2} + \frac 1{10\ell^2} \leq 1-\frac{1}{3\ell^2}$. As before, we use the independence of different iterations to conclude the algorithm errs with probability at most 
   $\big(1-\frac 1 {3\ell^2}\big)^{4\ell^2}
   \leq \frac{1}{3}.$
   \\
    The total number of queries used is
    $r(2m+1)b \leq 6m^3 2^{3d} \leq 240\cdot 2^{3d}(d+\log t)^3$.
\end{proof}

\section{Online Erasure-Resilient Testing of Local Sequential Properties}\label{sec:local}

In this section, we establish our results on testing local 
properties of sequences discussed in~\Cref{sec:intro-local-properties}. That is, we prove the following two main results.
\TestingLocalPropertiesBudgetManagingTHM*
\SortednessBatchTwoTHM*

Our core technical result in this section shows that a simple pair tester (presented in Algorithm~\ref{alg:pair_tester_local_properties}) can test all local properties of sequences. 

For the algorithm, for $d, n \in \N$, where $d<n$, 
we define the set of all distance-$d$ pairs in $[n]$ by\footnote{Note that the definition is cyclic, in the sense that allows for pairs where $x$ is close to $n$, and $y$ is much smaller and close to $1$. The cyclic nature of the definition slightly simplifies the algorithm and analysis.}
$$
D_d(n) = \{(x,y) \in [n]^2 : y-x \equiv d \ (\text{mod}\ n)\}.
$$
We also define the \emph{interval} $ 
[a:b]$ as the set of all integers $\{i \in \mathbb{Z} : a \leq i \leq b\}$. The \emph{length} of the interval $[a:b]$ is $b-a+1$. Our algorithm relies on a fundamental notion of \emph{unrepairability} in a sequence $f$ with respect to a property $\mathcal{P}$. The notion we use  
is a slightly generalized version of a notion of unrepairability for intervals originally proposed in \cite{BE19}.
\begin{definition}[Unrepairability]
Let $\mathcal{P}$ be a local property of sequences $f \colon [n] \to \R$ characterized by the forbidden family $\mathcal{F}$. 
For a subset $S \subseteq [n]$, a sequence $f$ is \emph{unrepairable} on $S$ with respect to $\mathcal{P}$ if the following holds: 
every sequence $f' \colon [n] \to \R$, where $f'(x) = f(x)$ for all $x \in S$,  
does not satisfy $\mathcal{P}$. (I.e., every $f'$ that agrees with $f$ on  
all entries in $S$ contains a pattern from $\mathcal{F}$.)
\end{definition}

Crucially, given $\mathcal{P}$ and query access to $f$, in order to know whether $f$ is unrepairable on $S$ w.r.t.$~\mathcal{P}$, one needs to query $f$ only on the elements of $S$. For our purposes, $|S| \leq 2$ always holds, and so  
we only need to make two queries in order to check unrepairability.

Note that, up to some edge cases, an interval which satisfies the notion of unrepairability from \cite{BE19} is called 
a \emph{witness interval} in this paper (see \Cref{def:interval_unrepairability} below); the notion of a witness interval is used in the analysis, but not in the algorithm.

\begin{algorithm}[]
 \caption{Pair tester for local properties}\label{alg:pair_tester_local_properties}
    \Given {local property $\mathcal{P}$, $\eps\in(0,1)$; 
    query access to $f : [n] \to \R$ (in offline model)  
    \\
    or, for $t \in \R$ and $b \in \N$,  via $t$-online erasure oracle sequence $\oracle{}$ with batch size $b$} 
    Set $\ell = \lfloor \log \left(\frac{\eps n}{4}\right)\rfloor$.\\
    \RepTimes{$\frac{200 \log (\eps n)}{\eps}$}{
    Sample $i \in [0 : \ell]$ uniformly at random,\\
    Sample $(x,y) \in D_{2^i}(n)$ uniformly at random;
    \textbf{query} $f(x)$ and $f(y)$.\\    
    \If{$f$ is unrepairable on $\{x,y\}$ 
    }{\textbf{Reject}}
    }
    \textbf{Accept}
 \end{algorithm}
 
We first show that the pair tester works for every local property $\mathcal{P}$ in the \emph{offline} property testing model. The statement is given in the following lemma and proved in
\Cref{sec:pair_tester_via_shifted}.
\begin{lemma}[Offline correctness of pair tester]\label{lem:pair_tester_offline}
Algorithm~\ref{alg:pair_tester_local_properties} is a nonadaptive $\eps$-tester with one-sided error probability bounded by $1/7$ for  
every local
property $\mathcal{P}$ in the offline testing model (without erasures or corruptions).
\end{lemma}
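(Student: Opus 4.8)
The plan is to reduce the pair tester to a more structured, randomly-shifted variant of the generic interval-based tester of Ben-Eliezer \cite{BE19}, prove correctness of that variant, and then observe that every query the structured tester makes is a distance-$2^i$ pair, so the pair tester simulates it faithfully. Completeness is immediate: if $f \in \mathcal{P}$, then $f$ is never unrepairable on any set $\{x,y\}$ (since $f$ itself witnesses repairability), so the tester always accepts, giving one-sided error. The work is all in soundness.

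For soundness, suppose $f$ is $\eps$-far from $\mathcal{P}$. The key combinatorial input is the structure theorem for local properties from \cite{BE19}: if $f$ is $\eps$-far from $\mathcal{P}$, then $[n]$ contains many short disjoint \emph{witness intervals} — intervals $[a:b]$ on which $f$ is unrepairable — and moreover one can take these intervals to all have roughly the same dyadic length scale $2^i$ for some $i \leq \ell$, covering an $\Omega(\eps)$ fraction of $[n]$ (after losing only a logarithmic factor by pigeonholing over the $O(\log \eps n)$ scales). First I would make this precise: there is a scale $i \in [0:\ell]$ and a collection of $\Omega(\eps n / \log(\eps n))$ pairwise-disjoint witness intervals each of length in $[2^i, 2^{i+1})$. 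The second ingredient is that on a witness interval $[a:b]$, by minimality/definition of unrepairability one can find a \emph{pair} $(x,y)$ with $x,y \in [a:b]$ and $y - x = 2^{i}$ (or some fixed power of two up to the scale) such that $f$ is already unrepairable on $\{x,y\}$ — this is where the randomly-shifted variant helps, since shifting the dyadic grid by a uniform random offset makes it so that with constant probability a given witness interval contains a full grid-pair at the right scale. I would phrase this as: conditioned on sampling scale $i$, a uniformly random pair from $D_{2^i}(n)$ lands inside one of the witness intervals (in the right configuration) with probability $\Omega(\eps/\log(\eps n))$, and any such pair causes rejection.

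Putting the pieces together: in a single iteration, the tester picks scale $i$ with probability $\frac{1}{\ell+1} = \Omega(1/\log(\eps n))$, and conditioned on that, hits a rejecting pair with probability $\Omega(\eps / \log(\eps n))$ — wait, I would instead arrange the counting so the product is $\Omega(\eps/\log(\eps n))$: the witness intervals at the good scale cover an $\Omega(\eps/\log(\eps n))$ fraction already accounts for the scale loss, and conditioned on the correct $i$ (probability $\Omega(1/\log \eps n)$) a random distance-$2^i$ pair is rejecting with probability $\Omega(\eps)$. Either way a single iteration rejects with probability $\Omega(\eps / \log \eps n)$, so $\Theta(\log(\eps n)/\eps)$ iterations drive the failure probability below $1/7$ by a standard $(1-p)^{1/p}$-type bound; the constant $200$ in the algorithm is chosen to make this clean. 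I would also handle the degenerate regime $n = O(1/\eps)$ (where $\ell$ might be $0$ or negative) separately or note it is covered by the hypothesis $n \geq C/\eps$ in the theorems that invoke the lemma, though the lemma as stated should be argued for all $n$ where $\eps n \geq $ some small constant; if $\eps n < 4$ then $\ell < 0$ and one should observe $f$ being $\eps$-far forces $\eps \le 1$ and a trivial $O(1/\eps)$-query brute-force works, or simply restrict attention to the meaningful range.

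The main obstacle is the combinatorial structure step: extracting, from "$f$ is $\eps$-far from a local property $\mathcal{P}$", a single dyadic scale $i$ at which there are $\Omega(\eps n/\log \eps n)$ disjoint witness intervals, \emph{and} showing that a witness interval at scale $2^i$ reliably contains a distance-$2^i$ pair on which $f$ is unrepairable. The first part follows the covering/pigeonhole argument of \cite{BE19}, but the second part — that unrepairability of a whole interval can be "witnessed" by a single distance-$2^i$ pair — needs the forbidden-family characterization of local properties: unrepairability of $[a:b]$ means every extension contains a forbidden consecutive pair, and a compactness/interval argument shows this localizes; the random shift of the grid is what guarantees that with constant probability the grid pair at scale $i$ sits at the right location inside the interval rather than straddling its boundary. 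I expect formalizing "the structured tester can be simulated by the pair tester" to be routine once the shifted tester is set up, so the crux is really proving the shifted interval tester rejects $\eps$-far sequences with probability $\Omega(\eps/\log \eps n)$ per sampled scale-and-location.
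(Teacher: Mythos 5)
Your proposal follows essentially the same route as the paper: introduce a randomly-shifted hierarchical variant of Ben-Eliezer's interval tester, prove its soundness via a structural lemma about witness intervals, and then show the pair tester simulates it via a marginal/coupling argument. The differences are in bookkeeping and in how you motivate the shift, and you correctly flag where the hard work lies. Two remarks on the details.

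First, you propose to pigeonhole to a single good scale $i$ with $\Omega(\eps n / \log(\eps n))$ disjoint witness intervals, and then localize interval-unrepairability to a distance-$2^i$ pair. The paper instead proves a weighted sum over \emph{all} scales: for any shift $a$, if $\mathcal{U}$ is the set of maximal witness intervals in the shifted partition $T=(T_0,\dots,T_\ell)$, then $\sum_{i=0}^{\ell} |T_i \cap \mathcal{U}|/|T_i| \geq \eps/8$ (Lemma~6.11). Averaging over the uniformly chosen scale then loses only the $1/(\ell+1)$ factor, giving per-iteration rejection probability $\Omega(\eps/\log(\eps n))$. The two bookkeepings are equivalent up to constants, and yours could be made to work.

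Second, the ``localization'' step you label the main obstacle is genuinely there, but it is cleaner than a compactness argument. If $f$ is unrepairable on $\{a,b\}$ (as a function of $(f(a),f(b))$ only) and $a<c<b$, then for \emph{any} value $v$ at position $c$, one of the pairs $(f(a),v)$ on $[a:c]$, $(v,f(b))$ on $[c:b]$ must be unrepairable --- otherwise concatenating the two repairs would repair $[a:b]$. The paper builds this recursion directly into the hierarchical partition: intervals in $T_i$ have length $2^i+1$, a witness interval is \emph{defined} to be one whose two endpoints already form an unrepairable pair (Definition~6.10), and the counting is carried out via the ``maximal witness'' structure (Claim~6.13). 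So there is no separate localization lemma to prove; the pair and the interval are identified from the outset.

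Finally, the role of the random shift is not mainly to make a witness interval ``contain a grid pair at the right location'' --- it is to make the structured tester's marginal distribution over distance-$2^i$ pairs nearly uniform (Lemma~6.12 gives an $\leq 8/(n\log \eps n)$ bound per pair), which is what lets ten independent pair-tester iterations stochastically dominate one structured-tester iteration in the coupling proof of Lemma~6.3. Your intuition about the shift is not wrong, but it is only a side effect.
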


The pair tester not only works in the offline setting, but it is also unlikely to see erasures in the online setting when $t$ is appropriately small. The next two lemmas state results of this type for batch sizes 1 and 2,  
respectively.

\begin{lemma}[Pair tester, batch size 1]
\label{lem:local_batch_one}
Fix a local property $\mathcal{P}$, $\eps > 0$, $n \in \N$, and $t \leq \eps / 10^6$. The probability that
Algorithm \ref{alg:pair_tester_local_properties}, when run in the online erasure model with parameters $\mathcal{P}, \eps, n, t$ and batch size $b=1$, queries an erased element is at most $1/7$. \end{lemma}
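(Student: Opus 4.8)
The goal is to bound, by $1/7$, the probability that Algorithm~\ref{alg:pair_tester_local_properties} ever queries an element that has already been erased, when $b=1$ and $t \le \eps/10^6$. The algorithm makes $R = \frac{200\log(\eps n)}{\eps}$ rounds, each round querying a pair $(x,y)$ with $y - x \equiv 2^i \pmod n$ for a uniformly random $i \in [0:\ell]$, where $\ell = \lfloor \log(\eps n/4)\rfloor$; hence the total number of queries is $2R$, and the total number of erasures the adversary can perform over the whole run is at most $t\cdot 2R \le \frac{2R\eps}{10^6} = \frac{400\log(\eps n)}{10^6} = O(\log \eps n)$. I would fix an arbitrary adversary strategy and expose the tester's randomness one round at a time, so that at the start of round $j$ the set $E_j$ of already-erased points is a fixed set of size $|E_j| \le 2tj$.

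**Splitting into the two erasure types.** Within a round, the tester first queries $f(x)$, the adversary erases up to $\lfloor (2j)t\rfloor - \lfloor(2j-1)t\rfloor$ points (at most one, since $t<1$ — actually at most one on average but I'd just say at most $t$ worth), then the tester queries $f(y)$. So there are two distinct ways a round can query an erased point: (a) $x$ or $y$ was already in $E_j$ before the round started — call this a "stale" hit; (b) $x \notin E_j$, but immediately after querying $f(x)$ the adversary erases $y$ (it knows $x$, knows $i$ from the query structure, so it knows exactly which $y$ the tester will ask, and can erase precisely that point) — call this an "adaptive" hit. For type (a): conditioned on $i$, each of $x$ and $y$ is marginally uniform over $[n]$ (the pair is uniform in $D_{2^i}(n)$, which has $n$ elements, one per value of $x$), so $\Pr[\{x,y\}\cap E_j \ne \emptyset \mid i] \le 2|E_j|/n \le 4tj/n$. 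Summing over $j \le R$ gives $O(t R^2/n) = O\!\big(\frac{\log^2(\eps n)}{\eps n}\big) \le 1/14$ once $n \ge C/\eps$; here I'd use $t \le \eps/10^6$ and choose constants generously. For type (b): this is the dangerous case that is \emph{only} available when $b=1$. The key point is that the adversary gets to erase $y$ knowing $x$ and the distance $2^i$ — so it can always succeed at making this hit, \emph{for at most one round's worth of erasure budget}. The number of rounds in which the adversary chooses to spend its (at most) $\le 2tR \le 400\log(\eps n)/10^6$ total budget on type-(b) attacks is at most that many; but each such attack deterministically causes a hit. Wait — that would be too many. The resolution: a type-(b) attack requires the adversary to erase $y$ \emph{after} seeing $x$, so it costs essentially one unit of budget per attacked round, and since $t<1$ the adversary accumulates at most $\lfloor 2Rt \rfloor$ units over all $2R$ queries; but actually the erasures interleaved with queries mean only $\lfloor (2j)t\rfloor - \lfloor (2j-1)t\rfloor \le 1$ per step, and this is $0$ for most steps. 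So type-(b) hits happen in at most $\lfloor 2Rt\rfloor \le \frac{400\log(\eps n)}{10^6}$ rounds total. Hmm, that's $\Theta(\log \eps n)$, not $o(1)$ — so the crude bound fails, and the real argument must be more subtle.

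**The genuine obstacle and how I'd resolve it.** The subtlety is that a type-(b) attack on round $j$ is only "wasted" (guaranteed to cause a hit) \emph{if} the adversary commits the erasure budget to that round; but the tester, in round $j$, picks $i$ uniformly from $[0:\ell]$ \emph{and then} $x$ uniformly. The adversary, to attack round $j$, must predict which $y$ will be queried — but it only learns $x$ and $i$ after the first query of the round, which is $x$ itself. So by the time it knows $y$, it is too late to be "wasteful": it simply erases $y$ and the hit is guaranteed. This means type-(b) really does cost exactly one erasure per guaranteed hit, and the adversary can guarantee $\lfloor 2Rt \rfloor = \Theta(\log \eps n)$ hits. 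So the statement as I've set it up cannot be literally "probability of ever hitting an erased element $\le 1/7$" under this reading — which tells me I've mis-modeled the timing. The correct model (cf. \Cref{def:online_tester} and the batch-$b$ definition) must be that within a single batch/round the adversary does \emph{not} get to intervene between the two queries when... no, with $b=1$ each query is its own batch, so the adversary \emph{does} intervene. The resolution must instead be that the tester in Algorithm~\ref{alg:pair_tester_local_properties} does not reveal $i$ or $x$ before querying — but it must query $f(x)$ first.

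Re-reading: the real point, which I would make the core of the proof, is \emph{amortization against the reject condition}. If the adversary erases $y$ after seeing $x$, that is fine for the tester's correctness as long as this is rare relative to the chance of catching a witness — but the lemma claims the stronger "never queries an erased element with prob.\ $\ge 6/7$". So the actual mechanism must be: after querying $x$, there are still many choices of $y$ from the adversary's viewpoint — but there aren't, $y$ is determined by $x$ and $i$, and the adversary saw the query $x$ which reveals $i$. Unless the query to the oracle is just the point $x$ with no side information about $i$ — then the adversary sees $x$ but not $i$, and $y \in \{x + 2^0, x+2^1, \dots, x+2^\ell\}$, so there are $\ell+1 \approx \log(\eps n)$ equally likely candidates for $y$. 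That is the key: the adversary, to erase $y$, must guess among $\ell+1$ options, and each of its (at most $\lfloor 2Rt\rfloor$) type-(b) erasure attempts succeeds with probability $\le 1/(\ell+1)$. Hence the expected number of type-(b) hits is at most $\frac{2Rt}{\ell+1} \le \frac{400\log(\eps n)/10^6}{\log(\eps n)/2} = \frac{800}{10^6} < 1/14$, and a union/Markov bound finishes it. So the plan is: (1) fix adversary, expose randomness round by round; (2) bound stale hits by $\sum_j 4tj/n = O(tR^2/n) \le 1/14$ using marginal uniformity of $x,y$; (3) for adaptive ($y$-after-$x$) hits, observe that conditioned on the query $x$ (the only information the adversary has when choosing the mid-round erasure, since the oracle answer does not reveal $i$), $y$ is uniform over $\ell+1$ candidates, so each of the $\le 2Rt$ mid-round erasure opportunities causes a hit with probability $\le 1/(\ell+1)$; sum to get $\le \frac{2Rt}{\ell+1} \le 1/14$; (4) union bound: total probability $\le 1/7$. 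The main obstacle, as the discussion shows, is getting the adversary's information set exactly right — specifically that querying $f(x)$ leaks $x$ but not the distance parameter $2^i$ — and then correctly accounting that mid-round ("between the two queries of a pair") erasures are the only ones benefiting from adaptivity, which is precisely why $b=1$ is the hard regime and $b=2$ (next lemma) is dramatically easier. I would be careful in step (3) to formalize "the adversary's view after the $x$-query" and to handle the edge case where $x$ itself was already erased (that is a stale hit, counted in step (2), so no double counting), and to note that the $\ell+1$ candidates for $y$ are genuinely distinct points in $[n]$ since $2^0, \dots, 2^\ell$ are distinct residues mod $n$ (using $\ell \le \log(\eps n/4) < \log n$).
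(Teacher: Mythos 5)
Your final plan (steps (1)--(4) at the end of your proposal) matches the paper's proof almost exactly: the paper also splits erasures into blind/stale ones (bounded via the marginal uniformity of each query, giving a $O(tq^2/n)$ term) and relative/adaptive ones (bounded by observing that the adversary sees $x$ but not the distance $2^i$, so the second query $y$ is uniform over $\ell+1$ candidates and each mid-round erasure hits with probability $\le 1/(\ell+1)$), then union-bounds the two. Your mid-proposal detour — worrying that the adversary might learn $i$ from the query and be able to guarantee $\Theta(\log \eps n)$ hits — correctly identifies the one place the argument could break, and you resolve it exactly as the paper does (via Bayes on the tester's view); the only cosmetic difference is that you invoke $n \ge C/\eps$ for the blind bound, whereas the paper just uses that $\log^2(\eps n)/(\eps n)$ is bounded by an absolute constant.
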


\begin{lemma}[Pair tester, batch size 2]
 \label{lem:local_batch_two}
Fix a local property $\mathcal{P}$, $\eps > 0$, $n \in \N$, and $t \leq \frac{ \eps^2 n}{3  \cdot 10^5 \log^{2} (\eps n)}$. The probability that
Algorithm \ref{alg:pair_tester_local_properties}, when run in the online erasure model with parameters $\mathcal{P}, \eps, n, t$ and batch size $b=2$, queries an erased element is at most $ 1/7$. \end{lemma}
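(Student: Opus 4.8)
The plan is to follow the same template as the batch-size-1 analysis (\Cref{lem:local_batch_one}), bounding the probability that any of the $R := \frac{200\log(\eps n)}{\eps}$ rounds queries a previously erased element via a union bound over rounds, and within each round distinguishing erasures by \emph{when} they were made relative to the two queries of that round. Since the batch size is $b=2$, both $f(x)$ and $f(y)$ in a given round are answered by the same oracle, i.e.\ no erasures can be inserted \emph{between} the first and second query of a pair. This is the crucial structural advantage that the technical overview alludes to, and it is the reason the tolerable erasure rate jumps from $\tilde\Theta(\eps)$ to $\tilde\Theta(\eps^2 n)$. So the only erasures that can hurt round $r$ are those made in rounds $1,\dots,r-1$ (at most $2(r-1) \le 2R$ queries answered, hence at most $t \cdot 2R$ erased points, using the budget-managing bound $\Dist(\oracle{1},\oracle{i+1}) \le i t$).

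**The per-round bound.** Fix a round and condition on the set $E$ of erased points at the time the round starts, with $|E| \le 2Rt$. The round picks $i \in [0:\ell]$ uniformly, then $(x,y) \in D_{2^i}(n)$ uniformly. The key point is that the marginal distribution of $x$ is (exactly) uniform on $[n]$ — and likewise for $y$ — regardless of $i$, because $D_{2^i}(n)$ is just a cyclic shift of the "diagonal" and has exactly $n$ elements, one for each value of $x$. Hence $\Pr[x \in E] = |E|/n$ and similarly $\Pr[y \in E] = |E|/n$, so by a union bound the probability that round $r$ queries an erased element is at most $\frac{2|E|}{n} \le \frac{4Rt}{n}$. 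Summing over all $R$ rounds gives total failure probability at most $\frac{4R^2 t}{n} = \frac{4 t}{n}\left(\frac{200\log(\eps n)}{\eps}\right)^2 = \frac{160000\, t \log^2(\eps n)}{\eps^2 n}$, and plugging in $t \le \frac{\eps^2 n}{3\cdot 10^5 \log^2(\eps n)}$ makes this at most $\frac{160000}{3\cdot 10^5} < \frac{1}{7}$. (One should double-check the constant: $160000/300000 \approx 0.53$, which is larger than $1/7$, so either the constant $3\cdot 10^5$ needs to be increased or the union bound needs to be slightly sharpened — e.g.\ noting that the relevant count is $\sum_{r=1}^{R} 2\cdot\frac{2(r-1)t}{n} \le \frac{2R^2 t}{n}$ rather than $\frac{4R^2 t}{n}$, or absorbing the small number of queries already made. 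The stated constant in the lemma is presumably chosen to make this work out; I would verify it carefully.)

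**Where the subtlety lies.** The one genuinely delicate point — and the reason this is not entirely a one-line argument — is justifying that $x$ and $y$ are each \emph{marginally} uniform on $[n]$ under the sampling "pick $i$ uniform, then $(x,y) \in D_{2^i}(n)$ uniform." For each fixed $i$ this is immediate since $D_{2^i}(n)$ is a perfect matching of $[n]$ to itself given by $x \mapsto x + 2^i \bmod n$, so projecting to the first (or second) coordinate is a bijection onto $[n]$; averaging over $i$ preserves uniformity. What one must \emph{not} claim is that the \emph{pair} $(x,y)$ is uniform over all of $[n]^2$ or that $x,y$ are independent — they are not — but we never need that; the union bound over the two events $\{x \in E\}$ and $\{y \in E\}$ only uses the two marginals. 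I expect this marginal-uniformity observation, together with carefully tracking that no erasure can land between the two queries of a batch, to be the heart of the proof; everything else is the same accounting as in \Cref{lem:local_batch_one}. The proof for corruptions (rather than erasures) is identical, since the argument only ever uses that the adversary has touched at most $it$ points after $i$ queries and that our queries avoid that set — it never uses what the manipulated value is.
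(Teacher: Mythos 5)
Your approach matches the paper's: both key ideas (that batch size $2$ eliminates relative erasures entirely, and that marginal uniformity of each query plus a union bound over erased points suffices) are exactly how the paper argues, via its \Cref{lem:marginal_pair_tester} and \Cref{coro:marginal_local}. The structure of your argument is correct, and you correctly identified that the constant needs to be verified.

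The reason your constant comes out too large is a factor-of-$2$ slip in the erasure budget: with batch size $b=2$, the budget-managing adversary gets $t$ erasures per \emph{batch}, not per query. Concretely, \Cref{def:fixed-rate_budget-managing} combined with \Cref{def:online_tester} (with batch-$b$ oracle access) gives $\Dist(\oracle{1},\oracle{i+1})\le i\cdot t$ where $i$ indexes \emph{batches}, i.e.\ loop iterations. So when round $r$ begins, the erased set satisfies $|E| \le (r-1)t$, not $2(r-1)t$ as you wrote (you counted $2(r-1)$ answered queries and multiplied by $t$ per query). With the corrected budget, your sharpened accounting $\sum_{r=1}^{R} \frac{2(r-1)t}{n} \le \frac{R^2 t}{n}$ yields $\frac{R^2 t}{n} \le \frac{(200\log(\eps n)/\eps)^2}{n}\cdot\frac{\eps^2 n}{3\cdot 10^5 \log^2(\eps n)} = \frac{40000}{3\cdot 10^5} = \frac{2}{15} < \frac 17$, and everything works. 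The paper's proof invokes \Cref{coro:marginal_local} with $q$ interpreted as the number of iterations and $T=tq$ the total budget, which amounts to the same calculation (up to whether one uses the worst-case budget $T$ uniformly over rounds or the tighter $(r-1)t$); verifying the final inequality either way requires keeping this per-batch vs.\ per-query distinction straight. Your remark about corruptions is right; the paper handles it in the proof of \Cref{thm:sortedness_batch_2} via \cite[Lemma~1.8]{KalemajRV22}.
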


Note the sharp contrast between the threshold rate for $b=1$ ($t = \Theta(\eps)$) and $b=2$ ($t = \tilde{\Theta}(n)$). 

The positive results of Theorems~\ref{thm:sortedness-budget-managing} and~\ref{thm:sortedness_batch_2}  
follow from Lemmas~\ref{lem:pair_tester_offline}, \ref{lem:local_batch_one}, \ref{lem:local_batch_two}. The proof of the negative (impossibility) result of~\Cref{thm:sortedness-budget-managing} is given in~\Cref{sec:impossibility_sortedness}.

\begin{proof}[Proof of~\Cref{thm:sortedness-budget-managing}, Part 1]
For an erasure adversary, 
we use
Lemmas~\ref{lem:pair_tester_offline} and~\ref{lem:local_batch_one}. 
Consider a local property $\cP$ and a sequence $f$ that is $\eps$-far from $\cP$.
Let $A$ be the event that the tester rejects, 
given query access to $f$ 
(not to the erasure oracle). Let $B$ be the event that the tester encounters at least one erasure. The rejection probability of the tester in the online setting is at least 
$$
\Prob{}{A \setminus B} \geq \Prob{}{A} - \Prob{}{B} \geq \frac{6}{7}-\frac{1}{7} > \frac{2}{3},
$$
where the second inequality follows from Lemmas~\ref{lem:pair_tester_offline} and~\ref{lem:local_batch_one}.

For a corruption adversary, 
we use the same argument combined with \cite[Lemma 1.8]{KalemajRV22}. 
\end{proof}

\begin{proof}[Proof of~\Cref{thm:sortedness_batch_2}]
The proof (for both erasures and corruptions) is identical to the above proof of~\Cref{thm:sortedness-budget-managing}, Part 1, except that we use~\Cref{lem:local_batch_two} instead of~\Cref{lem:local_batch_one}.
\end{proof}

We next state our main result for \emph{fixed-rate} adversaries, showing that the threshold rate is arbitrarily close to $1$. The proof appears in~\Cref{sec:pair_tester_erasure_resilience}. A matching negative result is established in \cite{KalemajRV22}, see Theorems 1.5 and 1.7 there. 
\begin{prop}\label{prop:local-fixed-rate}
Fix $\eps > 0$, $t < 1$ and a local property $\mathcal{P}$ of sequences $f \colon [n] \to \R$, where $n \geq \frac{C\left(\log \frac 1\eps + \log \frac {1}{1-t}\right)^2}{\eps^2 (1-t)}$ for a large enough constant $C > 0$. There exists a (one-sided error) 
$\eps$-tester for $\mathcal{P}$  
that works in the presence of a $t$-online erasure fixed-rate adversary and has 
query complexity  
$O\left(\frac{\log (\eps n)}{\eps (1-t)}\right)$. For a corruption adversary, the same is true without the one-sided error guarantee.
\end{prop}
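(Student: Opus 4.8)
The plan is to run a rescaled version of the pair tester, \Cref{alg:pair_tester_local_properties}, with batch size $b=1$ and with the number of loop iterations raised to $r=\Theta\!\left(\frac{\log(\eps n)}{\eps(1-t)}\right)$, so that the query complexity is $O\!\left(\frac{\log(\eps n)}{\eps(1-t)}\right)$, rejecting if and only if some iteration observes a pair on which $f$ is unrepairable. One-sided error is immediate, since the tester rejects only after seeing (non-erased) oracle answers that already form an unrepairable pair---a genuine violation---so it never rejects $f\in\cP$; the corruption case reduces to the erasure case via \cite[Lemma 1.8]{KalemajRV22}, exactly as in the proofs of \Cref{thm:linearity_tester} and \Cref{thm:sortedness-budget-managing}. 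It remains to prove soundness against a $t$-online fixed-rate erasure adversary with $t<1$.

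First I would extract from the proof of \Cref{lem:pair_tester_offline} the per-iteration fact that, when $f$ is $\eps$-far from $\cP$, a single iteration run with direct query access to $f$ detects an unrepairable pair with probability at least $p_0=\Omega\!\left(\frac{\eps}{\log(\eps n)}\right)$. Next comes the erasure accounting, which is the crux. Fix any adversary and call an iteration \emph{spoiled} if the tester queries an erased point during it. The one place $t<1$ is used: the fixed-rate budget released after any single query is $\floor{(i+1)t}-\floor{it}\le 1$, so between the two queries $x$ and $y=x+2^{j}\bmod n$ of an iteration (where $j\in[0:\ell]$ is that iteration's random scale) the adversary can erase at most one point, and since $j$ is uniform and unknown to it, this ``in-iteration'' erasure hits $y$ with probability at most $\frac{1}{\ell+1}$. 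Every other erasure is committed before the point it might hit is even sampled, so it is hit by a later query with probability $O(r/n)$ (a union bound over the $\le r$ later iterations, each contributing two marginally uniform endpoints). Since the total budget over the $2r$ queries is at most $2rt<2r$, the expected number of spoiled iterations is $O\!\left(\frac{r}{\ell+1}\right)+O\!\left(\frac{r^{2}t}{n}\right)$; using that $\ell+1=\Theta(\log(\eps n))$ is large and that the lower bound on $n$ is exactly what makes $\frac{r^{2}t}{n}$ negligible, this is at most $r/14$, so by Markov's inequality at least $r/2$ iterations are clean except with probability $1/7$.

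To finish, condition on the set of clean iterations, assumed to have size $\ge r/2$. A clean iteration reports the true values $f(x),f(y)$, hence detects an unrepairable pair exactly when the offline event above occurs. Because the adversary's choices up to the start of an iteration are independent of the fresh pair it then samples, and because the already-erased set stays of size $o(p_0 n)$ throughout (again by the hypothesis on $n$), conditioning on cleanness changes a single iteration's success probability by only $o(p_0)$ and leaves the iterations essentially independent; so with $\ge r/2$ clean iterations and $r\,p_0=\Theta\!\left(\frac{1}{1-t}\right)$, the probability that none of them succeeds is at most $\left(1-\Omega(p_0)\right)^{r/2}\le e^{-\Omega(1/(1-t))}\le 1/7$ once the constant hidden in $r$ is chosen large enough. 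A union bound over this event and the event ``fewer than $r/2$ clean iterations'' gives total failure probability below $1/3$.

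The step I expect to be the main obstacle is the erasure accounting, and inside it the control of \emph{in-iteration} erasures---the mechanism that disappears when $b=2$ (\Cref{lem:local_batch_two}) and that pins the $b=1$ \emph{budget-managing} threshold at the much smaller value $\Theta(\eps)$ (\Cref{thm:sortedness-budget-managing}). What rescues the fixed-rate model for every $t<1$ is only the inequality $\floor{(i+1)t}-\floor{it}\le 1$, which caps the adversary at one erasure between the two halves of a pair, so each in-iteration attempt succeeds with probability $\frac{1}{\ell+1}$ rather than $\Theta(1)$. The secondary technical point is the near-independence claim in the last paragraph: one must check that conditioning on the adaptively chosen clean set does not destroy concentration, and this is precisely what forces the quantitative hypothesis $n\gtrsim \frac{(\log\frac1\eps+\log\frac1{1-t})^{2}}{\eps^{2}(1-t)}$.
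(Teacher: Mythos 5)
Your plan differs structurally from the paper's. The paper's proof exploits the fixed-rate model to \emph{schedule} the $m = O(\eps^{-1}\log(\eps n))$ pair-tester iterations at quiet time stamps $i_1<\cdots<i_m$ with $\floor{(i_j+1)t}-\floor{i_j t}=0$, so no erasure can occur between the two halves of any simulated pair; these stamps occur with gaps $\leq 2/(1-t)$, hence $q=O\!\left(\frac{\log(\eps n)}{\eps(1-t)}\right)$ total queries (including dummies to advance time). This eliminates relative/in-iteration erasures entirely, and the analysis reduces to blind erasures via \Cref{lem:marginal_pair_tester} together with \Cref{lem:pair_tester_offline}. You instead run the tester continuously and try to control relative erasures probabilistically. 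You correctly identify the key inequality $\floor{(i+1)t}-\floor{it}\le 1$ and the consequent $1/(\ell+1)$ hit probability for an in-iteration erasure, but your downstream argument has a genuine gap.

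The gap is in the claim that conditioning on cleanness changes a clean iteration's success probability by only $o(p_0)$. Consider an iteration in which the adversary has a unit of relative-erasure budget. After seeing $x$, the adversary (who knows $f$) erases exactly the point $y_{j^*}=x+2^{j^*}$ that forms an unrepairable pair with $x$ (assume for concreteness $j^*$ is unique). This erasure spoils the iteration only with probability $1/(\ell+1)$ --- precisely when the tester picks $j=j^*$ --- yet it removes \emph{all} of the iteration's detection probability: conditioned on the iteration being clean, i.e.\ $j\neq j^*$, the pair $(x,y)$ is repairable and the detection probability is exactly $0$, not $\approx p_0$. Thus cleanness and detection are strongly anti-correlated exactly in the iterations the adversary targets. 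Your Markov step only guarantees $\geq r/2$ clean iterations; of these, up to $\approx rt$ could carry a relative erasure and contribute zero detection probability. When $t$ is close to $1$ the surviving useful iterations number only $\approx r(1-t)=\Theta(1/p_0)$, and the bound $\left(1-\Omega(p_0)\right)^{r/2}$ is unjustified. The fix would be to observe that the set of iterations in which a relative erasure is \emph{possible} is deterministic (depending only on $t$ and the index), of size $\leq rt+O(1)$; the remaining $\geq r(1-t)-O(1)$ iterations admit only blind erasures, and you should restrict the detection analysis to those. That repair is, in substance, what the paper's scheduling argument achieves more cleanly.
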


\subsection{Online Erasure-Resilience of Pair Tester}\label{sec:pair_tester_erasure_resilience}

We next prove
that the pair tester is unlikely to see an erased element during the course of the algorithm. In the following definition, we separate the erasures into two types.
\begin{definition}[Blind 
and relative erasures]
Consider a single loop iteration of Algorithm~\ref{alg:pair_tester_local_properties}. An 
erasure made 
before the first query in this iteration is called \emph{blind} (with respect to this iteration). An erasure made after the first query but before the second query of the iteration is called \emph{relative}. 
\end{definition}

Relative erasures can be much more effective than blind ones in making the tester query an erased element.
For relative erasures, observe that once the first query, $x$, in a given iteration is made, the second query can take one of roughly $\log (\eps n)$ options, and its probability to encounter a relative erasure made during the iteration is of order $1 / \log (\eps n)$. In contrast,~\Cref{lem:marginal_pair_tester} shows that the probability of any future query to see a specific blind erasure is much smaller: $O(1/n)$.

\begin{lemma}[Pair tester: marginal probability]\label{lem:marginal_pair_tester}
Consider one iteration of the loop in Algorithm~\ref{alg:pair_tester_local_properties} with parameters $n, \mathcal{P}, \eps, t$. 
Fix $z \in [n]$. The probability that $z$ is queried in this iteration is  $2/n$.
\end{lemma}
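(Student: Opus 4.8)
The plan is to condition on the random scale $i \in [0:\ell]$ chosen in the iteration and show that, conditioned on any such $i$, the point $z$ is queried with probability exactly $2/n$; since this value does not depend on $i$, the law of total probability gives the claim immediately. This reduces everything to a one‑line counting fact about $D_{2^i}(n)$.

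First I would unpack the structure of $D_{2^i}(n)$: for every $x \in [n]$ there is a unique $y \in [n]$ with $y - x \equiv 2^i \pmod n$, so $|D_{2^i}(n)| = n$, and sampling $(x,y)$ uniformly from $D_{2^i}(n)$ is the same as sampling $x$ uniformly from $[n]$ and setting $y = x + 2^i \bmod n$. The two elements queried in this iteration of Algorithm~\ref{alg:pair_tester_local_properties} are precisely $x$ and $y$. Consequently, $z$ is queried in this iteration if and only if $x = z$ or $y = z$, i.e.\ if and only if $x \in \{\, z,\ z - 2^i \bmod n \,\}$.

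The key (and essentially only) subtlety is that these two candidate values of $x$ are distinct, so there is no double counting: this holds because $0 < 2^i \le 2^{\ell} \le \eps n/4 < n$, hence $2^i \not\equiv 0 \pmod n$, and in particular the iteration never queries the same point twice. It follows that the events $\{x = z\}$ and $\{x = z - 2^i \bmod n\}$ are disjoint; since $x$ is uniform on $[n]$, each has probability $1/n$, and their union has probability $2/n$. As this conditional probability equals $2/n$ for every $i \in [0:\ell]$, the unconditional probability that $z$ is queried in the iteration is also $2/n$.

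I do not expect a genuine obstacle here: the statement is a short counting argument that does not even involve the adversary, since it concerns only the marginal distribution of the tester's queries in a single iteration. The one point that must not be glossed over is the disjointness of the two ways in which $z$ can be hit, which is exactly what the constraint $i \le \ell$ (equivalently $2^i < n$) guarantees.
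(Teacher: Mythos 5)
Your argument is correct and matches the paper's proof almost exactly: both condition on the sampled $i$, observe that $\lvert D_{2^i}(n)\rvert = n$ with exactly two pairs containing $z$, and conclude the conditional probability is $2/n$ independently of $i$. Your additional remark that $2^i \not\equiv 0 \pmod n$ (so the two pairs are distinct) makes the "exactly two" count explicit, which the paper leaves implicit, but it is the same counting argument.
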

\begin{proof}
Fix the value of $i$ sampled in Line 3 of Algorithm \ref{alg:pair_tester_local_properties}. Out of the $n$ pairs in $D_{2^i}(n)$, exactly two contain $z$, so the probability of $z$ to be queried is exactly $2/n$.
\end{proof}

This sharp contrast between relative and blind erasures is the reason for the huge separation between the batch-$1$ case and the batch-$2$ case. In the former case, both relative and blind erasures are available to the adversary, whereas in the latter case, only blind erasures are available.

The following simple corollary bounds the probability that the pair tester queries
at least one blind erasure in the course of the algorithm.
\begin{corollary}\label{coro:marginal_local}
Consider one run of Algorithm \ref{alg:pair_tester_local_properties}. Let $q$ be the number of queries made by the tester, and let $T = tq$ denote the erasure budget for the run. The probability that the tester encounters a blind erasure is at most $\frac{2Tq}{n}$.
\end{corollary}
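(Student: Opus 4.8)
The plan is to prove \Cref{coro:marginal_local} by a two-level union bound: over the loop iterations of Algorithm~\ref{alg:pair_tester_local_properties}, and over the at most $T$ points the adversary has erased by any point during the run, using \Cref{lem:marginal_pair_tester} to bound each individual term by $2/n$.

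First I would fix a loop iteration and condition on the entire history $\mathcal{H}$ of the execution up to (and including all adversarial erasures committed just before) the first query of that iteration. Given $\mathcal{H}$, the set $E$ of points currently carrying the erasure symbol is completely determined, and a blind erasure encountered in this iteration is precisely a query (the first or the second) of the iteration landing in $E$. Since the adversary is budget-managing and at most $q$ queries are made over the whole run, $|E| \le tq = T$. Crucially, the randomness used inside this iteration — the choice of $i \in [0:\ell]$ and of the pair $(x,y) \in D_{2^i}(n)$ — is independent of $\mathcal{H}$, so \Cref{lem:marginal_pair_tester} applies verbatim: for each fixed $z$, the conditional probability that $z$ is queried in this iteration equals $2/n$. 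A union bound over the at most $T$ points of $E$ then shows that, conditioned on $\mathcal{H}$, the probability this iteration queries a blindly erased point is at most $2T/n$; averaging over $\mathcal{H}$ preserves the bound.

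Finally I would union-bound over iterations. Each iteration issues exactly two queries, so there are at most $q/2 \le q$ iterations, and hence the probability that some iteration queries a blind erasure is at most $q \cdot 2T/n = 2Tq/n$, as claimed.

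The only point needing care — and the only place the adaptivity of the adversary could interfere — is the independence claim: the adversary chooses $E$ adaptively, but a \emph{blind} erasure is by definition committed before the iteration's coins are tossed, so conditioning on $\mathcal{H}$ freezes $E$ before those coins, after which \Cref{lem:marginal_pair_tester} (an unconditional statement about a single iteration's query distribution) may be invoked on each frozen target point. Everything else is a routine union bound, so I do not expect any real obstacle here.
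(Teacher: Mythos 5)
Your proof is correct and follows essentially the same approach as the paper's: both arguments hinge on the observation that blind erasures are fixed before the iteration's internal coins are tossed, apply \Cref{lem:marginal_pair_tester} to bound each (query, erased-point) collision by $2/n$, and finish with two nested union bounds. The only cosmetic difference is that the paper union-bounds directly over the $q$ individual queries, whereas you union-bound over iterations and then relax $q/2$ to $q$; the two routes incur the same factor-of-$2$ slack and yield the identical $2Tq/n$ bound.
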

\begin{proof}
Consider a query $x$ made by the tester. Let $S$ be the set of blind erasures (with respect to the current iteration) made so far. The randomness used by Algorithm \ref{alg:pair_tester_local_properties} to choose $x$ is independent of the set $S$ (by the structure of the tester and since $S$ consists only of blind erasures). By~\Cref{lem:marginal_pair_tester},
for each query we have $\Prob{}{x \in S} \leq 2\card{S} / n \leq 2T/n$.
The corollary follows by a union bound over all queries.
\end{proof}

We now prove that our testers are unlikely to encounter erasures for both batch sizes.

\begin{proof}[Proof of~\Cref{lem:local_batch_one}]
The number of queries made by Algorithm~\ref{alg:pair_tester_local_properties} is $q = \frac{200 \log (\eps n)}{\eps}$. The total number of erasures is at most $T=tq \leq \frac{2}{10^4} \log (\eps n)$. 
By \Cref{coro:marginal_local},
the probability that the algorithm queries a blind erasure at least once during the course of its run is bounded by
$$
\frac{2Tq}{n} \leq \frac{800 \log^{2} \eps n}{10^4 \cdot n} < \frac{1}{10}
$$ for all $n$.

Next we bound the probability that Algorithm~\ref{alg:pair_tester_local_properties} encounters a relative erasure. The a priori probability
that a viable pair $(x,y)$ (that is, any pair $(x,y) \in D_{2^i}(n)$ for some choice of $i \in [0:\ell]$)
is queried in a given 
iteration is exactly $\frac{1}{n(\ell+1)}$, where $\ell$ is as in the algorithm. Note that this probability does not depend on $i$. By Bayes' rule, conditioned on the first element being $x$, the probability of each of the 
elements in the set
\[
    \set{y \in [n] : \exists  i\in [0:\ell] \text{ such that } (x,y) \in D_{2^i}(n)}
\]
to be the second query of the iteration is exactly $\frac{1}{\ell + 1} \leq \frac{2}{\log (\eps n)}$; all other values of $y$ have probability zero to be queried. Therefore, for any specific relative erasure conducted by the adversary, the probability that the second query of the relevant iteration hits the erasure is at most $\frac{2}{\log (\eps n)}$. By a union bound (noting that, by definition, each relative erasure can only affect a single iteration of the algorithm), the probability that the algorithm queries a relative erasure throughout its run is at most $\frac{2}{\log (\eps n)} \cdot T \leq \frac{4}{10^4}$. The proof follows by combining the bounds for blind and relative erasures.
\end{proof}

\begin{proof}[Proof of~\Cref{lem:local_batch_two}]
Set $C = 1/(3\cdot 10^5)$.
In the batch size $2$ case there are no relative erasures, since each loop iteration in the algorithm is captured by a single batch. That is, all erasures are blind.
We apply~\Cref{coro:marginal_local} with 
$$
q = \frac{200 \log (\eps n)}{\eps} \qquad \text{and} \qquad
T = tq \leq \frac{C \eps^2 n}{\log^{2} (\eps n)} \cdot \frac{200 \log (\eps n)}{\eps} = \frac{200C \eps n}{\log (\eps n)}
$$
to bound the probability to query at least one (blind) erasure during the run of the algorithm by
$$
\frac{2Tq}{n} \leq \frac{200C\eps}{\log (\eps n)} \cdot \frac{200 \log (\eps n)}{\eps} \leq 40,000 C < \frac{1}{7},
$$
as desired.
\end{proof}

\begin{proof}[Proof of~\Cref{prop:local-fixed-rate}]
Our tester applies the pair tester (Algorithm \ref{alg:pair_tester_local_properties}) as a black box, but only at prespecified time stamps along the way (in which the fixed-rate adversary is known not to make erasures). Let $m = O(\eps^{-1} \log (\eps n))$ denote the number of iterations in Algorithm \ref{alg:pair_tester_local_properties}. 
By definition of the fixed-rate model (\Cref{def:fixed-rate_budget-managing}), when the rate is $t < 1$, there exists a sequence of integers $i_1 < i_2 < \ldots < i_m$, which satisfies the following properties.
\begin{itemize}
\item For each $j \in [m]$, the adversary cannot make an
erasure after the query at time $i_j$ is conducted.
\item For each $j \in [m-1]$, we have $i_{j+1} - i_j \leq \frac 2{1-t}$.
\end{itemize}

Our tester in this case simulates Algorithm \ref{alg:pair_tester_local_properties}:  for each $j=1,\ldots,m$ we run iteration $j$ of the latter at timestamps $i_j$ and $i_j+1$. Note that there is no erasure between them (so all erasures are blind).
 Let $q \leq 2m \cdot \frac{2}{1-t} = O\left(\frac{\log (\eps n)}{\eps(1-t)}\right)$ be the total number of queries in our case.
By~\Cref{lem:marginal_pair_tester}, the probability that the query at each time $i_j$ or $i_j+1$ encounters an erased element is at most $2q/n$. By a union bound, the probability that we observe at least one erasure along the way is bounded by
\begin{equation}
\label{eqn:erasure_fixed_rate_local}
\frac{2q}{n} \cdot m \leq \frac{1}{100},
\end{equation}
provided that $C$ is large enough.
Now, disregarding erasures, by~\Cref{lem:pair_tester_offline} if $f$ is $\eps$-far from $\mathcal{P}$ then the probability that the tester queries a violation to $\mathcal{P}$ is at least $6/7$. Combined with \eqref{eqn:erasure_fixed_rate_local}, the proof follows for the online erasures setting. The case of online corruptions immediately follows from~\cite[Lemma 1.8]{KalemajRV22}.
\end{proof}

\subsection{Pair Tester for Local Properties via Shifted Interval Partitioning}
\label{sec:pair_tester_via_shifted}

The main purpose of this section is to prove~\Cref{lem:pair_tester_offline}, establishing the correctness of the pair tester in the offline setting. 
We first present and analyze Algorithm~\ref{alg:shifted_balanced_tree_tester_local_properties}, a randomly shifted version of the structured tester from~\cite{BE19} for local properties. We then show that the pair tester can simulate the shifted structured tester.

The next two definitions present a hierarchical partition structure of intervals (with some overlaps between endpoints of intervals), used in Algorithm~\ref{alg:shifted_balanced_tree_tester_local_properties}.

\begin{definition}[Shifted hierarchical partition]
Let $a, n, w \in [n]$ and let $\ell \geq 0$ be an integer, where $w$ is divisible by $2^\ell$ and 
$a + w \leq n$. An \emph{$a$-shifted hierarchical partition of width $w$ with $\ell$ layers} in $[n]$,
or
\emph{$a$-shifted $(\ell, w)$-hierarchical partition} in short, is a tuple $(T_0, \ldots, T_\ell)$ where for each $i \in [0:\ell]$, the $i$-th layer $T_i$ is a set of intervals defined by 
$$
T_i = \{[a+j\cdot 2^i : a+(j+1)\cdot 2^i]\ \ | \ \ j \in [w/2^i-2]\} \quad\bigcup\quad \left\{ [1 : a+2^i], [a+w-2^i : n]\right\}.
$$
\end{definition}
In other words, each layer $T_i$ consists of intervals of length $2^i + 1$, except the first interval (starting at 1) and the last interval (ending at $n$), which may be longer. The intervals in each layer $T_i$ intersect only at their endpoints.

The first and last intervals in each layer $T_i$ play a special role in our analysis.
\begin{definition}[Extremal intervals]
Let $T$ be an $a$-shifted $(\ell, w)$-hierarchical partition of $[n]$. An interval $I = [x:y] \in T_i$ is said to be \emph{left-extremal} if $x=1$ and \emph{right-extremal} if $y=n$. In either of these cases, $I$ is considered \emph{extremal}.

The \emph{query-pair} of an interval $I$ with respect to $T$ and $i$, denoted $Q_{T,i}(I)$, is defined as $\{x,y\}$ if $I$ is not extremal, $\{a, a+2^i\}$ if $I$ is left-extremal, and $\{a+w-2^i, a+w\}$ if $I$ is right-extremal.
\end{definition}

Equipped with a shifted hierarchical partition structure, we now present the algorithm.

\begin{algorithm}[]
 \caption{Shifted hierarchical structured tester for local properties in offline model}\label{alg:shifted_balanced_tree_tester_local_properties}
    \Given {local property $\mathcal{P}$, $\eps\in(0,1)$; query access to $f : [n] \to \R$ (in offline model)
    }
    Set $\ell = \lfloor \log \left(\frac{\eps n}{4}\right)\rfloor$.\\
    Let $w \in \N$ be the maximum integer divisible by $2^\ell$ and satisfying $w \leq n -  \eps n / 4 $.\\
    \RepTimes{$\frac{20 \log (\eps n)}{\eps}$}{
    Sample $a \in [\lceil\eps n / 4\rceil]$ uniformly at random. \\
    Let $T = (T_0, \ldots, T_\ell)$ be the $a$-shifted $(\ell,w)$-hierarchical partition of $[n]$. \\
    Sample a uniformly random $i \in [0 : \ell]$
    and a uniformly random interval $[x,y] \in T_i$ \label{step:sample-level-interval}.\\
    \textbf{Query} $f(x')$ and $f(y')$, where $\{x',y'\} = Q_{T,i}([x,y])$. \label{step:query-pair}\\
    \If{$f$ is unrepairable on $\{x',y'\}$}{\textbf{Reject}
    }}
    \textbf{Accept}
 \end{algorithm}

The following lemma establishes that Algorithm~\ref{alg:shifted_balanced_tree_tester_local_properties} is an $\eps$-tester for local properties in the standard \emph{offline} model of property testing.
\begin{lemma}\label{lem:balanced_tree} Let $n \in \N$ and $\mathcal{P}$ be a local property of sequences $f \colon [n] \to \R$.
Algorithm~\ref{alg:shifted_balanced_tree_tester_local_properties} is an $\eps$-tester for $\mathcal{P}$ with one-sided error  and success probability $6/7$ in the offline model (with no erasures/corruptions).
\end{lemma}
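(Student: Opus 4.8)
One-sided error is immediate: if $f\in\mathcal{P}$ then $f$ itself shows that $f$ is not unrepairable on any subset $S\subseteq[n]$ (take $f'=f$), so Algorithm~\ref{alg:shifted_balanced_tree_tester_local_properties} never rejects. The content is the soundness bound, and it suffices to show that a single loop iteration rejects an $\eps$-far $f$ with probability $\Omega\!\left(\eps/\log(\eps n)\right)$; then, since the $\frac{20\log(\eps n)}{\eps}$ iterations use independent randomness, the acceptance probability is at most $\bigl(1-\Omega(\eps/\log(\eps n))\bigr)^{20\log(\eps n)/\eps}\le 1/7$ for a suitably large leading constant. (We may assume $\eps n$ exceeds a suitable absolute constant, as otherwise the statement is vacuous.)

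Call an interval $I\in T_i$ of the $a$-shifted partition a \emph{witness} if $f$ is unrepairable on its query-pair $Q_{T,i}(I)$; by construction the tester rejects exactly when it samples a witness interval. In one iteration it draws $a$, then $i\in[0:\ell]$ uniformly, then $I\in T_i$ uniformly, so the rejection probability equals $\E_a\bigl[\tfrac{1}{\ell+1}\sum_{i=0}^{\ell}\tfrac{|W_i(a)|}{|T_i(a)|}\bigr]$, where $W_i(a)$ is the set of level-$i$ witnesses. Since $\ell+1=\Theta(\log(\eps n))$ and $|T_i(a)|=w/2^i\le n/2^i$, this is at least $\frac{1}{(\ell+1)n}\,\E_a\bigl[\sum_i 2^i|W_i(a)|\bigr]$, and $\sum_i 2^i|W_i(a)|$ is, up to constants, the total length of all witness intervals in $T(a)$. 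Thus the soundness claim reduces to the \emph{structural bound}: for $f$ that is $\eps$-far from $\mathcal{P}$, the total length of witness intervals in $T(a)$ is $\Omega(\eps n)$.

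I would prove the structural bound by adapting the analysis of the structured tester of~\cite{BE19} to the shifted hierarchical partition. The first ingredient is the \emph{splitting property} of local properties: a sequence can simultaneously keep the values of $f$ at positions $p_1<\dots<p_k$ if and only if it is repairable on every consecutive pair $\{p_j,p_{j+1}\}$ — the shared value at $p_{j+1}$ glues the two sides, and the reachability constraints inside $[p_j:p_{j+1}]$ are independent of the rest. Hence being $\eps$-far means every ``keepable'' set of positions $P$ has $|P|\le(1-\eps)n$. The second ingredient is structural: up to the extremal intervals, the $a$-shifted $(\ell,w)$-hierarchical partition is a forest of $\Theta(1/\eps)$ complete binary trees of depth $\ell=\lfloor\log(\eps n/4)\rfloor$, in which each internal node is the union of its two children at the midpoint of its endpoints. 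Running a top-down greedy repair over this forest — keep both endpoints of each non-witness node and add its midpoint, but when a node or one of its children is a witness, abandon its interior — and invoking the splitting property, one charges every position the greedy repair fails to keep to a witness node containing it, with only an absolute-constant amount of over-charging; this shows that the total length of witness nodes is $\Omega\bigl(n-|P_{\mathrm{greedy}}|\bigr)$. Finally, the extremal ``tails'' of the partition that are never touched by any query-pair have total length at most $3\eps n/4$ for \emph{every} shift $a$, so $\eps$-farness forces at least $\eps n/4$ of the un-keepable positions to lie in the charged region, giving $n-|P_{\mathrm{greedy}}|\ge\eps n/4$ and hence the $\Omega(\eps n)$ bound.

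The main obstacle is exactly this structural bound, and within it the charging argument for the greedy repair: getting the bookkeeping right so that (i) the over-charging factor is an absolute constant — in particular handling correctly the cases where a non-witness node has a witness child, and the leaf witnesses — and (ii) the special extremal intervals, with their shifted query-pairs $\{a,a+2^i\}$ and $\{a+w-2^i,a+w\}$, are accounted for both as potential witnesses and as the source of the un-queried tails. The reduction to the structural bound and the final amplification are routine.
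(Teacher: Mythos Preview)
Your proposal is correct and follows essentially the same approach as the paper. The paper also reduces to a per-iteration rejection bound and proves the structural lemma (its Lemma~\ref{lem:structural_charac_unrepairability}) by constructing a repair $f'\in\mathcal{P}$ that only modifies positions in the interiors of the (non-witness) parents of \emph{maximal} witness intervals together with the two extremal tails---your ``top-down greedy repair'' is exactly this construction, your ``splitting property'' is the gluing of local repairs used in the paper's Claim~\ref{claim:modify_S}, and the bookkeeping for extremal intervals and the $\le \eps n/2$ tail bound match as well.
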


For the proof, we define the following notion of a witness interval. 

\begin{definition}[Witness interval]
\label{def:interval_unrepairability}

Let $\mathcal{P}$ be a local property of length-$n$ sequences, and $\mathcal{F}$ be the family of forbidden patterns characterizing $\mathcal{P}$. The interval $I=[a:b]$ is said to be a \emph{witness interval} for the sequence $f \colon [n] \to \R$ not satisfying $\mathcal{P}$, if one of the following holds:
\begin{itemize}
\item $a > 1$ and $b < n$, and all sequences $f' \colon [b-a+1] \to \R$, where $f'(1) = f(a)$ and $f'(b-a+1) = f(b)$, contain at least one copy of a (forbidden) pattern from $\mathcal{F}$.
\item $a=1$, and all sequences $f' \colon [b] \to \R$ with $f'(b) = f(b)$ contain a pattern from $\mathcal{F}$. 
\item $b=n$, and all $f' \colon [n-a+1] \to \R$ with $f'(1) = f(a)$ contain a pattern from $\mathcal{F}$.
\end{itemize}

Let $T = (T_0, T_1, \ldots, T_\ell)$ be an $a$-shifted $(\ell,w)$-hierarchical partition of $[n]$. 
For all $i,j\in[0:\ell]$, an interval $J\in T_j$ is an \emph{ancestor} of an interval  $I\in T_i$ if $i<j$ and $I\subset J$.
For $i < \ell$, the \emph{parent} of $I$ is its unique ancestor in $T_{i+1}$. An interval $I$ in $\bigcup_{i=0}^{\ell} T_i$ is a \emph{maximal witness} (with respect to $f, \mathcal{P}, T$) if $I$ is a witness interval for $f$ with respect to $\mathcal{P}$, while all ancestors of $I$ are not witnesses.
\end{definition}

To know whether $[a:b]$ is a witness interval for $f$ with respect to a given property, one only needs to query $f(a)$ and $f(b)$ in the case that $a>1$ and $b < n$; only $f(b)$ when $a=1$; and only $f(a)$ when $b=n$.
 The main structural statement connecting being far from a property $\mathcal{P}$ and the structure of witness intervals for $f$ in the hierarchical partition is as follows.

\begin{lemma}
\label{lem:structural_charac_unrepairability}
Let $f \colon [n] \to \R$ be $\eps$-far from a local property $\mathcal{P}$. Let $T = (T_0, \ldots, T_\ell)$ be an $a$-shifted $(\ell,w)$-hierarchical partition of $[n]$, where $a \leq \lceil \eps n / 4 \rceil$, and $\ell$ and $w$ are as defined in Algorithm \ref{alg:shifted_balanced_tree_tester_local_properties}.
The set $\mathcal{U}$ of all maximal witness intervals in $f$ with respect to $T$ and $\mathcal{P}$ satisfies:\footnote{Note that $T_i$ and $T_i \cap \mathcal{U}$ are collections of sets; for example, $|T_i|$ is the \emph{number of intervals} in $T_i$ (and not, say, the total number of elements in all intervals in $T_i$).}
$$\sum_{i=0}^{\ell}  \frac{|T_i \cap \mathcal{U}|}{|T_i|} \geq \frac{\eps}{8}.$$
\end{lemma}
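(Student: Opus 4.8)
The plan is to prove the contrapositive. Assume $\sum_{i=0}^{\ell} |T_i \cap \mathcal{U}| / |T_i| < \eps/8$; I will produce a sequence $f' \in \mathcal{P}$ differing from $f$ in fewer than $\eps n$ positions, contradicting that $f$ is $\eps$-far from $\mathcal{P}$. First I dispose of a trivial case. The two extremal intervals of the top layer, $[1:a+2^\ell]$ and $[a+w-2^\ell:n]$, have no ancestors, hence each is a maximal witness as soon as it is a witness. Since $\ell = \lfloor \log(\eps n/4)\rfloor$ gives $2^\ell \geq \eps n/8$ and $w \leq n$, we have $|T_\ell| = w/2^\ell \leq 8/\eps$, so if either top-layer extremal interval were a witness then the left-hand side of the claim would already be at least $1/|T_\ell| \geq \eps/8$. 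Thus I may assume both top-layer extremal intervals are non-witnesses.

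The construction of $f'$ uses two features of the hierarchical partition, which forms a laminar family up to endpoint overlaps (each interval of layer $i+1$ is the union of two adjacent layer-$i$ intervals sharing exactly one endpoint, so every consecutive pair $(z,z+1)$ lies inside a single interval of each layer). First, if $I \in \mathcal{U}$ is a maximal witness then, by maximality, its parent $J$ is a non-witness; by definition of unrepairability, $J$ admits an \emph{internal repair} $g_J \colon J \to \R$ agreeing with $f$ at both endpoints of $J$ and containing no forbidden pattern inside $J$, and the analogous statement holds for extremal non-witness intervals with the free endpoint(s) left unconstrained. Second, letting $\mathcal{J}$ be the set of containment-maximal intervals among $\{\text{parent of } I : I \in \mathcal{U}\}$, the intervals in $\mathcal{J}$ have pairwise disjoint interiors, and mapping each $J \in \mathcal{J}$ to one of its children lying in $\mathcal{U}$ is injective, so $|\mathcal{J} \cap T_{i+1}| \leq |T_i \cap \mathcal{U}|$ for every $i$. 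I then define $f'$ by overwriting $f$ with $g_J$ on the interior of each $J \in \mathcal{J}$; overwriting the prefix $[1:a]$ and the suffix $[a+w:n]$ with the internal repairs of the extremal leaves $[1:a+1]$ and $[a+w-1:n]$ (which, past the trivial case, are either non-witnesses or contained in some $J \in \mathcal{J}$); and leaving $f$ unchanged everywhere else.

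To see $f' \in \mathcal{P}$: a forbidden pattern of $f'$ on a consecutive pair $(z,z+1)$ would lie entirely inside one interval of each layer; if that pair falls inside some repaired region (a $J \in \mathcal{J}$ or an extremal leaf) it is pattern-free by construction, since each repair fixes $f$'s values at its endpoints and the regions therefore glue consistently; and if the pair is untouched, then the layer-$0$ interval (or extremal leaf) containing it is not a witness and lies in no maximal witness, which forces $(f(z),f(z+1)) \notin \mathcal{F}$. For the count, the changed positions lie in $\big(\bigcup_{J \in \mathcal{J}} \mathrm{int}(J)\big) \cup [1:a] \cup [a+w:n]$; the portion of each $J \in T_{i+1}$ outside the two boundary regions has size at most $2^{i+1}$, so, using $|\mathcal{J} \cap T_{i+1}| \leq |T_i \cap \mathcal{U}|$ and $2^i = w/|T_i|$, the number of changed positions is at most $(n-w+1) + 2w\sum_{i} |T_i\cap\mathcal{U}|/|T_i| < (n-w+1) + w\eps/4$. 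Since $w > n - \eps n / 2$ and $w \leq n$, this is at most $\eps n/2 + 1 + \eps n/4 < \eps n$ for $n$ as in the hypothesis, the desired contradiction.

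The main obstacle is the boundary: the extremal intervals have size up to $\Theta(\eps n)$, so they cannot be handled by the ``repair the parent'' template used for the non-extremal maximal witnesses; one must show they either trigger the trivial case or are cheap enough to overwrite wholesale, and one must verify that the several repaired regions, repaired at different layers, do not create new forbidden patterns where they abut --- which is exactly what the endpoint-overlap structure of the hierarchical partition secures. The bookkeeping, and the precise constants in $\ell$, $w$, and the range of $a$, follow the structural argument of Ben-Eliezer~\cite{BE19} adapted to the shifted partition.
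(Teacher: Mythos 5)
Your proposal takes essentially the same route as the paper: dispose of the case where the top layer contains a witness, then construct a repaired sequence $f'$ by modifying interiors of (parents of) maximal witness intervals together with the two boundary regions $[1:a]$ and $[a+w:n]$, and finally count the modified positions to contradict $\eps$-farness. Your decision to repair only containment-maximal parents $\mathcal{J}$ (and observe that mapping each $J\in\mathcal J$ to one of its children in $\mathcal U$ is injective) is a mild clean-up of the paper's version, which repairs all parents $P(I)$ and leaves the nesting issue implicit; the two constructions modify the same set of positions, so the accounting agrees with the paper's (which uses $g(I) := I\cap[a:a+w]$ and the identity $|g(P(I))| = 2|g(I)|-1$ rather than your $2^{i}=w/|T_i|$ calculation).

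There is one genuine gap in your case split. You only dispose of the two \emph{extremal} intervals of $T_\ell$, arguing they have no ancestors and hence are maximal witnesses whenever they are witnesses. But the same is true of \emph{every} interval of $T_\ell$ — the top layer has no ancestors at all. If a non-extremal interval $I\in T_\ell$ is a witness, then $I\in\mathcal U$ and $I$ has no parent, so your construction (which overwrites interiors of parents of maximal witnesses) leaves $I$ untouched and $f'$ still fails $\mathcal P$ there; your later claim ``if $I\in\mathcal U$ is a maximal witness then, by maximality, its parent $J$ is a non-witness'' is simply false for such an $I$. The paper avoids this by dispatching all of $T_\ell\cap\mathcal U$ at once, using exactly the bound $|T_\ell|\le 8/\eps$ that you already invoke, so the fix is to widen your trivial case to all of $T_\ell$ rather than just its two extremal intervals.
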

\begin{proof}
If $T_\ell \cap \mathcal{U} \neq \emptyset$, the lemma follows since $|T_\ell| \leq \frac{n}{\eps n / 8} = \frac 8 \eps$. Otherwise, each $I \in \mathcal{U}$ has a (non-witness) parent $P(I)$ in $T$. For convenience, for each interval $I$, define $g(I) := I \cap [a:a+w]$.

We first claim that $|g(P(I))| = 2|g(I)|-1$. Indeed, for $I \in T_i$ for some $i \in [0:\ell]$, we have
\begin{equation}
\label{eq:g_I}
|g(I)| = 2^i+1 \quad \text{and} \quad |g(P(I))| = 2^{i+1} + 1 = 2|g(I)|-1.
\end{equation}
Note that the above is trivial for non-extremal intervals, and follows from definition for the extremal ones. Consequently,
\begin{equation}
\label{eqn:parents_half}
1+\sum_{I \in \mathcal{U}} |g(P(I))| \leq 2\sum_{I \in \mathcal{U}} |g(I)|.
\end{equation}

We now bound the left-hand side from below, assuming that $f$ is $\eps$-far from $\mathcal{P}$. 
For any interval $I = [b:c]$, define its set of endpoints as $\{b,c\} \setminus \{1,n\}$, and define the interior $\Int(I)$ of $I$ as the set of all non-endpoint elements in $I$.
 Let
$$S = \bigcup_{I \in \mathcal{U}} \Int(P(I)) \ \cup \ [1:a] \ \cup\  [a+w:n].$$ 
\begin{claim}\label{claim:modify_S}
There exists $f' \in \mathcal{P}$ such that $f'(x) = f(x)$ for all $x \in [n] \setminus S$.
\end{claim}
\begin{proof} 
Let $\mathcal{F}$ be the forbidden family characterizing $\mathcal{P}$.

First, for any pair of consecutive elements $j,j+1$ which both belong to $[n] \setminus S$, consider the interval $A_j = [j:j+1] \in T_0$. This interval is not contained in an interval from $\mathcal{U}$. By the maximality of $\mathcal{U}$, the interval $A_j$ must be a non-witness, meaning that $(f(j),f(j+1))\notin \mathcal{F}$.

Let $L$ denote the minimal interval in $T$ which contains $[1:a+1]$ and is a non-witness. Note that either $L=[1:a+1]$ or $L = P(I)$ for some $I \in \mathcal{U}$. Define $R$ similarly replacing $[1:a+1]$ with $[a+w-1:n]$. The intervals $L,R$ and all intervals $P(I)$, for $I \in \mathcal{U}$, are non-witnesses (by definition of $\mathcal{U}$).
Thus, one can remove all $\mathcal{F}$-copies from these intervals by only modifying their interiors. 

Let $f'$ be the sequence resulting from $f$ by making these modifications (only in interiors of intervals from $\{L,R\} \cup \{P(I) : I \in \mathcal{U}\}$; for all other elements $x$, we set $f'(x) = f(x)$). Observe that $f'(x) = f(x)$ for $x \notin S$. Indeed, since $\Int(L) \subseteq [1:a] \cup \bigcup_{I \in \mathcal{U}} \Int(P(I))$ and $\Int(R) \subseteq [a+w:n] \cup \bigcup_{I \in \mathcal{U}} \Int(P(I))$, the modified elements all lie in $S$.

Our choice of $f'$ implies that $(f'(j),f'(j+1)) \notin \mathcal{F}$ for each pair of elements $\{j,j+1\}$ with nonempty intersection with $S$. 
For pairs $\{j,j+1\} \subseteq n \setminus {S}$, we have seen in the first paragraph that $(f(j), f(j+1)) \notin \mathcal{F}$. Since $j,j+1 \notin S$, we have $f'(j) = f(j)$ and $f'(j+1) = f(j+1)$. Hence, $(f'(j), f'(j+1)) \notin \mathcal{F}$. Thus, $f'$ does not contain $\mathcal{F}$-copies,
completing the proof of \Cref{claim:modify_S}.
\end{proof}

We now complete the proof of~\Cref{lem:structural_charac_unrepairability}. Since $f$ is $\eps$-far from $\mathcal{P}$, we have that $|S| \geq \eps n$. On the other hand, $|[1:a]| + |[a+w:n]| = n-w+1 \leq \frac{\eps n}{4} + 2^\ell + 1 \leq \frac{\eps n}{2} + 1$ and so $ \bigcup_{I \in \mathcal{U}} |g(P(I))| \geq \eps n - \frac{\eps n}{2} - 1 = \frac{\eps n}{2} - 1$. From \eqref{eqn:parents_half}, we conclude that $\sum_{I \in \mathcal{U}} |g(I)| \geq \frac{\eps n}{4}$.

Since $|T_i| \leq n / 2^i$ for all
$i\in[0:\ell]$,
we derive the statement of the lemma:
$$
\sum_{i=0}^{\ell}\frac{|T_i \cap \mathcal{U}|}{|T_i|} \geq  \sum_{i=0}^{\ell} \sum_{I \in T_i \cap \mathcal{U}} \frac{2^i}{n} \geq \frac{1}{n}\sum_{i=0}^{\ell}\sum_{I \in T_i \cap \mathcal{U}} \frac{|g(I)|}{2} \geq \frac{\eps}{8},
$$
where in the second inequality we used the fact that $|g(I)| = 2^{i}+1 \leq 2^{i+1}$, established above.
\end{proof}

\begin{proof}[Proof of~\Cref{lem:balanced_tree}] 
The tester always accepts when $f$ satisfies $\mathcal{P}$. Suppose now that $f$ is $\eps$-far from $\mathcal{P}$.
Let $T$ and $\mathcal{U}$ be as guaranteed in the statement of~\Cref{lem:structural_charac_unrepairability}. Consider a specific iteration of the loop in Algorithm \ref{alg:shifted_balanced_tree_tester_local_properties}. For a specific value of $i \in {[0:\ell]}$, the probability that the tester rejects is at least $\frac{|T_i \cap \mathcal{U}|}{|T_i|}$. Thus, the rejection probability for $i$ sampled uniformly from that range is at least
$$
\frac{1}{\ell+1} \sum_{i=0}^{\ell} \frac{|T_i \cap \mathcal{U}|}{|T_i|} \geq \frac{1}{\log (\eps n)} \cdot \frac{\eps}{8} = \frac{\eps}{8 \log (\eps n)},
$$
where the inequality follows
from~\Cref{lem:structural_charac_unrepairability}. That is, the rejection probability in a single iteration is at least $\alpha = \eps / 8 \log (\eps n)$, independently of other rounds. After $5/(2\alpha) = 20 \eps^{-1} \log (\eps n)$ iterations, the probability that the tester accepts is at most $(1-\alpha)^{5/(2\alpha)} < 1/7$, and it correctly rejects otherwise.
\end{proof}

\begin{lemma}[Marginal probability for pairs in structured tester]\label{lem:marginal_pair_balanced_tree_tester}
Let $\eps > 0$ and $n \geq 16/\eps$. Let $\ell$ be as in Algorithm \ref{alg:shifted_balanced_tree_tester_local_properties}. 
For $x, y \in [n]$, where $x<y,$
the probability that $x$ and $y$ are queried in Line~\ref{step:query-pair} of  Algorithm \ref{alg:shifted_balanced_tree_tester_local_properties} is at most $\frac{8}{n \log (\eps n)}$ if $y-x = 2^i$ for some $i \in [0:\ell]$, and equals zero otherwise. 
\end{lemma}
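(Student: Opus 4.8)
The plan is to analyze a single iteration of Algorithm~\ref{alg:shifted_balanced_tree_tester_local_properties}, conditioning first on the sampled layer index $i$ and then on the shift $a$. The first step, which immediately disposes of the ``equals zero otherwise'' part, is to observe that inside a fixed layer $i$ every query-pair $Q_{T,i}(I)$ consists of two points at distance exactly $2^i$: for a non-extremal interval $[x':y']\in T_i$ we have $y'-x'=2^i$, and the query-pairs of the left- and right-extremal intervals are defined to be $\{a,a+2^i\}$ and $\{a+w-2^i,a+w\}$, again at distance $2^i$. Hence a pair $\{x,y\}$ with $x<y$ can be queried only when $2^i=y-x$; in particular it is never queried if $y-x$ is not one of $2^0,\dots,2^\ell$, and otherwise only the unique layer $i$ with $2^i=y-x$ is relevant.

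Next I would fix that layer $i$ (so $2^i=y-x$) and, for a fixed shift $a$, write down the query-pairs of layer $i$ explicitly. Reading the definition of $T_i$, the ``left endpoints'' of the $|T_i|=w/2^i$ query-pairs form the arithmetic progression $\{a+j\cdot 2^i : 0\le j\le w/2^i-1\}$, with the non-extremal intervals contributing $j=1,\dots,w/2^i-2$, the left-extremal interval contributing $j=0$, and the right-extremal interval contributing $j=w/2^i-1$. Since the sampled interval is uniform over $T_i$, conditioned on $a$ the pair $\{x,x+2^i\}$ is queried with probability $2^i/w$ exactly when $x$ lies in this progression (i.e.\ $a\equiv x\pmod{2^i}$, up to a range constraint on $a$ that I may drop for an upper bound) and with probability $0$ otherwise. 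Averaging over $a$ uniform in $[\lceil\eps n/4\rceil]$, the number of admissible $a$ is at most the number of elements of $\{1,\dots,\lceil\eps n/4\rceil\}$ congruent to $x$ modulo $2^i$, which is at most $\lceil\lceil\eps n/4\rceil/2^i\rceil\le 2\lceil\eps n/4\rceil/2^i$ because $2^i\le 2^\ell\le\eps n/4\le\lceil\eps n/4\rceil$. Multiplying, the probability that $\{x,y\}$ is queried conditioned on layer $i$ being sampled is at most $\frac{2\lceil\eps n/4\rceil/2^i}{\lceil\eps n/4\rceil}\cdot\frac{2^i}{w}=\frac{2}{w}$.

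Finally I would combine the pieces. Since $w$ is the largest multiple of $2^\ell$ not exceeding $n-\eps n/4$ and $2^\ell\le\eps n/4$ (with $\eps<1$), we have $w\ge n-\eps n/2\ge n/2$, so the conditional probability is at most $4/n$; multiplying by $\Pr[\text{layer }i\text{ sampled}]=1/(\ell+1)$ gives at most $\frac{4}{n(\ell+1)}$. As $\ell=\lfloor\log(\eps n/4)\rfloor$, we get $\ell+1\ge\log(\eps n)-2$, and the hypothesis $n\ge 16/\eps$ forces $\log(\eps n)\ge4$, hence $\ell+1\ge\tfrac12\log(\eps n)$ and the bound becomes $\frac{8}{n\log(\eps n)}$, as claimed. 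I expect the only delicate point to be the arithmetic-progression description of the left endpoints in a fixed layer — in particular checking that the two extremal intervals fill in exactly the $j=0$ and $j=w/2^i-1$ slots so that nothing is double-counted — together with the degenerate edge cases (small $x$, the dropped lower bound on $a$); the remaining work is routine counting and the two elementary estimates $w\ge n/2$ and $\ell+1\ge\tfrac12\log(\eps n)$.
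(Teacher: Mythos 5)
Your proof is correct and follows essentially the same route as the paper's: condition on the sampled layer $i$ (which must satisfy $2^i = y-x$), then on the shift $a$, use $|T_i|=w/2^i$ and $w\geq n/2$, and finish with $\ell+1 \geq \tfrac12\log(\eps n)$. The only cosmetic difference is that you count admissible shifts $a\equiv x\pmod{2^i}$ directly and average over $a$, whereas the paper phrases the same count as the number of multiples of $2^i$ in $[\lceil\eps n/4\rceil]$ and splits the event into $A$ (alignment) and $B$ (interval selection); the arithmetic and final bound are identical.
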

\begin{proof}
The definition of a query-pair $Q_{T,i}(\cdot)$ immediately implies the probability zero statement. 

Fix $i \in [0: \ell]$. Denote by $E_i$ the event that $i$ was sampled in Line~\ref{step:sample-level-interval} of Algorithm~\ref{alg:shifted_balanced_tree_tester_local_properties}. Note that $\Prob{}{E_i} = \frac{1}{\ell+1} < \frac{2}{\log (\eps n)}$.
For $x$ to be queried in a certain iteration (conditioned on $E_i$), the following two events must take place:
\begin{itemize}
\item \textit{Event A}: $x-a$ is an integer multiple of $2^i$, for $a$ as defined in Line 4 of Algorithm~\ref{alg:shifted_balanced_tree_tester_local_properties}.
\item \textit{Event B}: An interval sampled in Line~\ref{step:sample-level-interval} leads to $x$ being queried. 
\end{itemize}
We start by bounding
$\Prob{}{A | E_i}$. The number of multiples of $2^i$ in an interval of length $m = \lceil \eps n / 4 \rceil$ is at most $\frac{\eps n}{4 \cdot 2^i} + 1 \leq \frac{\eps n}{2^{i+1}}$. Since $a \in [m]$ is picked uniformly at random, 
$
\Prob{}{A | E_i} \leq \frac{\eps n}{2^{i+1}} \cdot \frac{4}{\eps n} \leq \frac{1}{2^{i-1}}.
$
We next bound $\Prob{}{B|A,E_i}$. There are at least $n/2^{i+1}$ intervals in $T_i$, out of which at most one leads to $x$ being queried. Thus
$\Prob{}{B|A,E_i} \leq \frac{2^{i+1}}{n}$. Combining all of our bounds, we have 
\[
\Prob{}{B \wedge A \wedge E_i} 
= \Prob{}{B|A,E_i} \cdot \Prob{}{A|E_i} \cdot \Prob{}{E_i}
\leq \frac{2^{i+1}}{n} \cdot \frac{1}{2^{i-1}} \cdot \frac{2}{\log (\eps n)} = \frac{8}{n \log (\eps n)}.
\qedhere
\]
\end{proof}

\Cref{lem:marginal_pair_balanced_tree_tester} allows us to show that Algorithm~\ref{alg:pair_tester_local_properties} can in a strong sense ``simulate'' the structured tester in Algorithm~\ref{alg:shifted_balanced_tree_tester_local_properties}. In the following lemma, we use such a simulation argument to prove that Algorithm~\ref{alg:pair_tester_local_properties} is indeed a valid pair tester (in an \emph{offline} setting) for $\mathcal{P}$.

\begin{proof}[Proof of~\Cref{lem:pair_tester_offline}]
By~\Cref{lem:balanced_tree}, Algorithm~\ref{alg:shifted_balanced_tree_tester_local_properties} is an $\eps$-tester for $\mathcal{P}$ with one-sided error in the offline model. We couple a single run of Algorithm~\ref{alg:shifted_balanced_tree_tester_local_properties} with parameters $n, \mathcal{P}, \eps$ with a single run of Algorithm~\ref{alg:pair_tester_local_properties} with the same parameters. Specifically, we couple (i) each iteration of the loop in Algorithm~\ref{alg:shifted_balanced_tree_tester_local_properties} with (ii) ten iterations of the loop in Algorithm~\ref{alg:pair_tester_local_properties}. 

For each pair $(x, y) \in D_d(n)$, if $d$ is not of the form $d=2^i$ for $i \in [0: \ell]$ (for $\ell$ as defined in both algorithms), then the probability that $x$ and $y$ are queried in both setups (i) and (ii) is zero. Suppose now that $d = 2^i$ for $i \in [0: \ell]$. By~\Cref{lem:marginal_pair_balanced_tree_tester}, the probability that $x$ and $y$ are queried in setting (i) is at most $\frac{8}{n \log (\eps n)}$. In Algorithm \ref{alg:pair_tester_local_properties}, the probability that $x$ and $y$ are queried in a single iteration of the loop is at least $\frac{1}{n} \cdot \frac{1}{\log (\eps n)} = \frac{1}{n \log (\eps n)}$. Each of the ten iterations in setting (ii) is independent, and so the probability that $(x,y)$ is \emph{not} queried in any of these iterations is bounded by $\left(1-\frac{1}{n \log (\eps n)}\right)^{10} < 1- \frac{8}{n \log (\eps n)}$. That is, $(x, y)$ is queried in at least one iteration with probability at least $\frac{8}{n \log (\eps n)}$. The lemma follows since loop iterations in both algorithms are independent.
\end{proof}

\subsection{Impossibility Result: No Tester for \texorpdfstring{$\omega(\eps)$}{epsilon} Erasure Rate}
\label{sec:impossibility_sortedness}
\begin{proof}[Proof of \Cref{thm:sortedness-budget-managing}, \Cref{item:impossibility-sortedness}  (impossibility)]
Let $\eps \in \big(0, \frac 13\big]$. Let $n \in \N$ be even. Using the same notation as in \cite{KalemajRV22}, we define the following distributions $\mathcal{D}^+$ and $\mathcal{D}^-$ over functions $f : [n] \to [n]$. This is the construction proposed and analyzed in \cite{KalemajRV22},
but we replaced $1/3$ with~$\eps.$
\paragraph{$\mathcal{D}^+$ distribution:} 
independently for all $i \in [n/2]$:
\begin{itemize}
    \item $f(2i-1) = f(2i) = 2i-1$ with probability $\eps$. (In this case, we call $(2i-1,2i)$ a ``low pair''.) 
    \item $f(2i-1)= f(2i) = 2i$ with probability $\eps$. (Here we call $(2i-1, 2i)$ a ``high pair.'')
    \item $f(2i-1) = 2i-1$ and $f(2i) = 2i$ with probability $1-2\eps$. (``Increasing pair.'')
\end{itemize}

\paragraph{$\mathcal{D}^-$ distribution:} 
independently for all $i \in [n/2]$:
\begin{itemize}
    \item $f(2i-1) = 2i$ and $f(2i) = 2i-1$  with probability $\eps$. (``Violating pair.'')
    \item $f(2i-1) = 2i-1$ and $f(2i) = 2i$ with probability $1-\eps$. (Increasing pair.)
\end{itemize}
All functions $f \in \mathcal{D}^+$ are monotone, whereas the expected distance of a function $f \in \mathcal{D}^-$ from monotonicity is $\eps$. 
It is straightforward to verify using Chernoff bound that the distance of $f \in \mathcal{D}^{-}$ to monotonicity is bigger than $\eps/2$ with probability $99/100$ as long as, say, $n > 20/\eps$; see the discussion of Yao's minimax principle after~\Cref{claim:low-degree-lb} and \cite[Corollary 9.4]{KalemajRV22} for more details.

Consider the following budget-managing adversary in the erasure model.\footnote{Our description of the adversarial strategy assumes that the adversary knows whether $f$ was generated from $\mathcal{D}^+$ or $\mathcal{D}^-$. However, this assumption is not really needed: the only function that can be generated with nonzero probability from both of these distributions is the identity function $f(x)=x$ for all $x$, and since the generation probability is exponentially small in $n$, this does not materially affect the analysis.}  
\begin{itemize}
    \item When $f \in \mathcal{D}^-$: If the tester queries an element of a violating pair, then the adversary erases the other element of this pair (i.e., the adversary erases $2i$ if $2i-1$ was queried,  and vice versa).
    If the tester queries an element of an increasing pair, then the adversary erases the other element with probability $\eps / (1-\eps)$.
    \item When $f \in \mathcal{D}^+$: If the tester queries an element of a low pair or a high pair, the adversary erases the other element. Otherwise (the increasing pair case), the adversary does not erase.
\end{itemize}

\paragraph{Indistinguishability {of} 
$\mathcal{D}^+$ and $\mathcal{D}^-$ under adversarial erasures.}
We claim the tester cannot distinguish 
these two distributions given the above adversarial strategy. Indeed, first observe that if an element $x$ was queried and its paired element was not erased, then they necessarily constitute an increasing pair. Thus, the only information that the adversary can extract by querying the element $x'$ paired to $x$ is the probability that $x'$ is subsequently erased after $x$ is queried, conditioned on the value of $f(x)$.

Suppose that, say, $x=2i-1$ (the case $x = 2i$ is symmetric). First, consider distribution $\mathcal{D}^+$.
If $f(x) = x = 2i-1$, then we are either at the ``low pair'' or ``increasing pair'' case, and the other element $x'$ is erased only in the low pair case. This happens with probability $$\frac{\eps}{\eps + (1-2\eps)} = \frac{\eps}{1-\eps}.$$
If $f(x) = 2i = x+1$ in this case, then we are in the ``high pair'' category, and $x'$ is erased with probability $1$.
For the $\mathcal{D}^-$ distribution, by definition of the adversary, it erases $x'$ if $f(x) = x$ with probability $\eps / (1-\eps)$, and with probability $1$ if $f(x)=x+1$. Thus, the probabilities are equal for both distributions.

\paragraph{Budget analysis.} It remains to verify that for an adversarial erasure rate of $C \cdot \eps$ for some large enough absolute constant $C$, the adversary's budget suffices to run the above strategy at all times. We do so using a combination of Markov inequality (for the earlier parts of the process) and Chernoff inequality (for the later parts).

For any round $i\in\N$ of the process, let $X_i$ denote the random variable indicating whether the tester queried an element from a previous unqueried non-increasing pair of $f$. Note that the probability of each $X_i$, even conditioned on all previous outcomes, is at most $2 \eps$. Denote $Y_\ell = \sum_{i=1}^{\ell} X_i$ and note that $\E[Y_\ell] \leq 2\eps \cdot \ell$.
Let $\ell_0 = 2/(\eps \cdot C)$ be a crude upper bound for the time it takes the adversary to add an erasure to its budget. By time $\ell_0 m$ the adversary has a budget of at least $m$ erasures.
Note that if $Y_{\ell_0 m} < m$ for all $m\in\N$ then the adversary has sufficient erasure budget at all times, as time steps in between follow automatically.
Thus, in any execution where the adversary fails, there must exist some $m\in\N$ for which $Y_{\ell_0 m} \geq m$. We denote these bad events by $Z_m$ (for any $m\in\N$), and bound the probability of any of these events ever happening.
We use Chernoff inequality with the upper bound on the expectation:
\[
    \Prob{}{Z_m}
    = \Prob{}{Y_{\ell_0 m} \geq m}
    = \Prob{}{Y_{\ell_0 m} \geq (1+\delta) \cdot 2\eps \ell_0 m}
    \leq \exp \left(- \frac{\delta^2 \cdot 2\eps \ell_0 m}{2+\delta} \right),
\]
where we set $C = 60$, and $\delta = 14$, so that $(1+\delta)\cdot 2\eps\ell_0 = 1$.
This implies $\eps\ell_0 = 2/C = 1/30$, and $2\delta^2/(2+\delta) = 24.5$, leading to
\[
    \Prob{}{Z_m}
    \leq e^{- 49m/30}
    < (1/5)^m .
\]
Finally, by a union bound, the erasure budget is insufficient with probability at most
\[
    \Prob{}{\cup_{m\in\N} Z_m} 
    \sum_{m=1}^{\infty} \Prob{}{Z_m}
    < \sum_{m=1}^{\infty} \frac{1}{5^m}
    = \frac{1}{4} .
\]
Thus, with probability more than $3/4$, the adversary can maintain its strategy indefinitely.
                  \end{proof}

\section*{Acknowledgement}
We thank Shachar Lovett for referring us to \cite{Ben-EliezerHL12,KeevashS05}, which led to the result in \Cref{sec:low-degree-lb}.

 \short{\bibliographystyle{plainurl}}{\bibliographystyle{alpha}}
\bibliography{references} 

\pagebreak
\appendix
\section{Observation about Low-Degree Testing with Large Batches}\label{sec:low-degree-testing-with-large-batches}
This section is devoted to the proof of the following observation on online testing of property $\cP_d$ (being a polynomial of degree at most $d$) with batches of size $b=2^{d+1}.$
\begin{observation}
    \label{obs:d_deg:one_cube}
    Fix $d,n\in\N$, $\eps \in (0, 1/2)$, and $t$ such that $t \leq \min\set{\eps^2 2^{2d}, \eta} \cdot 2^{n-d}/10^5$ for some constant $\eta$, and $d < n/3$. Then there exists an $\eps$-tester for property $\cP_d$ of functions $f:\{0,1\}^n\to \{0,1\}$ that works in the presence of a $(2^{d+1},t)$-online erasure budget-managing adversary and makes
    $O\left(\frac{1}{\eps} + 2^d\right)$ queries.
\end{observation}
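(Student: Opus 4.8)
The plan is to exploit the fact that, when the batch size $b$ equals $2^{d+1}$---the number of points queried by a single atomic test of Alon et al.~\cite{AlonKKLR05}---an entire atomic test is answered by one oracle, so the adversary never gets to erase data in the \emph{middle} of a test; the only thing that can go wrong is that a queried point was erased in an earlier batch. Concretely, the tester repeats, $r$ times, the \emph{affine} variant of the Alon et al.\ test, treating each atomic test as one batch: it samples $a_1,\dots,a_{d+1}\in\cube n$ and a translation $a_{d+2}\in\cube n$ uniformly at random, batch-queries the $2^{d+1}$ points $\{a_{d+2}+\sum_{i\in S}a_i : S\subseteq[d+1]\}$, and rejects if none of these points is erased and their $f$-values sum to $1$ over $\F_2$; if no batch triggers a rejection, it accepts. (Equivalently, this is the $X$-tester of Algorithm~\ref{alg:pattern-tester} for the affine-subspace pattern described after \Cref{claim:low_deg_characterization}, which has $\ell=2^{d+1}$ rows and is good for $\cP_d$.) I would use the affine rather than the linear variant precisely so that every one of the $2^{d+1}$ points queried in a batch is marginally uniform over $\cube n$; this is what lets us bound erasures. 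Completeness and one-sided error are immediate, since the pattern is complete for $\cP_d$.

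For soundness I would invoke the analysis of \cite{AlonKKLR05} together with its improvement by \cite{BKSSZ10}: a single atomic test rejects every function $\eps$-far from $\cP_d$ with probability at least $\alpha:=c_0\min\{1,\eps 2^d\}$ for an absolute constant $c_0>0$ (this is exactly the statement behind the $O(1/\eps+2^d)$ offline query bound quoted in \Cref{sec:intro-low-degree}). Setting $r=\lceil 3/\alpha\rceil$ gives total query complexity $r\cdot 2^{d+1}=O(1/\eps+2^d)$, matching the two regimes $\alpha=c_0\eps 2^d$ (when $\eps 2^d\le 1$, giving $r\cdot 2^{d+1}=O(1/\eps)$) and $\alpha=c_0$ (when $\eps 2^d>1$, giving $r\cdot 2^{d+1}=O(2^d)$).

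To handle erasures, fix any budget-managing adversarial strategy and an $\eps$-far input $f$; over the whole run the adversary performs at most $rt$ erasures. Consider iteration $j$ and condition on the history of the first $j-1$ batches. The oracle answering batch $j$ has at most $rt$ erased points, determined by that history and hence independent of the fresh randomness used in iteration $j$; since each of the $2^{d+1}$ points queried in iteration $j$ is uniform over $\cube n$, a union bound gives that batch $j$ hits an erased point with conditional probability at most $rt\cdot 2^{d+1}/2^n$. Also, the probability that the $2^{d+1}$ points of batch $j$ fail to form a witness for $f$ is at most $1-\alpha$ (this depends only on $f$ and iteration $j$'s randomness). Hence, conditioned on any history, batch $j$ fails to trigger a rejection with probability at most $(1-\alpha)+rt\cdot 2^{d+1}/2^n$, which is at most $1-\alpha/2$ provided $rt\cdot 2^{d+1}/2^n\le\alpha/2$. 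Plugging in $r=\lceil3/\alpha\rceil$ and splitting into the two cases for $\alpha$, this last inequality reduces to $t\le c_1\eps^2 2^{2d}\cdot 2^{n-d}$ when $\eps 2^d\le1$ and to $t\le c_2\cdot 2^{n-d}$ when $\eps 2^d>1$, for absolute constants $c_1,c_2>0$; both are implied by the hypothesis $t\le\min\{\eps^2 2^{2d},\eta\}\cdot 2^{n-d}/10^5$ for a suitable choice of the absolute constant $\eta$ (and $d<n/3$ ensures $2^{n-d}$ is large, so the hypothesis is not vacuous). Since the tester errs only if all $r$ batches fail to trigger a rejection, the error probability is at most $(1-\alpha/2)^r\le e^{-\alpha r/2}\le e^{-3/2}<1/3$. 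Resilience to corruptions, if wanted, follows from the same ``small probability of ever seeing a manipulated entry'' argument as in the proof of \Cref{thm:linearity_tester}, via \cite[Lemma 1.8]{KalemajRV22}.

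The argument is essentially bookkeeping, and the only mild obstacles are two small points: one must use the affine (not the linear) variant so that \emph{all} queried points are marginally uniform, and one must split the parameter analysis into the regimes $\eps 2^d\le 1$ and $\eps 2^d>1$ so that the erasure bound lines up with the stated constraint on $t$. Conceptually this is strictly easier than \Cref{thm:d_deg:chain_of_cubes}: because the atomic test already fits inside one batch, there is no need for a ``reserve'' of extra points, no chain structure, and no $y$-collision event to control.
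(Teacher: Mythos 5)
Your proposal is correct and follows essentially the same route as the paper's proof: run the affine Alon et al.\ test as a single batch, invoke the soundness bound of \cite{BKSSZ10} to get rejection probability $\Theta(\min\{1,\eps 2^d\})$ per iteration, bound the probability of ever querying an erased point by a union bound over the marginally-uniform batch points, and appeal to \cite[Lemma 1.8]{KalemajRV22} for corruptions. The only differences are cosmetic (your $r=\lceil 3/\alpha\rceil$ vs.\ the paper's $r=10/s$, and your explicit case split on $\eps 2^d$ vs.\ the paper's abstract $s$).
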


\begin{proof}
    With batch size $b = 2^{d+1}$, one could use a single batch to run the standard degree-$d$ test: choose a $(d+1)$-dim affine subspace uniformly at random, and reject if the restricted function is not of degree $d$ -- that is, the XOR of all returned values is $1$ (and in particular, none of them is erased).
    By \cite[Theorem~1]{BKSSZ10}, there exists a constant $\zeta$, such that any $\eps$-far function is rejected by the test with probability at least $s:= \min\set{2^{d+1} \eps, \zeta}$, while any degree-$d$ function is always accepted. We next consider $r = 10/s$ iterations of this atomic test, an analyze the probability of seeing an erasure.

    For completeness, note that erasures cannot cause the tester to reject, so a degree-$d$ function is always accepted. 

    For soundness, fix an $\eps$-far function $f$, and consider the iteration $i+1\in [r]$. At most $i\cdot t$ points were erased so far. Each queried point has a marginal distribution of a uniformly random point, with probability at most $i \cdot t / 2^{n}$ to be a previously erased point.
    Using a union bound over all $2^{d+1}$ points, the probability of seeing \emph{any} erasure during iteration $i$ is at most
    \[
        \frac{i t 2^{d+1}}{2^n}
        \leq \frac{10 t 2^{d+1}}{s\cdot 2^n}
        \leq \frac{s}{100} ,
    \]
    where the last inequality is due to the premise on $t$ (choosing $\eta = \zeta^2$ which is constant).
    
    Overall, the probability of the test choosing a pattern that would (wrongfully) pass the test is at most $1-s$, and the probability of seeing an erasure is at most $s/100$. By a union bound, the probability to accept $f$ at any iteration is at most $1-s/2$.
    Using $r = 10/s$ iterations, the probability to accept an $\eps$-far function is at most $1/3$.

    The query complexity is $2^{d+1} \cdot r = O\left(\max \set{\frac{1}{\eps}, 2^d}\right)$.

    Note that the probability of seeing \emph{any} erasure is at most $(s/100) \cdot (10/s) = 1/10$, which allows this tester to deal with corruptions as well, by \cite[Lemma 1.8]{KalemajRV22}.
\end{proof}

\section{Proof of Generalized Witness - \autoref{claim:low_deg_characterization}}\label{sec:proof_of_gen_witness}

\begin{proof}[Proof of \Cref{claim:low_deg_characterization}]
 For every $J\subseteq[n],$ we denote by $g_J\colon \F_2^n\to\F_2$ the function which consists only of one monomial corresponding to the variables in the set $J$, i.e., $g_J(x_1,\dots,x_n)=\prod_{j\in J}x_j$.  
Note that the class of degree $d$ functions, $\mathcal{P}_d$, is spanned by the degree at most $d$ monomials. $$\mathcal{P}_d=sp\{g_J\colon J\subseteq[n],|J|\leq d\}.$$ 
Let us first expand the testing pattern for the monomial functions.
Fix $J\subseteq [n]$  and $M\in \mathcal{M}_{m \times n}(\F_2)$, let $r=|J|$, then 
 \begin{align*}
          T_{X,g_J}(M)&=T_{X,g_J}(M_1,\dots,M_m)\\
          &=\sum_{i\in[\ell]} g_J((XM)_i)\\
          &=\sum_{i\in[\ell]} \prod_{j\in[r]}(XM)_{i,J_j}\\
          &=\sum_{i\in[\ell]} \prod_{j\in[r]}\sum_{k\in[m]} X_{i,k}M_{k,J_j}\\
          &=\sum_{i\in[\ell]} \bigg(\sum_{k\in[m]} X_{i,k}M_{k,J_1}\bigg)\bigg(\sum_{k\in[m]} X_{i,k}M_{k,J_2}\bigg)\dots \bigg(\sum_{k\in[m]} X_{i,k}M_{k,J_{r}}\bigg)\\
          &=\sum_{i\in[\ell]} \bigg(\sum_{k_1\in[m]} X_{i,k_1}M_{k_1,J_1}\bigg)\bigg(\sum_{k_2\in[m]} X_{i,k_2}M_{k_2,J_2}\bigg)\dots \bigg(\sum_{k_{r}\in[m]} X_{i,k_{r}}M_{k_{r},J_{r}}\bigg)\\
          &=\sum_{i\in[\ell]} \sum_{k_1\in[m]}\sum_{k_2\in[m]}\dots \sum_{k_{r}\in[m]} \bigg(X_{i,k_1}X_{i,k_2}\dots X_{i,k_{r}} \bigg)\bigg( M_{k_1,J_1}M_{k_2,J_2}\dots  M_{k_{r},J_{r}}\bigg)\\
          &=\sum_{i\in[\ell]} \sum_{k_1\in[m]}\sum_{k_2\in[m]}\dots \sum_{k_{r}\in[m]} \prod_{a\in[r]}X_{i,k_a} \prod_{b\in[r]}M_{k_b,J_b}\\
          &=\sum_{k_1\in[m]}\sum_{k_2\in[m]}\dots \sum_{k_{r}\in[m]}  \prod_{b\in[r]}M_{k_b,J_b} \sum_{i\in[\ell]}\prod_{a\in[r]}X_{i,k_a}.
      \end{align*}

      Denote $$\varphi_S(X)=\sum_{i\in[\ell]} \prod_{j\in S} X_{ij}.$$

      Then we have 
      \begin{equation}\label{eq:monomial_testing_expansion}
          T_{X,g_J}(M)=\sum_{k_1\in[m]}\sum_{k_2\in[m]}\dots \sum_{k_{r}\in[m]}  \prod_{b\in[r]}M_{k_b,J_b} \varphi_{\set{k_1,k_2,\dots, k_r}}(X).
      \end{equation}

      Let $X$ be a complete pattern for $\mathcal{P}_d$ and fix $S\subseteq [m]$ with $r=|S|\leq d$ we want to show that $\varphi_S(X)=0$. Let $M'\in \set{0,1}^{m\times n}$ be defined as $M'_{i,j}=\mathrm{1}_{i=j}.$ Since $g_S\in\mathcal{P}_d$ by the completeness of $X$ we have $T_{X,g_S}(M')=0$ which by \eqref{eq:monomial_testing_expansion} equals 
      $$0=T_{X,g_S}(M')=\sum_{k_1\in[m]}\sum_{k_2\in[m]}\dots \sum_{k_{r}\in[m]}  \prod_{b\in[r]}M'_{k_b,S_b} \varphi_{\set{k_1,k_2,\dots, k_r}}(X)=\varphi_S(X),$$
where the last equality follows from the definition of $M'$. 

If, in addition, $X$ is minimally sound for $\mathcal{P}_d$ as in definition \ref{def:pattern_completeness_minimaly_sound}, let $J=\set{1,2,\dots,d+1}$ and note that $g_J\not\in\cP_d$ then there exists $M'\in\mathcal{M}_{m \times n}(\F_2)$ such that $T_{X,g_J}(M')=1$, by \eqref{eq:monomial_testing_expansion} there must exist a setting of $(k'_1,k'_2,\dots,k'_{d+1})\in[m]^{d+1} $ such that $$\prod_{b\in[d+1]}M'_{k'_b,J_b} \varphi_{\set{k'_1,k'_2,\dots, k'_{d+1}}}(X)=1,$$
which implies $\varphi_{\set{k'_1,k'_2,\dots, k'_{d+1}}}(X)=1$. Now, by condition~\eqref{itm:completeness_property}  $k'_1,k'_2,\dots,k'_{d+1}$ must be distinct. Thus we prove condition~\eqref{itm:soundness_property} for $S=\set{k'_1,k'_2,\dots, k'_{d+1}}$ of size $d+1$.  

For the other direction, assume  $X$ satisfies condition~\eqref{itm:completeness_property}  we will show that  $X$ is complete for $\cP_d$. Since $\cP_d$ is spanned by the set of monomials of size at most $d$, it is enough to show that $X$ is complete for all monomials $g_J$ where $J\subseteq [n]$, $|J|\leq d$. 
      Fix $J\subseteq [n]$ such that $r=|J|\leq d$ and $M \in \set{0,1}^{m\times n}$. We apply property \eqref{itm:completeness_property} to show that every summand in \eqref{eq:monomial_testing_expansion} is $0$, where we note that the sets $k_1,k_2,\dots, k_r$ contains at most $r\leq d$ elements, thus $T_{X,g_J}(M)=0$.

      If, in addition, $X$ satisfies condition~\eqref{itm:soundness_property} then let $S^\star=\set{s_1,s_2,\dots,s_{d+1}}$ be such that $\varphi_{S^\star}(X)=1$, fix $f\not\in\cP_d$ and we can write $f=\sum_{J\subseteq[n]}\alpha_J g_J$, let $r^\star>d$ be the minimal size of monomial with $|J|>d$ and  $\alpha_J=1$ and let $J^\star=\set{j_1,j_2,\dots, j_{r^\star}}$ be such a monomial. Then for $M^\star\in\set{0,1}^{m\times n}$ defined as following for  each  $j_i\in J^\star$, if $i\leq d+1$ then $M^\star_{s_i,j_i}=1$, if $i>d+1$ then  $M^\star_{s_{d+1},j_i}=1$  and $0$ in all other entries.
      
    \begin{align*}
        T_{X,f}(M^\star)& =\sum_{i\in[\ell]}f((XM^\star)_i)\\ 
        & =\sum_{i\in[\ell]}\sum_{J\subseteq[n]}\alpha_Jg_J((XM^\star)_i)\\
        &=\sum_{J\subseteq[n]}\alpha_J T_{X,g_J}(M^\star)\\
        &=\sum_{J\subseteq[n]\colon |J|\geq r^\star}\alpha_J T_{X,g_J}(M^\star).
    \end{align*}
    Where the last equality is by the completeness and minimality of $r^\star$.
    Now for $J^\star$ we have $T_{X,g_{J^\star}}(M^\star)=\varphi_{S^\star}(X)=1$ and for every other $J\neq J^\star$ we have $|J|\geq r^\star$ and therefore there is at least one $j\in J\setminus J^\star$ for which the $j$'th column of $M^\star$ is all zero and thus by \eqref{eq:monomial_testing_expansion}
 we have $T_{X,g_J}(M^\star)=0$.
 This implies 
 $$T_{X,f}(M^\star)=T_{X,g_{J^\star}}(M^\star)=1.$$

\end{proof}

\end{document}